
\documentclass[a4paper,11pt]{amsart}
%%%%%%%%%%%%%%%%%%%%%%%%%%%%%%%%%%%%%%%%%%%%%%%%%%%%%%%%%%%%%%%%%%%%%%%%%%%%%%%%%%%%%%%%%%%%%%%%%%%%%%%%%%%%%%%%%%%%%%%%%%%%%%%%%%%%%%%%%%%%%%%%%%%%%%%%%%%%%%%%%%%%%%%%%%%%%%%%%%%%%%%%%%%%%%%%%%%%%%%%%%%%%%%%%%%%%%%%%%%%%%%%%%%%%%%%%%%%%%%%%%%%%%%%%%%%
\usepackage{latexsym}
\usepackage{amssymb}
\usepackage{amsmath}
\usepackage{amsthm}
\usepackage{amsfonts}
\usepackage{amssymb}
\usepackage{booktabs}
\usepackage{graphicx}
\usepackage{epic, xypic}
\usepackage{float}
\restylefloat{table}

\setcounter{MaxMatrixCols}{10}

\setlength{\textheight}{27pc} \oddsidemargin-0.5truecm
\evensidemargin-0.5truecm \textwidth17truecm \textheight22truecm
\topmargin-.1truecm
\numberwithin{equation}{section}

\newcommand{\be}{\begin{equation}}
\newcommand{\ee}{\end{equation}}
\newcommand{\bes}{\begin{equation*}}
\newcommand{\ees}{\end{equation*}}
\newcommand{\eqn}{\begin{eqnarray}}
\newcommand{\feqn}{\end{eqnarray}}
\newcommand{\eqnn}{\begin{eqnarray*}}
\newcommand{\feqnn}{\end{eqnarray*}}

\newtheorem{prop}{Proposition}

\begin{document}

\begin{titlepage}
%\vspace*{-1cm}

     \thispagestyle{empty}
    \begin{flushright}
        \hfill {CERN-PH-TH/2014-026} \\

    \end{flushright}

    %\vspace{10pt}

    \begin{center}

         {\huge{\textbf{Iwasawa nilpotency degree of non compact symmetric cosets in $\mathbf{\mathcal{N}}\!-$extended Supergravity\\}}}
         \vspace{20pt}
        %\vspace{25pt}

{\Large{{\bf Sergio L. Cacciatori$^{1}$}, {\bf Bianca L. Cerchiai$^2$}, \\{\bf Sergio Ferrara$^{3}$}, {\bf Alessio Marrani$^{4}$}}}

        \vspace{5pt}

         {$^1$ Dipartimento di Scienze ed Alta Tecnologia,\\Universit\`{a} degli Studi dell'Insubria, Via Valleggio 11, 22100 Como, Italy\\
         and INFN, Sezione di Milano, Via Celoria 16, 20133 Milano, Italy\\
        \texttt{sergio.cacciatori@uninsubria.it}}

         \vspace{5pt}

         {$^2$ Dipartimento di Matematica,\\Universit\`{a} degli Studi di Milano, Via Saldini 50, 20133 Milano, Italy\\
         and INFN, Sezione di Milano, Via Celoria 16, 20133 Milano, Italy\\
        \texttt{bianca.letizia.cerchiai@cern.ch}}

         \vspace{5pt}

        {$^3$ Physics Department, Theory Unit, CERN,\\CH 1211, Geneva 23, Switzerland\\
and INFN - Laboratori Nazionali di Frascati,\\Via Enrico Fermi
40, I-00044 Frascati, Italy\\
and Department of Physics and Astronomy,\\ University of California, Los Angeles, CA 90095-1547, USA\\
        \texttt{sergio.ferrara@cern.ch}}

         \vspace{5pt}

{$^4$ Instituut voor Theoretische Fysica, KU Leuven,\\
Celestijnenlaan 200D, B-3001 Leuven, Belgium\\
\texttt{alessio.marrani@fys.kuleuven.be}}

 \vspace{15pt}

\end{center}

%\vspace{10pt}

%\begin{abstract}

We analyze the polynomial part of the Iwasawa realization of the coset representative of non compact symmetric Riemannian spaces. We start by
studying the role of Kostant's principal $SU(2)_{P}$ subalgebra of simple Lie algebras, and how it determines the structure of the nilpotent
subalgebras. This allows us to compute the maximal degree of the polynomials for all faithful representations of Lie algebras. In particular the metric coefficients are related to the scalar kinetic terms while the representation of electric and magnetic charges is related to the coupling of scalars to vector field strengths as they appear in the Lagrangian.

We consider symmetric scalar manifolds in $\mathcal{N}\!-$extended supergravity in various space-time
dimensions, elucidating various relations with the underlying Jordan algebras and normed Hurwitz algebras. For magic supergravity theories, our
results are consistent with the Tits-Satake projection of symmetric spaces
and the nilpotency degree turns out to depend only on the space-time dimension of the theory.

These results should be helpful within a deeper investigation of the corresponding
supergravity theory, \textit{e.g.} in studying ultraviolet properties of maximal supergravity in various dimensions.

%\end{abstract}

\end{titlepage}
\newpage

\tableofcontents

%%%%%%%%%%%%%%%%%%%%%%%%%%%%%%%%%%%%%%%%%%%%%%%%%%%%%%%%%%%%%%%%%%%%%%

\section{\label{Intro}Introduction}

In the present paper we analyze the polynomial part of the Iwasawa realization of the coset representative of non compact symmetric Riemannian spaces.

Such non compact forms of the Lie groups are relevant for supergravity \cite{CFS, CJ, Julia, GST, CVP, dWVVP, Luciani, ADFT-1}. Here the scalar sector of the theory is described by a non linear sigma model based on a non compact symmetric space with negative curvature $G/K$, where $G$ is non compact and $K$ is its maximal compact subgroup. In this framework $G$ is the continuos group of $U$-duality transformations, called electric magnetic dualities in $D=4$.
Exceptions to this description are $\mathcal{N}=1$ and $2$ theories, where the sigma model doesn't need to be a symmetric space but it has a restricted holonomy (Hodge-K\"ahler for $\mathcal{N}=1$, special K\"ahler and Quaternionic for vector and hypermultiplets in $\mathcal{N}=2$).

Moreover, non compact homogeneous spaces and their non compact duality group $G$ appear in the description of the black hole orbits \cite{FerMal, FerGun, LPS, BFGM} as well as duality invariant Bekenstein-Hawking entropy formula \cite{KK, Stro, ADF, CFMZ-1} according to the attractor mechanism \cite{FKS, FK, ADFT-2}. The study of the attractor mechanism in string theory has been pioneered in \cite{OSV}.

In the past few years there has been a strong development in 4-dimensional $\mathcal{N} =8$ and $\mathcal{N} =4$ supergravity theories (SUGRA) because of their unexpected remarkable ultraviolet behaviour. It has been discovered that $d=4$ $\mathcal{N} =8$ SUGRA is finite in the ultraviolet up to four loops \cite{Bern-1}, while the pure $\mathcal{N}=4$ theory is finite up to 3 loops \cite{Bern-2}, which is not a priori guaranteed by supersymmetry, where a valid counterterm could in principle exist. $\mathcal{N}=4$ SUGRA, however, is already divergent at four loops~\cite{Bern-3}, and when coupled with matter already at one loop~\cite{Fischler, Bern-4}.

Perturbative finiteness of $d=4$ $\mathcal{N} =8$ SUGRA is possible only if its $E_{7(7)}$ symmetry is anomaly-free \cite{Marcus, BHN, Kallosh, Beisert}. Even when the symmetry is anomalous, as in pure $\mathcal{N}=4$ SUGRA, its Ward identities can restrict counterterms \cite{BN, BHS-1, BHS-2}. The latter can remain invariant under the Borel parabolic subgroup \cite{BHS-2} of the original
non compact duality group.  At the quantum level the group is expected to be broken to one of its discrete subgroups. An example of this phenomenon is the Dirac-Schwinger-Zwanziger quantization of black hole states. 

The Iwasawa parametrization explictly constructed at the group level in \cite{CCM} provides a realization of the coset based on a nilpotent subalgebra. As the coset representative directly enters the Lagrangian of the corresponding supergravity theory, it naturally delivers a choice of the fields for which they appear polynomially in the Lagrangian, making the calculations significantly more manageable. This is relevant for the study of the ultraviolet properties of maximal supergravities in various dimensions.

Our analysis of the structure of the nilpotent subalgebra is based on the work of
Kostant \cite{Kostant-1} who has introduced the concept of principal $SU(2)$ subalgebra (principal triple). He has shown how such principal $SU(2)$ characterizes the nilpotent subalgebras of maximal degree of the enveloping algebra in the adjoint representation.  We generalize his results for the adjoint to all the faithful linear representations.

The principal triple is related to the appearance of $W$-algebras, which are relevant as symmetries for integrable Toda systems (for a review see e.g. \cite{per}) and higher spins \cite{CFPT}.
In particular, the $W$-algebra related to the principal SL(2) is finite and Abelian and the corresponding Toda system is called Abelian.
In the framework of AdS higher spins, the choice of an embedding of $SL(2)$ determines the asymptotical symmetry on the boundary, hence fixing the theory.
It turns out \cite{CHL} that only the $W$-algebras constructed using the principal embedding could admit a unitary representation for large values of
the central charge. 

We have been able to determine the degree of the polynomials in the nilpotent part of the Iwasawa decomposition for the faithful representations of all the non compact symmetric spaces. In particular we have applied our results to the symmetric scalar manifolds in Maxwell-Einstein theories of (super)gravity in various space-time dimensions for various signatures. We also determine the degree of the polynomials occurring in the biinvariant metric of the coset. We find that in the case of magic $\mathcal{N}=2$ supergravity the degree is independent of the particular representation appearing in the corresponding magic square, but depends only on the space-time dimension. On the other hand, in the case of theories associated to split normed algebras, like e.g. maximal supergravity, we discover an intriguing connection with the $1$-form potential representations of the electric-magnetic ($U$-)duality group.
\vspace{1em}

The plan of the paper is as follows.

In Sec.~\ref{sec:nil} we introduce the general setup, and define the Iwasawa nilpotency degree in a given representation of a Lie group. Then, in Secs.~\ref{sec:split} and \ref{sec:non-split} the nilpotency degree is computed for the maximally non compact (\textit{split}) form and for any other real
form, respectively. Some examples are considered in Sec.~\ref{sec:rep-dep}. The degree of the polynomials occurring in the biinvariant metric of the
coset is then computed in Sec.~\ref{sec:metric}.

Sec.~\ref{sec:summary} presents three Tables summarizing the reasonings and the main results of this investigation.

As an application, in Sec.~\ref{sec:Supergravity} we consider the construction of the coset representative of non compact Riemannian symmetric
manifolds, namely of symmetric scalar manifolds occurring in Maxwell-Einstein theories of gravity in various Lorentzian space-dimensions,
possibly endowed with local supersymmetry. Our results are presented in a number of Tables. As respectively analyzed in Secs.~\ref{q-indep} and
\ref{q-dep}, for magic supergravity theories our results are consistent with the Tits-Satake projection of symmetric spaces, whereas for theories related to
split normed algebras (such as maximal supergravity), we find intriguing connections with the $1$-form potential representations of the
electric-magnetic ($U$-)duality group. In Sec.~\ref{sec:Universal} we also comment on the relation between our results and the axionic $U$-duality
generators related to five dimensions, exhibiting the universal degree of nilpotency $4$.

Five Appendices conclude the paper. In Apps.~\ref{App-Racah} and \ref{app:semispin}, some basic facts on the Racah-Casimir polynomials of a Lie
algebra $\mathfrak{g}$ and on the semispin groups are recalled. Then, in Apps.~\ref{app:Inverse-Cartan}, \ref{app:Dynkin-diagrams} and \ref{app:Satake-Type} the inverse Cartan matrices, the Dynkin diagrams and the Satake-type vectors of simple Lie algebras are reported.

%%%%%%%%%%%%%%%%%%%%%%%%%%%%%%%%%%%%%%%%%%%%%%%%%%%%%%%%%%%%%%%%%%%%%%%%%%%%%%%%%%%%%%%%%%%%%%%%%

\section{\label{sec:nil}Iwasawa Nilpotency Degrees in a given Representation}

Let us consider a \textit{simple} Lie group $G$ of rank $l$. The corresponding root lattice $\Lambda _{R}$ is generated by $l$ simple roots
$\alpha _{1},\ldots ,\alpha _{l}$, and it is a sub-lattice\footnote{In general, $\Lambda _{R}$ is a proper sub-lattice of $\Lambda _{W}$,
satisfying $\Lambda _{W}/\Lambda _{R}\simeq Z$, where $Z$ is the center of the covering group of $G$. For center-free groups, such as $G_{2}$, $F_{4}$,
$E_{8}$, the two lattices $\Lambda _{R}$ and $\Lambda _{W}$ do coincide.} of the weight lattice $\Lambda _{W}$, which is the integer lattice generated by
the fundamental weights $\mu ^{1},\ldots ,\mu ^{l}$. The simple roots define a real space $\mathbb{R}^{l}=\Lambda _{R}\otimes \mathbb{R}$, naturally
endowed with a positive definite scalar product $(|)$, inherited from the Killing form of the Lie algebra $\mathfrak{g}$ of $G$. The fundamental
weights are univocally related to the simple roots by
\begin{equation}
\langle \mu ^{i}|\alpha _{j}\rangle :=2\frac{(\mu ^{i}|\alpha _{j})}{(\alpha _{j}|\alpha _{j})}=\delta _{j}^{i}.
\end{equation}

The set of dominant weights is defined as the intersection between the weight lattice and the closure of the convex cone generated by the
fundamental weights
\begin{equation}
\Lambda _{W}^{+}:=\{m_{1}\mu^{1}+\ldots +m_{l}\mu^{l}~|~m_{i}\in \mathbb{N\cup }\left\{ 0\right\} ~\}.
\end{equation}
The dominant weights are in one-to-one correspondence with the irreducible representations (irreps.) of $\mathfrak{g}$, since each dominant weight is
the maximal weight of an (unique up to isomorphisms) irrep., and all irreps. are of this kind. In particular, all irreps. can be obtained by $G$
-covariantly branching the tensor products of the representations associated to the fundamental weights (thus explaining the name \textit{fundamental}
\footnote{Often, in the physical literature the name \textquotedblleft fundamental representation\textquotedblright\ is reserved to the non-trivial, smallest
(fundamental) irrep. For example, for the $\mathfrak{su}(N)$ algebras, all fundamental representations, and then all representations, can be obtained
from the external powers of the smallest one: $V(\mu^{k})=\wedge ^{k}V(\mu^{1})$.
\par
This can justify the physical notation. However this is not true for all cases. For example, for $\mathfrak{so}(2m-1)$ all but one fundamental
irreps. can be obtained from the smallest one, and all but two can be obtained for $\mathfrak{so}(2m)$. Indeed, the spinor representations must be
constructed in an independent way, and in this sense are not \textquotedblleft less fundamental" than the smallest one.}). The
representations associated to the fundamental weights $\mu _{i}$ are called
fundamental representations $V(\mu _{i})$ ($i=1,\ldots ,l$); they are reported in App. \ref{app:Dynkin-diagrams} for all simple Lie groups.

Now, let us consider an irrep. $V(\mu_{M})$ of $\mathfrak{g}$ associated to any maximal weight $\mu _{M}=m_{1}\mu^{1}+\ldots +m_{l}\mu^{l}$. In such a
representation, one can fix a basis of matrices $h_{i}$ for the (image under the representation map of a) Cartan subalgebra of $\mathfrak{g}$. To any
simple root $\alpha _{j}$ ($j=1,...,l$) it corresponds a matrix $\lambda_{\alpha _{j}}\in End(V(\mu_{M}))$, the so-called \textit{root} matrix,
which behaves as the raising operator by $\alpha _{j}$: if $v$ is a weight vector with weight $\mu$, then either $\lambda_{\alpha _{j}} v$ is zero or
it is a weight vector with weight $\mu+\alpha_{j}$.\newline It is clear from this, that each root matrix is nilpotent of some order, and
that, in particular, the linear span $\mathfrak{n}$ of positive root matrices (that are root matrices associated to positive roots) is made of nilpotent elements.

We are interested in determining the degree of nilpotency of the generic element in $\mathfrak{n}$ or in a suitable proper subspace, related to certain Iwasawa
symmetric constructions, in any given representation. Let us first discuss the case of the symmetric spaces associated to the split real forms.

%%%%%%%%%%%%%%%%%%%%%%%%%%%%%%%%%%%%%%%%%%%%%%%%%%%%%%%%%%%%%%%%%%%%%%%%%%%%%%%%%%%%%%%%%%%%%%%%%%%%%

\subsection{\label{sec:split}Iwasawa Polynomials for the Split Real Form}

The maximally non compact \ (\textit{split}) form of a simple Lie group $G$ (of rank $l$) is the unique real form having $l$ non compact Cartan
generators, spanning the Cartan subalgebra $\mathfrak{C}$. Its \textit{maximal compact subgroup} ($mcs$) $H$ has dimension \mbox{$h=($dim$(G)-l)/2$}, and
it is the smallest maximal subgroup symmetrically embedded in $G$.

Let us consider the Iwasawa construction \cite{Helgason} of the non compact, irreducible, Riemannian, (globally) \textit{symmetric}
space\footnote{For a review on irreducible, Riemannian, globally symmetric (IRGS) spaces in supergravity, see \textit{e.g.} \cite{LA08}.}
\begin{equation}
M:=\frac{G}{H},  \label{M}
\end{equation}
which has rank $\mathbf{r}=l$, and (real) dimension $($dim$(G)+l)/2$. It is here worth remarking that, even though in the split cases the rank $\mathbf{r}$ of the quotient is the same as the rank $l$ of the group, we will use $\mathbf{r}$ in place of $l$ wherever the
rank of the manifold is relevant. This is in order to avoid confusion, and
it will make the generalization beyond the split case clearer.

Let us work with any irreducible representation $V(\mu _{M})$. Let $c_{1},\ldots ,c_{\mathbf{r}}$ be a basis of non compact generators for the Cartan
subalgebra $\mathfrak{C}$, and let us fix any choice of \textit{positive} roots with respect to $\mathfrak{C}$. These are exactly $h$, denoted by
$\alpha _{a}$, $a=1,\ldots ,h$, with $\lambda _{\alpha _{a}}$ being the corresponding eigenmatrices (named root matrices in the treatment above).
Now, the coset representative $\mathcal{M}$ of the symmetric space $M$ (\ref{M}) can be parameterized in terms of the parameters $y_{1},\ldots ,y_{\mathbf{r}}$
(coordinates of the maximal non--compact Cartan subalgebra $\mathfrak{C}$) and $x_{1},\ldots ,x_{h}$ (coordinates pertaining to the $h$ nilpotent
eigenmatrices $\lambda _{\alpha _{a}}$'s) as follows\footnote{This (total) Iwasawa parametrization is usually named
\textit{\textquotedblleft standard"} parametrization; it is used \textit{e.g.} in \cite{Magic-Coset-Decomp}.}:
\begin{equation}
\mathcal{M}[\vec{y},\vec{x}]=\exp \left( \sum_{i=1}^{\mathbf{r}}y^{i}c_{i}\right) \exp \left( \sum_{a=1}^{h}x^{a}\lambda _{\alpha _{a}}\right) =\left(
\prod\limits_{i=1}^{\mathbf{r}}\exp \left( y^{i}c_{i}\right) \right) \exp \left(\sum_{a=1}^{h}x^{a}\lambda _{\alpha _{a}}\right) .  \label{SIC-M}
\end{equation}
The generators $c_{i}$ can be chosen so that $\exp (iy^{j}c_{j})$ is periodic. Note that the number $\mathbf{I}=h$ of Iwasawa (nilpotent)
generators of the NISS $M$ (\ref{M}) generally satisfies $\dim _{\mathbb{R}}(M)=\mathbf{r}+\mathbf{I}$ (\textit{cfr.} Table~\ref{tab:fourierFg3}), but in this
very case also satisfies $\mathbf{I}=(\dim _{\mathbb{R}}(G)-l)/2$. The first factor on the r.h.s. of (\ref{SIC-M}) is the product of $\mathbf{r}$ exponentials of
the $y_{i}$-coordinates, whereas, by virtue of the nilpotency properties of $\mathfrak{n}$ discussed above, the second factor is a polynomial in the $x^{a}$'s, of
which we want to determine the degree.\newline
In order to achieve this, let us consider the Taylor expansion:
\begin{equation}
\exp \left( \sum_{a=1}^{h}x^{a}\lambda _{\alpha _{a}}\right)=\sum_{n=0}^{\infty }\frac{1}{n!}\left( \sum_{a=1}^{h}x^{a}\lambda _{\alpha_{a}}\right) ^{n},
\end{equation}
in which the terms of degree $n$ are given by
\begin{equation}
M_{n}:=\frac{1}{n!}\sum_{d_{1}+\ldots +d_{h}=n,~d_{i}\geq 0}\frac{(x^{1})^{d_{1}}}{d_{1}!}\cdots \frac{(x^{h})^{d_{h}}}{d_{h}!}\sum_{\sigma
}\lambda _{\sigma _{1}}\cdots \lambda _{\sigma _{n}},  \label{poly}
\end{equation}
where $\sigma $ are the permutations of $\{1^{\times d_{1}},\ldots,h^{\times d_{h}}\}\in {\mathbb{N}}^{n}$.\newline
Our problem is to determine the largest $n$ such that $M_{n}\neq 0$, whereas $M_{n+1}=0$. The answer for the case of the adjoint representation can be
found in \cite{Kostant-1}, proposition 5.4:\newline
Let $\alpha _{1},\ldots ,\alpha _{l}$ be the simple roots. All other positive roots are expressed in terms of these as linear combinations with
non-negative integer coefficients. In particular, there exists a longest root, whose coefficient are all maximal:
\begin{equation}
\alpha _{L}=n_{1}\alpha _{1}+\ldots +n_{l}\alpha _{l}.  \label{longest}
\end{equation}
Set
\begin{equation}
\mathbf{q}:=n_{1}+\ldots +n_{l}.  \label{d_S}
\end{equation}
Then, the proposition of Kostant states that the maximal nilpotency of the elements of $\mathfrak{n}$ is $2\mathbf{q}+1$, or equivalently, the maximal polynomial
degree is $2\mathbf{q}$. In particular, this is reached iff all the coefficients of the simple roots are non vanishing.

Indeed, for all simple groups, $\mathbf{q}$ is given by
\begin{equation}
\mathbf{q}=j_{l}=C_{G}-1,
\end{equation}
where $C_{G}$ is the \textit{Coxeter number} of $G$, and $j_{l}$ is the \textit{maximal} spin of the $\mathfrak{su}(2)_{P}$-irreps. into which the
adjoint irrep. $\mathbf{Adj}$ of $G$ branches under the maximal embedding
\begin{eqnarray}
G &\supset &SU(2)_{P};  \label{pre-emb-1} \\
\mathbf{Adj}\left( G\right) &=&\sum_{A=1}^{l}\left( \mathbf{2j}_{A}+\mathbf{1}\right) =\sum_{A=1}^{l}\mathbf{S}_{j_{A}},~~~j_{1}=1<....<j_{l}.
\label{embb-1}
\end{eqnarray}
$\mathbf{S}_{j_{A}}$ denotes the $\mathfrak{su}(2)_{P}$-irrep. of spin $s=j_{A}$ of the unique \textit{principal} $SU(2)_{P}$ embedded in $G$
(\textit{cfr.} App. \ref{App-Racah}).\newline
Let us now show how the Kostant proposition works and how it can be generalized to any other representation. A generic element $x\in \mathfrak{n}$ will take the form
\begin{equation*}
x=x^{1}\lambda _{1}+\ldots +x^{h}\lambda _{h},
\end{equation*}
where $\lambda _{i}$, $i=1,\ldots ,h$ are all the positive root matrices. With \textit{generic} we mean that the coefficients corresponding to the
simple roots are all non vanishing. Our aim is to compute the maximal power $d$ such that $x^{d}\neq 0$ but $x^{d+1}=0.$ To do this we can compute the
matrix elements of $x^{d}$ w.r.t. some basis of the representation vector space $V$. A particularly convenient basis is the one given by a choice of
weight vectors. If $v\in V_{\mu }$ corresponds to the weight $\mu $, then
\begin{equation*}
xv\in V_{\mu +\alpha _{1}}+\ldots +V_{\mu +\alpha _{h}},
\end{equation*}
where $V_{\mu +\alpha _{i}}=0$ if $\mu +\alpha _{i}$ is not a weight of the given representation. It is then clear that by repeatedly acting with $x$ one gets
\begin{equation*}
x^{k}v\in \bigoplus_{i_{1},\ldots ,i_{k}}V_{\mu +\alpha _{i_{1}}+\ldots +\alpha _{i_{k}}}.
\end{equation*}
Since the representation is characterized by a highest weight $\mu _{M}$, for $k$ large enough, all $V_{\mu +\alpha _{i_{1}}+\ldots \alpha _{i_{k}}}$
will vanish. The matrix corresponding to $x^{k}$ vanishes if $x^{k}v=0$ for any weight vector $v$. Now, among all the weights there exists also a lowest
weight $\mu _{m}$, with the property that $\mu _{m}-\alpha _{i}$ is not a weight if $\alpha _{i}$ is a positive root, and the corresponding eigenspace
is one dimensional. Let $v_{m}$ be the corresponding eigenvector in the fixed basis. It follows immediately from the properties of the weight system
that if $x^{k}v_{m}=0$ then $x^{k}v=0$ for any other weight vector $v$. Thus we can work only on $v_{m}$ to check nilpotency. \newline
Now, the action of $x$ on $v_{m}$ is a combination of shifting of $v_{m}$ by positive roots (in the sense discussed above). So, we can look at the
repeated multiplication by $x$ as a combination of paths in the set of weights, starting by $v_{m}$ and adding a positive root at each step (in all
different possible ways). A path will stop when the addition of a further positive root will bring it out of the the set of weights. This is
equivalent to saying that all the paths will stop at the highest weight. Note that at a given step $x^{k}$, it may happen that one or more paths, but not
all, stop. We call $k$ the length of the path. The corresponding term will be given by\footnote{$v_{M}$ is the eigenvector in the basis corresponding
to the maximal weight $\mu _{M}$.} $v_{M}$ with coefficient a monomial of degree $k$ in the $x^{i}$. At the next step the corresponding term will be
sent to zero, but $x^{k+1}$ will not vanish. It follows that the degree $d$ we are looking for corresponds to the step $d$ for which all the surviving
paths (eventually only one) reach the highest weight. Thus, $d$ is the maximal possible length of a path from the lowest weight to the highest one.
\newline
Since a positive root is a combination of simple roots with integer non negative coefficients, any maximal path can be obtained by using only simple
roots to realize each step. Thus, we immediately arrive to the following conclusion: setting
\begin{equation}
\mu _{M}-\mu _{m}=\sum_{i=1}^{l}q_{i}\alpha _{i}, \label{maxmenmin}
\end{equation}
where $\alpha _{i}$, $i=1,\ldots ,l$ are the simple roots, and $q_{i}$ are positive integer coefficients, the length of a maximal path is
\begin{equation}
d=\sum_{i=1}^{l}q_{i}.
\end{equation}
%%%%%%%%%%%%%%%%%%%%%%%%%%%%%%%%%%%%%%%%%%%%%%%%%%%%%%%%%%%%%%%%%%%%%%%%%%%%%%%%%%%%%%

\subsubsection{Computation of the Degree $d$}

We now state the following
\begin{prop}
Let $\Lambda$ a finite dimensional complex simple Lie algebra of rank $l$, and $\alpha_i$, $i=1,\ldots,l$ a choice of simple roots. Let $(\rho, V)$ the
irreducible representation of $\Lambda$ associated to the highest weight $\mu_M$. Set
\begin{eqnarray}
\mu_M=\sum_{i=1}^l r_i \alpha_i.  \label{maximal}
\end{eqnarray}
Then
\begin{eqnarray}
s=\sum_{i=1}^l r_i \in \mathbb{Z}/2.
\end{eqnarray}
Moreover, assume $x=\sum_{i=1}^l x^i \alpha_i$ to be generic. Then $\exp x$ is a polynomial of degree $d=2s$ in the $x^i$.
\end{prop}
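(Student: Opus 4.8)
The plan is to reduce the statement to the path-counting result already obtained just above, and then to carry out a short computation with the longest element of the Weyl group. Recall that the preceding discussion identifies, for generic $x$, the degree of $\exp x$ with the maximal length of a path in the weight diagram that starts at the lowest weight $\mu_m$, ends at the highest weight $\mu_M$, and adds a single simple root at each step; that maximal length equals $\sum_{i=1}^{l}q_i$, where $\mu_M-\mu_m=\sum_i q_i\alpha_i$. Hence everything follows once I show that $\sum_i q_i=2\sum_i r_i$ and that this common value is an integer.

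To organize the bookkeeping I would introduce the linear height functional $\mathrm{ht}$ on $\mathbb{R}^{l}=\Lambda_{R}\otimes\mathbb{R}$ fixed by $\mathrm{ht}(\alpha_i)=1$, so that $s=\mathrm{ht}(\mu_M)=\sum_i r_i$ and $d=\mathrm{ht}(\mu_M-\mu_m)$. I would then record the standard facts that the lowest weight is $\mu_m=w_0\,\mu_M$, with $w_0$ the longest element of the Weyl group, and that $-w_0$ acts on the simple roots as a permutation $\sigma$ (the opposition involution), i.e. $w_0\alpha_i=-\alpha_{\sigma(i)}$. By linearity this yields $\mathrm{ht}(w_0\lambda)=-\mathrm{ht}(\lambda)$ for every $\lambda$, and in particular $\mathrm{ht}(\mu_m)=-\mathrm{ht}(\mu_M)=-s$. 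Therefore $d=\mathrm{ht}(\mu_M-\mu_m)=s-(-s)=2s$, which is the asserted degree. Concretely, the computation $\mu_m=-\sum_j r_{\sigma^{-1}(j)}\alpha_j$ gives $q_j=r_j+r_{\sigma^{-1}(j)}$, whence $\sum_j q_j=2\sum_j r_j$.

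For the half-integrality of $s$ I would observe that $\mu_M$ and $\mu_m$ are weights of the same irreducible module, so their difference lies in the root lattice $\Lambda_{R}$; hence the $q_i=r_i+r_{\sigma^{-1}(i)}$ are nonnegative integers, $d=\sum_i q_i\in\mathbb{Z}$, and consequently $s=d/2\in\mathbb{Z}/2$. This simultaneously settles the first assertion and makes the relation $d=2s$ explicit.

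The only genuine subtlety is in not presuming $\mu_m=-\mu_M$. That equality holds only when $w_0=-\mathrm{id}$; for the algebras where $-w_0$ is a nontrivial diagram automorphism (namely $A_l$ with $l\geq 2$, $D_{2k+1}$, and $E_6$) the lowest weight is instead a \emph{permuted} negative of the highest weight. The point to stress is that the height functional is insensitive to this permutation, so the doubling $d=2s$ survives unchanged. Accordingly, the one input I would state with care is that $-w_0$ permutes the simple roots (equivalently, preserves the dominant chamber), after which the height computation is immediate and the whole argument closes.
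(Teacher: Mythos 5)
Your proposal is correct and follows essentially the same route as the paper: both reduce the degree to $d=\sum_i q_i$ where $\mu_M-\mu_m=\sum_i q_i\alpha_i$, and both exploit the Weyl-group element sending the simple system to its negative (your $w_0$ with the opposition involution, the paper's $w$ with $-\alpha_i=w(\alpha_{\sigma(i)})$) to conclude that the coefficient sums of $\mu_M$ and $-\mu_m$ coincide, whence $d=2s$ and $s\in\tfrac{1}{2}\mathbb{Z}$ since the $q_i$ are integers. Your height functional is a clean repackaging of the paper's coefficient bookkeeping, and the only substantive presentational difference is that the paper proves inline that $w(\mu_M)$ is the lowest weight, whereas you cite $\mu_m=w_0\mu_M$ as standard.
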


\begin{proof}
Let $\mathfrak{C}$ the Cartan subalgebra. Since the simple roots form a basis of ${\mathfrak{C}}^*$, the decomposition (\ref{maximal}) is always
possible and unique, with $r_i$ non negative rational coefficients. Let $\mu_m\in {\mathfrak{C}}^*$ the minimal weight. We know that there exist
positive integer coefficients $q_i$ such that (\ref{maxmenmin}) holds.

Now, from the definition of $\mu_m$, there exist non negative coefficients $\tilde r_i$, $i=1,\ldots,l$ such that
\begin{eqnarray}
\mu_m=-\sum_{i=1}^l \tilde r_i \alpha_i.  \label{minimal}
\end{eqnarray}
Inserting this into (\ref{maxmenmin}), we see that $r_i+\tilde r_i=q_i$ is an integer and
\begin{eqnarray}
d=\sum_{i=1}^l (r_i+\tilde r_i).  \label{degree}
\end{eqnarray}
Since $\{-\alpha_1,\ldots,-\alpha_l\}$ is a good candidate for a fundamental root system, we know that there exists an element $w$ in the Weyl reflection
group and a permutation $\sigma\in {\mathcal{P}}_l$ such that $-\alpha_i=w(\alpha_{\sigma(i)})$. In particular $\mu_m=w(\mu_M)$. Indeed
$w(\mu_M)$ is a weight. Assuming it is not the minimal weight, it would exist at least a simple root $\alpha_i$ such that $w(\mu_M)-\alpha_i=w(\mu_M+
\alpha_{\sigma(i)})$ is also a weight. Then, $\mu_M+\alpha_{\sigma(i)}$ would be also a weight, thus contradicting the maximality of $\mu_M$.
\newline
Applying this to (\ref{minimal}) we get $\tilde r_i=r_{\sigma(i)}$. Then
\begin{eqnarray}
\sum_{i=1}^l \tilde r_i =\sum_{i=1}^l r_{\sigma(i)}= \sum_{i=1}^l r_i,
\end{eqnarray}
which from (\ref{degree}) implies
\begin{eqnarray}
d=2\sum_{i=1}^l r_i.
\end{eqnarray}
\end{proof}

Note that if $x$ was not generic, then some simple root would not enter in constructing the path and the degree of the polynomial would turn out to be
lower. We are going to discuss this point in the next subsection. Note also that we have assumed for the algebra to be complex. However the proposition
holds true also for the split form. Indeed, in the split case all the root matrices are in the real algebra and we do not need to complexify in order
to proceed with the proof.\newline
Using the proposition and the results in \cite{Kostant-1}, we see that $s$ is the spin of the $SL(2)_{P}$ principal subgroup of $\Lambda $ in the given
representation\footnote{More precisely $s$ is the highest spin representation in the direct decomposition of $(\rho ,V)$ under $SL(2)_{P}$.}.
In order to compute $d$ for the given representation, it is sufficient to determine the degrees $d^{i}=2s^{i}$ associated to the fundamental weights
$\mu ^{i}$. From the proof of the proposition it is clear that if $\mu _{M}=\sum_{i}m_{i}\mu ^{i}$ for non negative integers $m_{i}$, then
\begin{equation*}
s=\sum_{i=1}^{l}m_{i}s^{i}.
\end{equation*}
Note that the degree $d$ coincides with the {\textit{level}} of $\mu_{m}$, which is the number of steps necessary to reach $\mu _{m}$ by acting
on $\mu _{M}$ with the negative of the simple roots. These have been computed for all fundamental representations and simple groups in Table~10 of \cite{Slansky}.
It is however instructive and useful for the next application to see how they can be obtained.

\

Set
\begin{eqnarray}
\mu^i=\sum_{j=1}^l r^{ij} \alpha_j.
\end{eqnarray}
By definition of fundamental weights
\begin{eqnarray}
\delta^i_k=\langle \mu^i | \alpha_k \rangle=\sum_{j=1}^l r^{ij} \langle \alpha_j | \alpha_k \rangle= \sum_{j=1}^l r^{ij} C_{jk},
\end{eqnarray}
where $\pmb C=\{C_{jk} \}$ is the Cartan matrix. Thus, we see that
\begin{eqnarray}
r^{ij}=C^{ij}\equiv (\pmb C^{-1})^{ij},
\end{eqnarray}
are given by the inverse of the Cartan matrix.\newline
The principal spin associated to the representation $V(\mu^i)$ is then
\begin{eqnarray}
s^i=\sum_{j=1}^l C^{ij}.
\end{eqnarray}
A convenient way to express this result is as follows. Let $\bar s_\Lambda$ be the column vector whose entries are the principal spins $s^i$ of the
fundamental representations of $\Lambda$, and let $\bar \varepsilon$ the column vector in $\mathbb{R}^l$ with all entries equal to $1$. Then
\begin{eqnarray}
\bar s_\Lambda=\pmb C_{\Lambda}^{-1} \bar\varepsilon,
\end{eqnarray}
where we have specified the dependence of the Cartan matrix from $\Lambda$. $2\bar s_\Lambda=R_\Lambda^t$ is the transpose of the level vectors
$R_\Lambda $ given in Table~10 of \cite{Slansky}.

%%%%%%%%%%%%%%%%%%%%%%%%%%%%%%%%%%%%%%%%%%%%%%%%%%%%%%%%%%%%%%%%%%%%%%%%%%%%%%%%%%%%%%%%%%%%%%%%%%%%%%%%%%%%%%%%%%%%%%%

\subsection{\label{sec:non-split}Iwasawa Polynomials for the other non compact, Real Forms}

The construction we have developed up to now is valid for the Iwasawa decomposition associated to the split real form. For a generic real form not
all the simple roots enter in the Iwasawa parametrization, so that the expected degree for the Iwasawa polynomials is lower than the one previously
computed. We thus need to adapt our analysis to such general cases.

\

Therefore, let us consider any \textit{non-split}, real\ (non compact) form of $G$, where $G$ is a Lie group such that $Lie(G)=\Lambda$. In this case
\begin{equation}
\mathbf{r}:=\text{rank}(G/H)<\text{rank}(G)=l,
\end{equation}
so that the largest possible intersection $\mathbf{C}_{\mathfrak{p}}$ between the non compact Cartan subalgebra $\mathfrak{C}$ and the maximal
Cartan subalgebra $\mathbf{C}$ of $G$ has dimension $\mathbf{r}<l$. Thus, $\mathbf{C}$ can be chosen as follows:
\begin{equation}
\mathbf{C:}=\mathbf{C}_{\mathfrak{h}}\oplus \mathbf{C}_{\mathfrak{p}},\qquad \mathbf{C}_{\mathfrak{p}}:=\mathfrak{C}\cap \mathfrak{p},\qquad \text{dim}
\mathbf{C}_{\mathfrak{p}}=\mathbf{r},\quad \mathbf{C}_{\mathfrak{h}}\subset \mathfrak{h},~~~\text{dim}\mathbf{C}_{\mathfrak{h}}=l-\mathbf{r},
\end{equation}
where $\mathfrak{p}$ is the orthogonal complement (in the Lie algebra $\mathfrak{g}$ of $G$) of the maximal compact subalgebra $\mathfrak{h}$ of
$H=mcs\left( G\right)$; in other words, $\mathfrak{p}$ denotes nothing but the generators of the coset $G/H$ in the symmetric Cartan decomposition
\begin{equation}
\mathfrak{g}=\mathfrak{h}\oplus \mathfrak{p}.  \label{Cd-1}
\end{equation}
Note that generally $\mathbf{C}_{\mathfrak{h}}$ is not the maximal Cartan subalgebra of $\mathfrak{h}$.

Let us fix a basis $k_{1},\ldots ,k_{s}$ for $\mathbf{C}_{\mathfrak{h}}$, and $h_{1},\ldots ,h_{\mathbf{r}}$ for $\mathbf{C}_{\mathfrak{p}}$ ($s+\mathbf{r}=l$). Thus,
the Cartan decomposition (\ref{Cd-1}) can be further refined as follows:
\begin{equation}
\mathfrak{g}=\mathfrak{h}\oplus \left( \mathbf{C}_{\mathfrak{p}}\oplus \widetilde{\mathfrak{p}}\right) ,  \label{Cd-2}
\end{equation}
where $\widetilde{\mathfrak{p}}$ is the complement of $\mathbf{C}_{\mathfrak{p}}$ in $\mathfrak{p}$. It is here worth recalling that, as the Cartan
decomposition (\ref{Cd-1}) pertains to the maximal and \textit{symmetric} embedding of $\mathfrak{h}$ into $\mathfrak{g}$, it holds that
\begin{equation}
\lbrack \mathfrak{h},\mathfrak{h}]\subseteq \mathfrak{h},\qquad \lbrack \mathfrak{p},\mathfrak{p}]\subseteq \mathfrak{h},\qquad \lbrack \mathfrak{h},
\mathfrak{p}]\subseteq \mathfrak{p}.  \label{Cd-1-symm}
\end{equation}
Let $\mathfrak{k}$ be the Lie algebra of the \textit{normalizer} $\mathbf{K}$ of $\mathbf{C}_{\mathfrak{p}}$, \textit{i.e.} the largest Lie subalgebra of
$\mathfrak{h}$ such that $[\mathfrak{k},\mathbf{C}_{\mathfrak{p}}]=0$. Then (\ref{Cd-2}) and (\ref{Cd-1-symm}) imply that
\begin{equation}
\lbrack \mathfrak{k},\mathbf{C}_{\mathfrak{p}}]=0\Rightarrow \lbrack \widetilde{\mathfrak{h}},\mathbf{C}_{\mathfrak{p}}]\subseteq
\widetilde{\mathfrak{p}},\qquad \lbrack \widetilde{\mathfrak{p}},\mathbf{C}_{\mathfrak{p}}]\subseteq \widetilde{\mathfrak{h}},  \label{struttura}
\end{equation}
where $\widetilde{\mathfrak{h}}$ is the complement of $\mathfrak{k}$ in $\mathfrak{h}$.

It is here worth pointing out a convenient characterization of the roots of the \textit{simple}, non compact non split, real Lie algebra $\mathfrak{g}$.
Since $\mathbf{C}_{\mathfrak{h}}$ is contained in $\mathfrak{h}$ and commutes with $\mathbf{C}_{\mathfrak{p}}$, it is contained in $\mathfrak{k}$
and is necessarily a Cartan subalgebra for it. Then, it follows that
\begin{equation}
\text{rank}(\mathfrak{k})=\text{dim}(\mathbf{C}_\mathfrak{h})=s.
\end{equation}
One can then represent all roots of $\mathfrak{g}$ as the simultaneous eigenvalues of the operators \linebreak
$(ad_{k_{1}},\ldots,ad_{k_{s}};ad_{h_{1}},\ldots ,ad_{h_{\mathbf{r}}})$. By defining $k:=$dim$\mathbf{K}$, the eigenvectors of the roots $\alpha _{\mathfrak{h},a}$
($a=1,\ldots ,k-s$) of $\mathfrak{k}$ are in the complexification of $\mathfrak{k}$ and thus in the kernel of $ad_{h_{i}}$ ($i=1,\ldots ,\mathbf{r}$) : as
a consequence, the last $\mathbf{r}$ components of the string of simultaneous eigenvalues of $(ad_{k_{1}},\ldots ,ad_{k_{s}};ad_{h_{1}},\ldots,ad_{h_{\mathbf{r}}})
$ are zero. Actually, the roots $\alpha _{\mathfrak{h},a}$ are all the non-vanishing roots of $\mathfrak{g}$ with this property: all the other ones
are characterized by (\textit{at least} some) non-vanishing components among
the last $r$ ones in the string of simultaneous eigenvalues of $(ad_{k_{1}},\ldots ,ad_{k_{s}};ad_{h_{1}},\ldots ,ad_{h_{\mathbf{r}}})$. We will call
the corresponding roots $\alpha _{\mathfrak{p},b}$ ($b=1,\ldots ,2(h-k)$).
Indeed, these correspond to the non-vanishing roots of the $h_{i}$, whose number, by (\ref{struttura}), is $h-k$.\newline
Thus, the roots can be separated into the $\alpha_{\mathfrak{p}}$ and $\alpha_{\mathfrak{h}}$ roots. In realizing the quotient only the first ones
survive, the last ones being projected to zero roots. As usual, one can divide the roots $Rad$ in positive ones $Rad^{+}$ and negative ones $Rad^{-}$, namely:
\begin{equation}
Rad=Rad^{+}\oplus Rad^{-}.
\end{equation}
Correspondingly, this determines a decomposition of the root system associated to the Lie algebra $\mathfrak{p}$ of the symmetric coset $G/H$:
\begin{equation}
Rad_{\mathfrak{p}}=Rad_{\mathfrak{p}}^{+}\oplus Rad_{\mathfrak{p}}^{-}.
\end{equation}
While in the split case $Rad_{\mathfrak{p}}$ is a reduced lattice system, in the non-split case this is generally not true anymore, and generically each
root $\alpha $ is characterized by a multiplicity $m_{\alpha }\geq 1$; moreover, given a root $\alpha $, it can happen that $2\alpha $ or
$\frac{1}{2}\alpha $ is also a root.\newline

All the root systems $\Lambda _{G/H}$ associated to irreducible symmetric spaces $G/H$ have been classified by Araki \cite{mac}. The split forms are
exactly the ones which satisfy $\Lambda _{G/H}=R_{G}$, where $R_{G}$ is the root system of $G$ itself. For a generic non compact real form of $G$,
$\Lambda_{G/H}$ is not reduced, but it contains a maximal subsystem of roots (its \textit{reduction}), which correspond to a root system of simple kind
(but in general with non-trivial multiplicities). The complement of the reduction of a non-reduced $\Lambda _{G/H}$ possibly consists of further
roots which are twice the reduced ones (they are dubbed \textit{double }roots).\newline
\smallskip

Now we can extend our results to the given real form. In this case the nilpotent elements are of the form $x=\sum_{i} x^i \lambda_i$, where the sum
is extended over all root matrices corresponding to the positive roots of the lattice associated to the quotient symmetric manifold. Thus, not all the
simple roots matrices will participate to the Iwasawa construction, but only the ones corresponding to roots of $Lie(G)$ that are not sent to zero in the
projection on the quotient. As a consequence, in order to compute the degree of the corresponding Iwasawa polynomials, we have to project (\ref{maxmenmin})
on the quotient.\footnote{Note that a maximal weight is again maximal in the non reduced lattice after projection, but it is in general not unique.}
If we call $\pi$ the projection, we get
\begin{eqnarray}
\pi(\mu_M) -\pi(\mu_m)=\sum_{i=1}^\mathbf{r} q_{j_i} \pi(\alpha_{j_i}),
\end{eqnarray}
where $\alpha_{j_i}$ are the roots that are not in the kernel of $\pi$. We can then proceed exactly as in the proof of the previous proposition to get
for the degree of the Iwasawa polynomial
\begin{eqnarray}
d=2\sum_{i=1}^\mathbf{r} r_{j_i}.
\end{eqnarray}
Again, it is sufficient to compute only the spins associated to the fundamental representations, which as a consequence are
\begin{eqnarray}
d^i=\sum_{k=1}^\mathbf{r} C^{ij_k}.
\end{eqnarray}
Now, it is easy to determine which roots are projected into zero and which not. Indeed, to each real form, or, equivalently, to each non compact
Irreducible Symmetric Space (NISS), it is associated a Satake diagram, indicating as black dots the simple roots in the kernel of $\pi$. A complete
list of Satake diagrams can be found in \cite{mac}. Let us call $x$ of generic type in the Iwasawa nilpotent in the representation $(\rho,V)$ if
\begin{eqnarray}
x=\sum x^i \lambda_i,
\end{eqnarray}
where the sum is extended to all positive quotient roots, and the coefficients corresponding to simple roots are all non vanishing. Thus, we
have proved the following proposition:

\begin{prop}
Let $\Lambda$ be a complex simple Lie algebra of rank $l$ and $\Lambda^T$ be its real form corresponding to a given symmetric space of type $T$ and rank
$\mathbf{r}$ (as listed in \cite{CDPS}). Consider the associated Satake diagram and let $\bar \varepsilon^T$ the column vector in $\mathbb{R}^l$ with entry $1$
if corresponding to a white dot and zero otherwise (\textit{Satake vector}), and let $\{\bar e_i\}_{i=1}^l$ the canonical basis of $\mathbb{R}^l$. Let $x$
be of generic type in the corresponding Iwasawa nilpotent in the representation $V(\mu^i)$, ($i=1,\ldots,l$). Then
\begin{eqnarray}
P(x)=\exp (x)
\end{eqnarray}
is a polynomial of degree $d^i_{\Lambda^T}=2s^i_{\Lambda^T}$, where
\begin{eqnarray}
s^i_{\Lambda^T}=\bar e_i \cdot \pmb C^{-1} \bar \varepsilon^T.
\end{eqnarray}
\end{prop}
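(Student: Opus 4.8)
The plan is to obtain the statement by specializing the general non-split degree formula derived just above to the fundamental weights and then rewriting the resulting restricted sum as a single contraction with the inverse Cartan matrix. I would not redo the structural analysis of the quotient root system, but take as given the path argument, the previous proposition, and the projection identity obtained by applying $\pi$ to \eqref{maxmenmin}.

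First I would recall where the degree comes from: by the previous proposition the degree of $P(x)=\exp(x)$ is the maximal length of a path from the lowest weight $\mu_m$ to the highest weight $\mu_M$ of $V$, each step adding one positive \emph{quotient} root, and genericity of $x$ guarantees that every surviving simple-root direction is actually used so that the maximal path is realized. Applying $\pi$ to \eqref{maxmenmin} annihilates exactly the black-dot simple roots (those in $\ker\pi$) and leaves $\pi(\mu_M)-\pi(\mu_m)=\sum_{j\ \mathrm{white}}q_j\,\pi(\alpha_j)$, so the degree is $d=\sum_{j\ \mathrm{white}}q_j$, and the whole content reduces to the doubling $\sum_{j\ \mathrm{white}}q_j=2\sum_{j\ \mathrm{white}}r_j$.

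Next I would specialize $\mu_M=\mu^i$. From the inverse-Cartan relation already established, $\mu^i=\sum_j C^{ij}\alpha_j$ with $C^{ij}=(\pmb C^{-1})^{ij}$, so the coefficients in \eqref{maximal} are $r_j=C^{ij}$. Substituting into $d=2\sum_{j\ \mathrm{white}}r_j$ gives $d^i_{\Lambda^T}=2\sum_{j\ \mathrm{white}}C^{ij}$. I would then observe that summing the $i$-th row of $\pmb C^{-1}$ over the white dots is precisely its contraction with the Satake vector $\bar\varepsilon^T$, which carries entry $1$ on white dots and $0$ on black dots: $\sum_{j\ \mathrm{white}}C^{ij}=\bar e_i\cdot\pmb C^{-1}\bar\varepsilon^T=:s^i_{\Lambda^T}$, whence $d^i_{\Lambda^T}=2s^i_{\Lambda^T}$. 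A small consistency point to record is that two white dots joined by an arrow of the Satake diagram project to the same restricted simple root yet are both weighted by $1$ in $\bar\varepsilon^T$; regrouping the sum by restricted-root orbits leaves $\sum_{j\ \mathrm{white}}C^{ij}$ unchanged, so the formula is insensitive to arrows.

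The main obstacle is the doubling $\sum_{j\ \mathrm{white}}\tilde r_j=\sum_{j\ \mathrm{white}}r_j$, where $\mu_m=-\sum_j\tilde r_j\alpha_j$. In the split case this rested on $\mu_m=w(\mu_M)$ for a Weyl element realizing a signed permutation of the simple roots; in the non-split case I would instead run the reflection argument \emph{inside} the restricted root system $Rad_{\mathfrak{p}}$, whose simple roots are the $\pi(\alpha_j)$ for white $j$. Its longest Weyl element $\bar w$ sends $\pi(\mu_M)$ to $\pi(\mu_m)$ and permutes the restricted simple roots; comparing the coefficients of $\pi(\mu_M)$ and of $\pi(\mu_m)$ in this basis then forces the two coefficient sums to agree, producing the factor $2$. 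The delicate inputs are that $\pi(\mu_M)$ is again a (in general non-unique) highest restricted weight while $\pi(\mu_m)$ is lowest, and that $Rad_{\mathfrak{p}}$ with this simple system is a genuine root system; both follow from the Araki/Satake classification of the root systems of irreducible symmetric spaces recalled above.
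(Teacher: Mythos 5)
Your proposal is correct and, in outline, it is the paper's own proof: project \eqref{maxmenmin} by $\pi$, identify the degree with $\sum_{j\,\mathrm{white}}q_j$ via the path argument, specialize to $\mu_M=\mu^i$ using $r_j=C^{ij}$, and recognize the restricted sum as the contraction $\bar e_i\cdot\pmb C^{-1}\bar\varepsilon^T$ with the Satake vector. Where you genuinely differ is in how the doubling lemma $\sum_{\mathrm{white}}\tilde r_j=\sum_{\mathrm{white}}r_j$ is justified: the paper disposes of it with the single sentence that one may \textquotedblleft proceed exactly as in the proof of the previous proposition\textquotedblright, i.e.\ it implicitly recycles the full-Weyl-group argument $\tilde r_j=r_{\sigma(j)}$, whereas you rerun the reflection argument intrinsically inside the restricted root system $Rad_{\mathfrak{p}}$, using the longest element of the little Weyl group. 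Your route is in fact the slightly more complete one: recycling the split-case argument verbatim only yields $\sum_{\mathrm{white}}\tilde r_j=\sum_{j\in\sigma(\mathrm{white})}r_j$, so the paper tacitly relies on the fact that the opposition involution $\sigma$ preserves the white/black coloring of every Satake diagram (true in all cases, but nowhere stated), while your restricted-system argument bypasses this at the cost of the Araki-type inputs you correctly flag ($Rad_{\mathfrak{p}}$ is a genuine, possibly non-reduced root system, and $\pi(\mu_M)$, $\pi(\mu_m)$ are its extreme restricted weights, both available from \cite{mac}). Your remark on arrow pairs (distinct white dots projecting to the same restricted simple root, each still weighted $1$ in $\bar\varepsilon^T$) is likewise a point the paper passes over silently. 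Finally, note that the attainability of the upper bound $d\le\sum_{\mathrm{white}}q_j$ by an actual path in the weight diagram is treated by you at the same heuristic level as by the paper, so no gap is introduced relative to the paper's own standard of rigor.
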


It should be pointed out that the same construction can be extended to reducible representations and to semi-simple groups. In these cases the maximal
polynomial degree will be determined by the maximal spin among all sub representations or among the simple factors, respectively. 

A list of the inverse Cartan matrices can be found in appendix \ref{app:cartan}, whereas all the Satake vectors are listed in appendix~\ref{app:satake}.

%%%%%%%%%%%%%%%%%%%%%%%%%%%%%%%%%%%%%%%%%%%%%%%%%%%%%%%%%%%%%%%%%%%%%%%%%%%%%%%%%%%%%%%%%%%%%%%%%%%%%%%

\subsubsection{\label{sec:rep-dep}Some Examples}

In order to illustrate how the degree of nilpotency depends on the $G$-rep\-re\-sen\-tation of the coset representative $\mathcal{M}$, we consider the
simplest case provided by the rank-$1$ symmetric space
\begin{equation}
\frac{G}{H}=\frac{SL(2,\mathbb{R})}{SO(2)},  \label{ex-1}
\end{equation}
which corresponds to setting $n=1$ in the first row of Tables~\ref{tab:fourierFg2} and \ref{tab:degrees}. The result $d=1$ pertains to the coset representative in the fundamental irrep. $\mathbf{2}$ (spin $s=1/2$) of
$G_{nc}=SL(2,\mathbb{R})$, with a $2\times 2$ coset representative $\mathcal{M}$; this case is relevant both to four-dimensional $\mathcal{N}=4$ \textit{\textquotedblleft pure"} supergravity as well as to $\mathcal{N}=2$
supergravity \textit{minimally coupled} to one Abelian vector multiplet (this latter theory can indeed be obtained from the former by a consistent
truncation of graviphotons and gravitinos), and it has been recently reconsidered in Sec. 5 of \cite{CFGM-d}.

On the other hand, the coset (\ref{ex-1}) can also be considered as the scalar manifold of the so-called $T^{3}$ model of $\mathcal{N}=2$, $D=4$
Maxwell-Einstein supergravity, in which the unique Abelian vector multiplet is coupled non-minimally, but rather through a cubic holomorphic
prepotential $\mathcal{F}=T^{3}$, to the gravity multiplet. In this case, the relevant irrep. of $G_{nc}=SL(2,\mathbb{R})$ is the $\mathbf{4}$ (spin
$s=3/2$), and thus the coset representative $\mathcal{M}$ is a $4\times 4$ matrix. Since the irrep. $\mathbf{4}$ has weight $3\lambda _{1}$, from the
reasoning above one obtains that the corresponding maximal degree is three times the fundamental one, and thus
\begin{equation}
\text{in~}\mathbf{4}\text{ of~}SL(2,\mathbb{R}):d=3.  \label{spin-3/2}
\end{equation}
The result (\ref{spin-3/2}) on the degree of nilpotency $d+1=4$ of the unique Iwasawa generator is consistent with the treatment recently given in
\cite{CFGM-d}; indeed, for the $T^{3}$ model, which uplifts to five-dimensional minimal \textit{\textquotedblleft pure"} supergravity, the
(partial) axionic Iwasawa construction exploited in \cite{CFGM-d} (in particular, \textit{cfr.} Sec. 2 and App. B of \cite{CFGM-d}) is actually a
total Iwasawa construction.

For any $\mathfrak{sl}(2,\mathbb{R})$-irrep. of weight $m\lambda_1$ (and thus spin $s=m/2$), the corresponding maximal Iwasawa degree concerning the
IRGS (\ref{ex-1}) reads $d_{S}=2s=m$, and thus the corresponding Iwasawa nilpotency degree is nothing but $m+1$.

The negative constant scalar curvature of (\ref{ex-1}) for the $s=1/2$ and $s=3/2$ cases respectively is $R=-2$ and $R=-2/3$ (corresponding to $m=1$ and
$m=3$), and they are the unique values for which this symmetric K\"{a}hler coset is a \textit{special} manifold~\cite{CVP}.\medskip

Considering IRGS relevant as scalar manifolds in supergravity theories (see \textit{e.g.} \cite{LA08} for a comprehensive review), one should consider
only a few other \textit{non-fundamental} irreps., such as the rank-$3$ antisymmetric skew-traceless $\mathbf{14}^{\prime }$ of
$Sp\left( 6,\mathbb{R}\right) $.

Regarding the coset manifold $SL(2,\mathbb{R}) \over SO(2)$, another remark is in
order. As it is discussed in the next Section~\ref{sec:metric}, a further quantity that is interesting is the invariant metric tensor. Then it can be seen immediately that for $SU(1,1) \over U(1)$ the degree $d_g$ of the metric vanishes: $d_g=0$. This is a consequence of the fact that in this case the metric does not depend on the axions and hence on the nilpotent part of the Iwasawa decomposition. In other words, the metric of the upper half plane depends only on the imaginary part Im$(z)$ of the complex coordinate $z=$Re$(z)+i$ Im$(z)$. For these results, compare the values $d_g=0$ in the first row for $n=1$ and in the third row of Table~\ref{tab:SKG-D=4}.

%%%%%%%%%%%%%%%%%%%%%%%%%%%%%%%%%%%%%%%%%%%%%%%%%%%%%%%%%%%%%%%%%%%%%%%%%%%%%%%

\section{\label{sec:metric}The Degree $d_{g}$ of Metric Polynomials}

There is a second problem we are interested in. Let us consider a NISS $M$. It can be realized in a given representation $(\rho, V)$ by means of the
Iwasawa parametrization, which, following our previous notations, can be formally written as
\begin{eqnarray}
M=\exp (y) \exp (x),
\end{eqnarray}
where $y=\sum_{i=1}^\mathbf{r} y^i c_i$, where $c_i$, $i=1,\ldots,\mathbf{r}$ is a basis of $\mathbf{C}_{\mathfrak{p}}$, and $x=\sum_i x^i \lambda_i$ is of generic type
in the Iwasawa nilpotent $\mathfrak{n}$ in the representation $(\rho,V)$, apart from positive codimension submanifolds. Note that
$\mathbf{C}_{\mathfrak{p}}\oplus \mathfrak{n}$ is a (non simple) subalgebra so that the Lie algebra valued right-invariant one form
\begin{eqnarray}
J_M:= dM M^{-1}
\end{eqnarray}
again lies in $\mathbf{C}_{\mathfrak{p}}\oplus \mathfrak{n}$. From this, one obtains the invariant metric tensor $g_{ij}$ on $M$ in the standard way. In
particular, $J_M$ will be polynomial in $x$ of a given degree $d$, so that the metric will be polynomial of degree $2d$. The aim of this section is to
determine $d$.\newline
To this end, let us first note that,
\begin{eqnarray}
J_M= d \exp(y)\ \exp(-y) +Ad_{\exp (y)} (d \exp(x)\ \exp(-x)).
\end{eqnarray}
Since $\tilde J(x):=d \exp(x)\ \exp(-x)$ lies in $\mathfrak{n}$, on which $Ad_{\exp (y)}$ acts diagonally, the polynomial part comes only from $\tilde J(x)$. Now
\begin{eqnarray}
\tilde J(x)= \sum_i dx^i \int_0^1 Ad_{\exp (vx)}(\lambda_i) dv,
\end{eqnarray}
from which we can easily read the degree of $\tilde J(x)$. Indeed, for a fixed $i$, $Ad_{\exp (vx)}=\exp (v ad_x)$ acts on $\lambda_i$ in the same
way discussed in the previous section: $\lambda_i$ is the root matrix corresponding to the positive root $\alpha_i$; the action of $ad_x$ changes
$\lambda_i$ to a combination of root matrices corresponding to roots of the form $\alpha_i+\alpha_j$, with $\alpha_j$ positive, and so on until reaching
the highest root. Thus we can follow exactly the same reasoning as in the previous section: the main point is that it does not matter what
representation we are starting from, as it is always the adjoint representation which is acting in this case. What is relevant is only the type of the NISS. For
example, if we work with split form, then we have to work with the highest root, expressed in terms of the simple roots as
\begin{eqnarray}
\alpha_H=\sum_{i=1}^l n^i \alpha_i.
\end{eqnarray}
Since we construct the paths by starting from a positive root $\lambda_i$, the longest paths will be obtained by adding simple roots to a starting
simple root. The maximal possible length of a path will then be $d=\sum_{i=1}^l n^i-1=C_G-2$. \newline
If we work with a generic real form, then, as before, we will need to sum only over the coefficients $n_{j_i}$ corresponding to the roots that are not in
the kernel of the projector $\pi$: $d=\sum_{i=1}^r n_{j_i}$. We thus arrive at the following result:

\begin{prop}
Let $M=G/H$ a NISS of type $T$ and $\mathbf{g}$ the corresponding positive definite biinvariant metric induced by the Killing form. Let $x$ the
coordinates in the nilpotent part of the standard Iwasawa parametrization of $M$. Then $\mathbf{g}$ is a polynomial function of $x$ of degree
\begin{equation*}
d_{g}=2(s_{ad}-1),
\end{equation*}
where $s_{ad}$ is the spin of the principal $SL(2)_{P}$ in the adjoint representation. If $\bar{n}=(n_{1},\ldots ,n_{l})$ are the coordinates of
the highest roots w.r.t. the simple roots, then
\begin{equation*}
d_{g}=2(\bar{n}\cdot \bar{\varepsilon}^{T}-1).
\end{equation*}
\end{prop}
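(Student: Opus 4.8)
The plan is to trace the $x$-dependence of the right-invariant Maurer--Cartan form $J_M=dMM^{-1}$ and then apply the path counting of the earlier propositions, but carried out in the adjoint representation. First I would record the purely formal input: the $G$-invariant metric $\mathbf{g}$ is the Killing-form pairing of the coset ($\mathfrak{p}$-)component of $J_M$ with itself, so it is a quadratic expression in $J_M$; consequently, if $J_M$ is polynomial of degree $d$ in the axionic coordinates $x$, then $\mathbf{g}$ is polynomial of degree $2d$. Everything therefore reduces to the $x$-degree of $J_M$.

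Next I would localize this degree. In the splitting $J_M=d\exp(y)\exp(-y)+Ad_{\exp(y)}\tilde J(x)$ with $\tilde J(x)=d\exp(x)\exp(-x)\in\mathfrak{n}$, the first term carries no $x$, and $Ad_{\exp(y)}$ merely rescales each root matrix spanning $\mathfrak{n}$ by an exponential in $y$, leaving $x$-degrees untouched. Hence the $x$-degree of $J_M$ equals that of $\tilde J(x)=\sum_i dx^i\int_0^1 Ad_{\exp(vx)}(\lambda_i)\,dv$, whose coefficient of $dx^i$ is $\sum_{k\ge0}\tfrac{1}{(k+1)!}(ad_x)^k(\lambda_i)$. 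The sought degree is thus the largest $k$ for which $(ad_x)^k(\lambda_i)\neq0$ for some positive-root matrix $\lambda_i$.

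The central step is to evaluate this $k$ by the weight-path argument, noting that $ad_x$ acts in the \emph{adjoint} representation irrespective of the representation $(\rho,V)$ one started from; this is exactly what makes $d_g$ depend only on the type of the NISS. The weights of the adjoint are the roots, so $\lambda_i$ sits at the positive root $\alpha_i$ and each application of $ad_x$ raises it by a positive root; for generic $x$ the longest chain advances by simple roots and halts at the highest root $\alpha_H=\sum_i n^i\alpha_i$, since $\alpha_H+\beta$ is never a root for $\beta$ positive. The essential difference from the full Iwasawa polynomial is that the chain must begin at a \emph{positive} root, hence at best at a simple root of height one, rather than at the lowest weight; its maximal length is therefore $\mathrm{ht}(\alpha_H)-1=\sum_i n^i-1$. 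Since the adjoint highest weight is $\alpha_H$, the earlier propositions give $s_{ad}=\sum_i n^i$, so $d=s_{ad}-1$ and $d_g=2(s_{ad}-1)$.

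For a non-split real form of type $T$ I would project by $\pi$ exactly as in the non-split proposition, retaining only the simple roots labelled by white dots of the Satake diagram and climbing within the restricted root system; summing the highest-root coordinates over the surviving simple roots returns $\bar n\cdot\bar\varepsilon^T$, whence $d=\bar n\cdot\bar\varepsilon^T-1$ and $d_g=2(\bar n\cdot\bar\varepsilon^T-1)$. The main obstacle I expect lies precisely in the combinatorics of this maximal chain: one must show that some simple root can be raised to $\alpha_H$ by successive single simple-root steps all staying inside the root system (the saturation/connectivity of the height-graded root poset), that no positive starting root yields a longer chain, and---in the non-split case---that root multiplicities and the possible doubled restricted roots do not alter this height count.
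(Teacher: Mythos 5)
Your proposal is correct and follows essentially the same route as the paper: the same splitting $J_M=d\exp(y)\exp(-y)+Ad_{\exp(y)}\tilde J(x)$ with $\tilde J(x)=\sum_i dx^i\int_0^1 Ad_{\exp(vx)}(\lambda_i)\,dv$, the same observation that only the adjoint action of $ad_x$ matters regardless of the representation $(\rho,V)$, and the same path count from a simple root up to the highest root giving $d=\sum_i n^i-1$ (with the Satake projection handling non-split forms). The subtleties you flag at the end (connectivity of the height-graded root poset, multiplicities and doubled restricted roots) are indeed passed over silently in the paper's own argument, which simply asserts the path-counting conclusion.
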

Let us notice that, again, this result is easily generalized to the case of a reducible representation or to semisimple groups: the degree is the maximum degree among the irreducible or simple factors, respectively.

%%%%%%%%%%%%%%%%%%%%%%%%%%%%%%%%%%%%%%%%%%%%%%%%%%%%%%%%%%%%%%%%%%%%%%%%%%%%%%%

\section{\label{sec:summary}Summary of Results}

The results of the above reasonings are collected in Tables~\ref{tab:fourierFg2}, \ref{tab:fourierFg3} and \ref{tab:degrees}.

In the first column of Table~\ref{tab:fourierFg2}, consistent with the notation \textit{e.g.} in \cite{mac, CDPS}, we indicate the label for
the non compact, real form $G_{nc}$ of $G$, which in turn is given in the second column. $H=mcs(G_{nc})$ is given, along with possible discrete
factors, in the third column. Then, following the classification of \cite{mac}, the fourth column specifies the type of the root systems $\Lambda_{G/H}$
associated to irreducible symmetric space $G/H$. After \cite{CDPS}, the fifth column provides the coefficients $\left( n_{1},...,n_{\mathbf{r}}\right) $
of the longest root $\alpha _{L}$~(\ref{longest}). In the sixth column, the multiplicities for the roots and the double roots are given \cite{mac}: these
are just a number for simply-laced root systems; on the other hand, in non-simply-laced systems they are given as a pair, in which the first and
second entry give respectively the multiplicity of the long and short root.
\newline

In Table~\ref{tab:fourierFg3} we report the rank $\mathbf{r}$ of the NISS $G_{nc}/mcs\left( G_{nc}\right) $, along with the number $\mathbf{I}$ of the
corresponding nilpotent (\textit{i.e.}, Iwasawa) generators; the resulting (real) dimension $\dim $ of the NISS is nothing but $\mathbf{I}+\mathbf{r}$, where $\mathbf{r}$ is the rank of $G_{nc}$.; finally, in the fifth column the \textit{character} \cite{Helgason} $\chi :=nc-c$ of the coset is reported, where
\textquotedblleft $nc$" and \textquotedblleft $c$" denote respectively the number of non compact resp. compact generators of the NISS itself.

In Table~\ref{tab:degrees} we report the degrees of the Iwasawa polynomials associated to each NISS. In the first column we report $G_{nc}$, in the
second column the rank $l$ of $G_{nc}$, in column three we give the vector of degrees of Iwasawa polynomials associated to the fundamental
representations, and in the last column the degree $d_{g}$ of the polynomial part in the biinvariant metric $\mathbf{g}$.\newline
Note that from the fundamental degrees reported in the second column of Table~\ref{tab:degrees}, the nilpotency degree of the Iwasawa polynomials
can be obtained by simply adding one to each entry.

%%%%%%%%%%%%%%%%%%%%%%%%%%%%%%%%%%%%%%%%%%%%%%%%%%%%%%%%%%%%%
%\begin{landscape}
\begin{table}[tbph]
\begin{center}
\resizebox*{1\textwidth}{!}{\
\begin{tabular}{cccccc}
\toprule T & $G_{nc}$ & $H$ & $\Lambda _{G/H}$ & $\left(n_{1},...,n_{\mathbf{r}}\right) $ & $\vec{m}_{\lambda },\,\vec{m}_{2\lambda }$ \\
\midrule AI & $SL(n+1,\mathbb{R})$ & $SO(n+1)$ & $A_{n}\,(n\geq 1)$ & (1,1,\ldots ,1) & (1), (0) \\[0.2cm]
AII & $SU^{\ast }(2k)$ & $USp(2k)$ & $A_{k-1}\,(k>1)$ & (1,1,\ldots ,1) & (4), (0) \\[0.15cm]
$\mathrm{{AIII}_{a}}$ & $SU(p,q)$ & $S(U(p)\times U(q))$ & $B_{p}\,(1<p<q)$ & (2,2,\ldots ,2) & $2(1,q-p),(0,1)$ \\[0.15cm]
$\mathrm{{AIII}_{b}}$ & $SU(p,p)$ & $S(U(p)\times U(p))$ & $C_{p}\,(p>1)$ & $(2,2,\ldots ,2,1)$ & (1,2), (0,0) \\[0.15cm]
AIV & $SU(1,n)$ & $S(U(1)\times U(n))$ & $A_{1}$ & (2) & (2n-2), (1) \\[0.15cm]
$\mathrm{{BI_{a}}}$ & $SO(n,n+1)$ & $SO(n)\times SO(n+1)$ & $B_{n}\,(n\geq2) $ & (1,2,\ldots ,2) & (1,1),\thinspace\ (0,0) \\[0.15cm]
$\mathrm{{BI_{b}}}$ & $SO(p,q)$ & $SO(p)\times SO(q)$ & $B_{p}\,(1<p<n)$ & (1,2,\ldots ,2) & (1,2(n-p)+1), (0,0) \\[0.15cm]
BII & $SO(1,2n)$ & $SO(2n)$ & $A_{1}$ & (1) & (2n-1), (0) \\[0.15cm]
CI & $Sp(2n,\mathbb{R})$ & $U(n)$ & $C_{n}\,(n\geq 3)$ & (2,2,\ldots ,2,1) & (1,1),\thinspace\ (0,0) \\[0.15cm]
$CII_{a}$ & $USp(2p,2q)$ & $USp(2p)\times USp(2q)$ & $B_{p}\,(1\leq p\leq (n-1)/2)$ & (2,2,\ldots ,2) & (4,4n-8p), (0,3) \\[0.15cm]
$CII_{b}$ & $USp(2k,2k)$ & $USp(2k)\times USp(2k)$ & $C_{k}$ & $(2,2,\ldots,2,1)$ & (3,4), (0,0) \\[0.15cm]
$DI_{a}$ & $SO(n,n)$ & $SO(n)\times SO(n)$ & $D_{n}\,(n>3)$ & (1,2,\ldots,2,1,1) & (1), (0) \\[0.15cm]
$DI_{b}$ & $SO(n-1,n+1)$ & $SO(n-1)\times SO(n+1)$ & $B_{n-1}\,(n>2)$ & (1,2,\ldots,2) & (1,2), (0,0) \\[0.15cm]
$DI_{c}$ & $SO(p,q)$ & $SO(p)\times SO(q)$ & $B_{p}\,(1<p<n-1)$ & (1,2,\ldots,2) & (1,2(n-p)), (0,0) \\[0.15cm]
$DII$ & $SO(1,2n-1)$ & $SO(2n-1)$ & $A_{1}$ & (1) & (2n-2), (0) \\[0.15cm]
$DIII_{a}$ & $SO^{\ast }(4k+2)$ & $U(2k+1)$ & $B_{k}\,(k\geq 2)$ & (2,2,\ldots ,2) & (4,4), (0,1) \\[0.15cm]
$DIII_{b}$ & $SO^{\ast }(4k)$ & $U(2k)$ & $C_{k}\,(k\geq 2)$ & (2,2,\ldots,2,1) & (1,4), (0,0) \\[0.15cm]
G & $G_{2(2)}$ & $SO(4)/\mathbb{Z}_{2}$ & $G_{2}$ & (3,2) & (1,1), (0,0) \\[0.15cm]
FI & $F_{4(4)}$ & $USp(6)\times USp(2)$ & $F_{4}$ & (2,3,4,2) & (1,1), (0,0) \\[0.15cm]
FII & $F_{4(-20)}$ & $SO(9)$ & $A_{1}$ & (2) & (8), (7) \\[0.15cm]
EI & $E_{6(6)}$ & $USp(8)/\mathbb{Z}_{2}$ & $E_{6}$ & (1,2,2,3,2,1) & (1),(0) \\[0.15cm]
EII & $E_{6(2)}$ & $(USp(2)\times SU(6))/\mathbb{Z}_{2}$ & $F_{4}$ & (2,3,4,2) & (1,2), (0,0) \\[0.15cm]
EIII & $E_{6(-14)}$ & $(U(1)\times SO(10))/\mathbb{Z}_{4}$ & $B_{2}$ & (2,2) & (6,8), (0,1) \\[0.15cm]
EIV & $E_{6(-26)}$ & $F_{4}$ & $A_{2}$ & (1,1) & (8), (0) \\[0.15cm]
EV & $E_{7(7)}$ & $SU(8)/\mathbb{Z}_{2}$ & $E_{7}$ & (2,2,3,4,3,2,1) & (1),(0) \\[0.15cm]
EVI & $E_{7(-5)}$ & $(SU(2)\times SO(12))/\mathbb{Z}_{2}$ & $F_{4}$ & (2,3,4,2) & (1,4), (0,0) \\[0.15cm]
EVII & $E_{7(-25)}$ & $(U(1)\times E_{6})/\mathbb{Z}_{3}$ & $C_{3}$ & (2,2,1) & (1,8), (0,0) \\[0.15cm]
EVIII & $E_{8(8)}$ & $S_{s}(16)$ & $E_{8}$ & (2,3,4,6,5,4,3,2) & (1), (0) \\[0.15cm]
EIX & $E_{8(-24)}$ & $(SU(2)\times E_{7})/\mathbb{Z}_{2}$ & $F_{4}$ & (2,3,4,2) & (1,8), (0,0) \\
\bottomrule &  &  &  &  &
\end{tabular}
}
\end{center}
\caption{In this Table, we list the main ingredients necessary to describe the Iwasawa construction of the non compact,
irreducible, Riemannian, globally symmetric space $G_{nc}/H$. $S_{s}\left(16\right) $ denotes the semispin group of type $D_{8}$ \protect\cite{YokYas}
(see also appendix \protect\ref{app:semispin}). Note that, when listing the \textit{mcs} $H$ in the second column, the universal coverings are
considered, namely $SO(3)\equiv $ $Spin(3)$ $\simeq $ $SU(2)$, $USp(4)\simeq$ $SO(5)$ $\equiv $ $Spin(5)$, $USp(2)\simeq $ $SU(2)$, $SO(6)$ $\equiv $ $%
Spin(6)\simeq $ $SU(4)$, and $SO(4)$ $\equiv $ $Spin(4)\simeq $ $SU(2)\times SU(2)$. Furthermore, local isomorphisms among non compact, real forms are
used throughout (\textit{cfr.} \textit{e.g.} \protect\cite{Helgason}).
Notice that in $BI_b$, $p+q=2n+1$, in $CII_a$, $p+q=n$, and in $DI_c$, $p+q=2n$.}
\label{tab:fourierFg2}
\end{table}

\begin{table}[tbph]
\begin{center}
\resizebox*{1\textwidth}{!}{\
\begin{tabular}{cccccccc}
\toprule$G_{nc}$ & $\mathbf{r}$ &  & $\mathbf{I}$ &  & $\dim\left(G_{nc}/mcs\left( G_{nc}\right)\right) =\mathbf{I}+\mathbf{r}$ &  & $\chi $ \\
\midrule$SL(n+1,\mathbb{R})$ & $n$ &  & $\frac{\left( n+1\right) \left(n+2\right) }{2}-n-1$ &  & $\frac{\left( n+1\right) \left( n+2\right) }{2}-1$
&  & $n$ \\[0.2cm]
$SU^{\ast }(2k)$ & $k-1$ &  & $2k\left( k-1\right) $ &  & $k\left(2k-1\right) -1$ &  & $-2k-1$ \\[0.15cm]
$SU(p,q)$ & $\min \left( p,q\right) $ &  & $2pq-\min \left( p,q\right) $ & & $2pq$ &  & $-\left( p-q\right) ^{2}+1$ \\[0.15cm]
$SU(p,p)$ & $p$ &  & $2p^{2}-p$ &  & $2p^{2}$ &  & $1$ \\[0.15cm]
$SU(1,n)$ & $1$ &  & $2n-1$ &  & $2n$ &  & $-n\left( n-2\right) $ \\[0.15cm]
$SO(n,n+1)$ & $n$ &  & $n^{2}$ &  & $n\left( n+1\right) $ &  & $n$ \\[0.15cm]
$SO(p,q)_{p+q=2n+1}$ & $\min \left( p,q\right) $ &  & $pq-\min \left( p,q\right) $ &  & $pq$ &  & $[\left( p+q\right) -\left( p-q\right) ^{2}]/2$ \\[0.15cm]
$SO(1,2n)$ & $1$ &  & $2n-1$ &  & $2n$ &  & $n\left( 3-2n\right) $ \\[0.15cm]
$Sp(2n,\mathbb{R})$ & $n$ &  & $n^{2}$ &  & $n(n+1)$ &  & $n$ \\[0.15cm]
$USp(2p,2q)$ & $\min \left( p,q\right) $ &  & $4pq-\min \left( p,q\right) $ &  & $4pq$ &  & $-[\left( 2p+2q\right) +\left( 2p-2q\right) ^{2}]/2$ \\[0.15cm]
$USp(2k,2k)$ & $k$ &  & $k\left( 4k-1\right) $ &  & $4k^{2}$ &  & $-2k$ \\[0.15cm]
$SO(n,n)$ & $n$ &  & $n(n-1)$ &  & $n^{2}$ &  & $n$ \\[0.15cm]
$SO(n-1,n+1)$ & $n-1$ &  & $n\left( n-1\right) $ &  & $n^{2}-1$ &  & $n-2$ \\[0.15cm]
$SO(p,q)_{p+q=2n}$ & $\min \left( p,q\right) $ &  & $pq-\min \left( p,q\right) $ &  & $pq$ &  & $[\left( p+q\right) -\left( p-q\right) ^{2}]/2$ \\[0.15cm]
$SO(1,2n-1)$ & $1$ &  & $2\left( n-1\right) $ &  & $2n-1$ &  & $5n-2n^{2}-2$\\[0.15cm]
$SO^{\ast }(4k+2)$ & $2k+1$ &  & $\left( 4k+2\right) \left( k-1/2\right) $ & & $2k\left( 2k+1\right) $ &  & $-2k-1$ \\[0.15cm]
$SO^{\ast }(4k)$ & $2k$ &  & $4k\left( k-1\right) $ &  & $2k\left(2k-1\right) $ &  & $-2k$ \\[0.15cm]
$G_{2(2)}$ & $2$ &  & $6$ &  & $8$ &  & $2$ \\[0.15cm]
$F_{4(4)}$ & $4$ &  & $24$ &  & $28$ &  & $4$ \\[0.15cm]
$F_{4(-20)}$ & $1$ &  & $15$ &  & $16$ &  & $-20$ \\[0.15cm]
$E_{6(6)}$ & $6$ &  & $36$ &  & $42$ &  & $6$ \\[0.15cm]
$E_{6(2)}$ & $4$ &  & $36$ &  & $40$ &  & $2$ \\[0.15cm]
$E_{6(-14)}$ & $2$ &  & $30$ &  & $32$ &  & $-14$ \\[0.15cm]
$E_{6(-26)}$ & $2$ &  & $24$ &  & $26$ &  & $-26$ \\[0.15cm]
$E_{7(7)}$ & $7$ &  & $63$ &  & $70$ &  & $7$ \\[0.15cm]
$E_{7(-5)}$ & $4$ &  & $60$ &  & $64$ &  & $-5$ \\[0.15cm]
$E_{7(-25)}$ & $3$ &  & $51$ &  & $54$ &  & $-25$ \\[0.15cm]
$E_{8(8)}$ & $8$ &  & $120$ &  & $128$ &  & $8$ \\[0.15cm]
$E_{8(-24)}$ & $4$ &  & $108$ &  & $112$ &  & $-24$ \\
\bottomrule &  &  &  &  &  &  &
\end{tabular}
}
\end{center}
\caption{In the second column the rank $\mathbf{r}$ of the symmetric coset $G_{nc}/mcs\left( G_{nc}\right) $ is indicated, in the third column the
number $\mathbf{I}$ of Iwasawa (nilpotent) generators, in the fourth column the real dimension $\dim $ of $G_{nc}/mcs\left( G_{nc}\right) $, and in the
fifth column the character $\protect\chi :=nc-c$ of $G_{nc}$, where \textquotedblleft $nc$" and \textquotedblleft $c$" denote respectively the number of non compact and of compact generators of $G_{nc}$ itself \protect\cite{Gilmore}. Note that for the maximally non compact
(\textit{split}) forms, $\mathbf{r}=\protect\chi =l$, where we recall that $l$ is the rank of $G$ itself; consequently, for the split forms $\mathbf{I}$ is equal
to the number of positive (or of negative) roots of $G$. Also, we notice the peculiarity of the minimally non compact real form of $E_{6}$, i.e. of
$E_{6(-26)}$, for which $\protect\chi =-\dim\left(G_{nc}/mcs\left( G_{nc}\right)\right)$.}\label{tab:fourierFg3}
\end{table}
%%%%%%%%%%%%%%%%%%%%%%%%%%%%%%%%%%%%%%%%%%%%%%%%%%%%%%%%%%%%%%%%%%%%%%%%%%%%%%%%%%%%%%%%%%%%%%%
%%%%%%%%%%%%%%%%%%%%%%%%%%%%%%%%%%%%%%%%%%%%%%%%%%%%%%%%%%%%%%%%%%%%%%%%%%%%%%%%%%%%%%%%%%%%%%%
%\newpage %%%%%%%%%%%%%%%%%%%%%%%%%%%%%%%%%%%%%%%%%%%%%%%%%%%%%%%%%%%%%
%\begin{landscape}
\begin{table}[tbph]
\begin{center}
\resizebox*{1\textwidth}{!}{\
\begin{tabular}{cccccc}
\toprule $G_{nc}$ & $l$ &  & $(d_1, \ldots, d_l)$ &  & $d_g$ \\\midrule $SL(n+1,\mathbb{R})$ & $n>0$ &  & $\left(\{i(n+1-i)\}_{i=1}^n\right) $ &  & $2(n-1)$ \\[0.2cm]
$SU^{\ast }(2k)$ & $2k-1,k>1$ &  & $(\{ 2(k-i)(i-1)+k-1, 2i(k-i)\}_{i=1}^{k-1}, k-1)$ &  & $2k-4$ \\[0.15cm]
$SU(p,q)_{1<p<q}$ & $p+q-1=n$ &  & $(\{i(2p+1-i)\}_{i=1}^{p-1},\{p(p+1)\}_{i=p}^q, \{(2n-i)(i+p-q+1)\}_{i=q+1}^{n})$ &  & $4p-2$ \\[0.15cm]
$SU(p,p)$ & $2p-1,p>1$ &  & $(\{ i(2p-i) \}_{i=1}^{2p-1})$ &  & $4p-4$ \\[0.15cm]
$SU(1,n)$ & $n>1$ &  & (2,\ldots,2) &  & 2 \\[0.15cm]
$SO(n,n+1)$ & $n>1$ &  & $(\{i(2n+1-i)\}_{i=1}^{n-1}, n(n+1)/2)$ &  & $4n-4$ \\[0.15cm]
$SO(p,q)_{p+q=2n+1,1<p<q}$ & $n=(p+q-1)/2$ &  & $(\{ i(2p+1-i) \}_{i=1}^p, \{ p(p+1) \}_{i=p+1}^n)$ &  & $4p-4$ \\[0.15cm]
$SO(1,2n)$ & $n>1$ &  & $(2,\ldots,2,1)$ &  & 0 \\[0.15cm]
$Sp(2n,\mathbb{R})$ & $n>1$ &  & $(\{i(2n-i)\}_{i=1}^n)$ &  & $4n-4$ \\[0.15cm]
$USp(2p,2q)_{0<p<q}$ & $p+q=n=2k$ &  & $(\{2(2i-1)(p+1)-2i^2, 2i(2p-i+1)\}_{i=1}^p,\{2p(p+1)\}_{i=p+1}^n)$ &  & $4p-2$ \\[0.15cm]
$USp(2k,2k)$ & $2k,k>0$ &  & $(\{2(2i-1)(2k-1)-2i^2+2i-1,2i(2k-i)\}_{i=1}^{k})$ &  & $4k-4$ \\[0.15cm]
$SO(n,n)$ & $n>2$ &  & $(\{i(2n-1-i)\}_{i=1}^{n-2},n(n-1)/2,n(n-1)/2)$ &  & $4n-8$ \\[0.15cm]
$SO(n-1,n+1)$ & $n>2$ &  & $(\{i(2n-1-i)\}_{i=1}^{n-2},n(n-1)/2,n(n-1)/2)$ & & $4n-8$ \\[0.15cm]
$SO(p,q)_{p+q=2n,1<p<q}$ & $n=(p+q)/2$ &  & $(\{2pi-i^2+i\}_{i=1}^{p-1}, \{p^2+p\}_{i=p}^n)$ &  & $4p-4$ \\[0.15cm]
$SO(1,2n-1)$ & $n>1$ &  & (2,\ldots,2,1,1) &  & 0 \\[0.15cm]
$SO^{\ast }(4k+2)$ & $2k+1,k>0$ &  & $(\{2ki-2\lfloor \frac{i}{2} \rfloor^2\}_{i=1}^{2k-1}, k^2+k-1, k^2+k-1)$ &  & $4k-2$ \\[0.15cm]
$SO^{\ast }(4k)$ & $2k,k>1$ &  & $(\{2ki-2\lfloor \frac{i}{2}\rfloor^2-i\}_{i=1}^{2k-2}, k^2-1, k^2)$ &  & $4k-4$ \\[0.15cm]
$G_{2(2)}$ & $2$ &  & $(10,6)$ &  & 8 \\[0.15cm]
$F_{4(4)}$ & $4$ &  & $(22, 42, 30, 16)$ &  & 20 \\[0.15cm]
$F_{4(-20)}$ & $4$ &  & $(4,8,6,4)$ &  & 2 \\[0.15cm]
$E_{6(6)}$ & $6$ &  & $(16, 30, 42, 30, 16, 22)$ &  & 20 \\[0.15cm]
$E_{6(2)}$ & $6$ &  & $(16, 30, 42, 30, 16, 22)$ &  & 20 \\[0.15cm]
$E_{6(-14)}$ & $6$ &  & $(6, 10, 14, 10, 6, 8)$ &  & 6 \\[0.15cm]
$E_{6(-26)}$ & $6$ &  & $(4, 6, 8, 6, 4, 4)$ &  & 2 \\[0.15cm]
$E_{7(7)}$ & $7$ &  & $(34, 66, 96, 75, 52, 27, 49)$ &  & 32 \\[0.15cm]
$E_{7(-5)}$ & $7$ &  & $(22, 42, 60, 46, 32, 16, 30)$ &  & 20 \\[0.15cm]
$E_{7(-25)}$ & $7$ &  & $(10, 18, 26, 21, 16, 9, 13)$ &  & 8 \\[0.15cm]
$E_{8(8)}$ & $8$ &  & $(92, 182, 270, 220, 168, 114, 58, 136)$ &  & 56 \\[0.15cm]
$E_{8(-24)}$ & $8$ &  & $(32, 62, 92, 76, 60, 42, 22, 46)$ &  & 20 \\
\bottomrule &  &  &  &  &
\end{tabular}
}
\end{center}
\caption{In the second column the rank $l$ of the group $G$ is indicated, in the third column the vector of fundamental degrees, that are the degrees of
the polynomials in the fundamental representations, and in the last column the degree of the polynomial part in the biinvariant metric $\mathbf{g}$. Recall that the nilpotency degree is given by $d+1$.
Note that generally the smallest irrep. yields the smallest degree of the corresponding polynomials. All computations are based on the inverse Cartan
matrices and Satake vectors listed in appendices \protect\ref{app:cartan} and \protect\ref{app:satake}. Note that the string
$\left(d_{1},...d_{l}\right) $ is the same for the spaces $DI_{a}$ and $DI_{b}$, as well as for $EI$ and $EII$; as given by the treatment of \protect\cite{mac},
this is due to the fact that only white nodes in the Tits-Satake diagram do contribute to the degree $d$, and in these non compact, real forms there are
no blackened nodes at all.\newline
Remark: Notice that for $SO^*(4k)$, following the convention in~\cite{mac}, we have assumed the even roots to be the non compact ones. However, from the symmetry of the Dynkin diagram, one sees that it is possible to obtain an isomorphic coset manifold by picking the last but one root instead of the last one as the non compact root.  In this case the last two entries of the third column need to be switched accordingly.}
\label{tab:degrees}
\end{table}

\newpage

\section{\label{sec:Supergravity}Application to Supergravity}

As an application of the set of general results for the degrees $d$ and $d_{g}$ of the Iwasawa polynomials resulting in the standard
construction of the coset representative of non compact Riemannian symmetric manifolds, we consider the \textit{symmetric} scalar manifolds of
supergravity theories in various Lorentzian space-dimensions (for a comprehensive review, see \textit{e.g.} \cite{LA08}), namely $D=3,4,5,6$. We
report the corresponding results in various Tables, and we address to the corresponding captions for further comments.

Summarizing, in Table~\ref{tab:SKG-D=4} we consider the special K\"{a}hler vector multiplets' scalar manifolds of $\mathcal{N}=2$ theories in $D=4$,
whereas in Table~\ref{tab:HKG-D=3} we deal with the quaternionic K\"{a}hler scalar manifolds~\cite{FS} (obtained through the so-called $c$-map \cite{CFG}) of the
theory dimensionally reduced (along a spacelike direction) down to $D=3$ (after dualization)~\cite{dWVVP}. Moreover, the vector multiplets' real special manifolds
(obtained through the so-called $R$-map \cite{GMRZ}) of the minimal theory obtained by oxidizing (\textit{i.e.}, uplifting) to $D=5$ Lorentzian dimension are
reported in Table~\ref{tab:RSG-D=5}. The scalar manifolds of \mbox{$\mathcal{N}>2$--extended} supergravity theories in $D=4$ and $D=5$ are considered in Tables~\ref{tab:D=4} and \ref{tab:D=5}, respectively, while the symmetric manifolds of $\mathcal{N}>4$--extended supergravity theories in $D=3$ are given in Table~\ref{tab:D=3}.

Concerning $D=6$, for brevity's sake we will consider only minimal chiral theories. The scalar manifold of $(1,0)$ \textit{chiral} magic supergravity
theories (based on $J_{2}^{\mathbb{A}}\sim \Gamma _{1,q+1}$, the Clifford algebra of $O\left( 1,q+1\right) $ \cite{JWVN}) is given by
$\frac{SO(1,q+1)}{SO\left( q+1\right) }$ (with $q=1,2,4,8$ for $\mathbb{A}=\mathbb{R},\mathbb{C},\mathbb{H},\mathbb{O}$, respectively), with the relevant
$G$-irrep. being the vector: $\mathbf{R}(SO(1,q+1))=\mathbf{q+2}=(1,0,...,0)$ (for a discussion of $D=6$ minimal chiral theories, see \textit{e.g.}
\cite{AFMT-1} and \cite{Gunaydin-Samtleben-D=6}). This formally matches the scalar manifold of the so-called
\textit{non-Jordan symmetric sequence} \cite{VPdW-1} (\textit{cfr.} line 1 of Table~\ref{tab:RSG-D=5}), and thus $d=2$ and $d_{g}=0$, regardless of the
value of $q$.

The general structures of (maximal and symmetric) group embeddings read as follows:
\begin{eqnarray}
QConf\left( J_{3}^{\mathbb{A}}\right) &\supset &Conf\left( J_{3}^{\mathbb{A}}\right) \times SL(2,\mathbb{R});  \label{prim} \\
Conf\left( J_{3}^{\mathbb{A}}\right) &\supset &Str_{0}\left( J_{3}^{\mathbb{A}}\right) \times SO(1,1);  \label{sec} \\
Str_{0}\left( J_{3}^{\mathbb{A}}\right) &\supset &Str_{0}\left( J_{2}^{\mathbb{A}}\right) \times SO(1,1),  \label{ter}
\end{eqnarray}
where $QConf$, $Conf$ and $Str_{0}$ respectively denote the quasi-conformal, conformal and reduced structure symmetry groups of Jordan algebras (see
\textit{e.g.} \cite{G-Lects}). In (\ref{prim}) $SL(2,\mathbb{R})$ is the so-called Ehlers group, related to the reduction of pure
Einstein gravity to $3$ dimensions, whereas in (\ref{sec}) and (\ref{ter}) $SO(1,1)$ is the Kaluza-Klein compactification factor. Due to the properties
of the Clifford algebra in certain dimensions, $Str_{0}\left( J_{2}^{\mathbb{A}}\right) $, in some cases contains a commuting non-trivial compact factor
$\frac{Tri\left( \mathbb{A}\right) }{SO\left( \mathbb{A}\right) }$, where $Tri\left( \mathbb{A}\right) $ and $SO\left( \mathbb{A}\right) $ denote the
triality group and norm-preserving group of the normed division algebra $\mathbb{A}$ (\textit{cfr. e.g.} \cite{Baez}). Note that
\begin{equation}
\frac{Tri\left( \mathbb{A}\right) }{SO\left( \mathbb{A}\right) }=\emptyset,SO(2),SO(3),\emptyset \text{~for~}\mathbb{A}=\mathbb{R},\mathbb{C},
\mathbb{H},\mathbb{O},
\end{equation}
and
\begin{equation}
Str_{0}\left( J_{2}^{\mathbb{A}}\right) =SO\left( 1,\dim _{\mathbb{R}} \mathbb{A}+1\right) \times \frac{Tri\left( \mathbb{A}\right) }{SO\left( \mathbb{A}\right) }.
\end{equation}
However, due to its compactness, this extra factor do not affect the structure of the (tensor multiplets') real scalar manifold in $D=6$.

By denoting with $G_{D,\mathbb{A}}$ the electric-magnetic ($U$-)duality group of the $J^{\mathbb{A}}$-based magic supergravity in $D$ Lorentzian
space-time dimensions, the supergravity interpretation of such symmetry groups is the following:
\begin{eqnarray}
QConf\left( J_{3}^{\mathbb{A}}\right) &=&G_{3,\mathbb{A}}; \\
Conf\left( J_{3}^{\mathbb{A}}\right) &=&G_{4,\mathbb{A}}; \\
Str_{0}\left( J_{3}^{\mathbb{A}}\right) &=&G_{5,\mathbb{A}}; \\
Str_{0}\left( J_{2}^{\mathbb{A}}\right) &=&G_{6,\mathbb{A}}.
\end{eqnarray}

In the second row of Table~\ref{tab:RSG-D=5} an extrapolation to $n=1$, which is not reported in the Table, corresponds to the $D=5$ uplift of the
so-called $ST^{2}$ model of $\mathcal{N}=2$, $D=4$ Maxwell-Einstein
supergravity. This is (the unique model) given by the dimensional
reduction of \textquotedblleft pure\textquotedblright\ $(1,0)$ minimal
chiral supergravity in $D=6$. For this model the maximal degrees vanish: $d=0$ and $d_g=0$.

In the second row of Table~\ref{tab:D=4} an extrapolation to $n=0$, which is not reported in the Table, corresponds to \textquotedblleft pure\textquotedblright\ $\mathcal{N}=4$, $D=4$ supergravity; the $U$-duality
group is $SL(2,\mathbb{R})\times SU(4)$, and the relevant $U$-irrep. is $(%
\mathbf{2},\mathbf{6})$. Indeed, the $U(1)$ factor of the $\mathcal{N}=4$ $%
\mathcal{R}$-symmetry $U(4)$ is gauged by the complex (\textit{axio-dilatonic%
}) scalar field of the gravity multplet, whereas the $SU(4)$ factor is not
gauged, \textit{i.e.} it is global. Concerning the non compact part of the $U
$-duality, this theory is like the $\mathcal{N}=2$ axio-dilatonic model
(based on $\overline{\mathbb{CP}}^{1}$, describing the \textit{minimal
coupling} of $1$ vector multiplet to the gravity multiplet \cite{Luciani};
\textit{cfr.} $n=1$ in the first row of Table~\ref{tab:SKG-D=4}), but in the $\mathcal{N}=4$ case the compact part is given by $SO(6)$ \textit{vs.} the $U(1)$ factor in the $\mathcal{N}=2$ model. It follows that the corresponding maximal Iwasawa degrees are $d=1$ and $d_g=0$.

In the first row of Table~\ref{tab:D=5} an extrapolation to $n=1$, which is not reported in the Table, corresponds to \textquotedblleft pure\textquotedblright\ $\mathcal{N}=4$, $D=5$ supergravity; the $U$-duality group is $SO(1,1)\times SO(5)$, where $SO(5)$ is the (global) $\mathcal{N}=4$ $\mathcal{R}$-symmetry. There is only one real scalar in the gravity multiplet, and the relevant $U$-rep. is $(\mathbf{1},\mathbf{1})+(\mathbf{1}, \mathbf{5})$ (\textit{cfr. e.g.} \cite{CFMZ-2}). The maximal Iwasawa degrees are $d=0$ and $d_g=0$.

%%%%%%%%%%%%%%%%%%%%%%%%%%%%%%%%%%%%%%%%%%%%%%%%%%%%%%%%%%%%%%%%%%%%%%%%%%%%%%%%%%%%%%%%%%%%%%%%%
\subsection{\label{q-indep}$q$-Independence of $d$ and $d_{g}$ for Magic Supergravities}

From Tables~\ref{tab:SKG-D=4}, \ref{tab:HKG-D=3}, \ref{tab:RSG-D=5} and the above observation implying $d=2$ and $d_g=0$ for $D=6$, one can report the values of $d$ and $d_{g}$
for magic Maxwell-Einstein supergravities in $D=3,4,5,6$ Lorentzian space-time dimensions in Table~\ref{tab:summary-1}.

Interestingly, for the magic Maxwell-Einstein supergravities \cite{GST}, the degrees $d$ and $d_{g}$ of the Iwasawa polynomial only
depend\footnote{Note that Table~\ref{tab:summary-1} considers all possible dimensions in which theories with $8$ local supersymmetries can be defined, namely $D=3,4,5,6$.} on $D$, and not on $q$; this seems consistent with the fact that the corresponding homogeneous (symmetric) manifolds are in the same Tits-Satake universality class (see \textit{e.g.} \cite{TS} for a general treatment).

Note that the second, third and fourth rows of Table~\ref{tab:summary-1} are the same as the ones of G\"{u}naydin-Sierra-Townsend (GST) Magic Square
\cite{GST} $\mathcal{L}_{3}\left( \mathbb{A}_{s},\mathbb{B}\right) $ (for a comprehensive treatment of $4\times 4$ Magic Squares $\mathcal{L}_{3}$'s,
\textit{cfr.} \cite{MS-1}). On the other hand, the relation between the first rows of Table~\ref{tab:summary-1} and of the GST Magic Square
$\mathcal{L}_{3}\left( \mathbb{A}_{s},\mathbb{B}\right) $ can be expressed by observing that the maximal compact subgroup ($mcs$) of the symmetry groups
occurring in the first row of Table~\ref{tab:summary-1} (namely, the $D=6$ $U $-duality group $G_{6,\mathbb{A}}=Str_{0}\left( J_{2}^{\mathbb{A}}\right)$)
is a (maximal symmetric) subgroup of the $mcs$ of $Str_{0}\left( J_{3}^{\mathbb{A}}\right) $, which is the group occurring in the first row of GST
Magic Square $\mathcal{L}_{3}\left( \mathbb{A}_{s},\mathbb{B}\right) $:
\begin{equation}
mcs\left( Str_{0}\left( J_{2}^{\mathbb{A}}\right) \right) \subset _{s}^{\max }mcs\left( Str_{0}\left( J_{3}^{\mathbb{A}}\right) \right).
\end{equation}
Such a relation can be trivially expressed by stating that the stabilizer of $D=6$ (tensor multiplets') real scalar manifold is maximally (and
symmetrically) embedded into the (vector multiplets') real special scalar manifold of the corresponding theory obtained by dimensional reduction
(along a spacelike direction) down to $D=5$ (for some discussion on anomaly-freedom conditions of minimal chiral theories in $D=6$, cfr. \textit{e.g.}
\cite{AFMT-1}).

\subsection{\label{q-dep}$q$-Dependence of $d$ and $d_{g}$ for $J^{\mathbb{A}_{s}}$-based Maxwell-Einstein (Super)Gravity Theories and $U$-Representations
of $1$-Form Potentials}

For theories based on $J^{\mathbb{A}_{s}}$ (where $\mathbb{A}_{s}$ denotes the \textit{split} form of the normed algebra $\mathbb{A}$), the values of
$d$ and $d_{g}$ are not $q$-independent, but rather they follow an interesting pattern. Essentially, the Iwasawa degree $d$ of the scalar manifold
$G_{D,\mathbb{O}_{s}}/mcs(G_{D,\mathbb{O}_{s}})$ in $D$ dimensions (in the relevant $G_{D,\mathbb{O}_{s}}$-irrep.) is related to the (real) dimension
of the $G_{D+1,\mathbb{O}_{s}}$-irrep. in which the $1$-form (Abelian) potentials of maximal supergravity sit (\textit{cfr.} \textit{e.g.}
\cite{Kleinschmidt,Brane Orbits}).

While the purely $D$-dependent values of $d$ and $d_{g}$ for $J^{\mathbb{A}}$-based magic Maxwell-Einstein supergravities are consistent with the
Tits-Satake projection for homogeneous (symmetric) manifolds \cite{TS}, it is here worth observing that the non compact symmetry ($U$-duality) groups
pertaining to $J^{\mathbb{A}_{s}}$-based theories are the maximally non compact (namely, \textit{split}) forms of Lie groups. As a consequence the corresponding homogeneous spaces are not necessarily in the same Tits-Satake universality class\footnote{We are grateful to Mario Trigiante for useful discussions on this point.}.

Tables~\ref{tab:DD=3}-\ref{tab:DD=5} respectively illustrate the resulting pattern for $D=3,4,5$ in some detail.

For the maximally supersymmetric theory based on $\mathbb{O}_{s}$, the uplifts to $D>5$ Lorentzian dimensions are known and well defined.

In $D=6$, one can consider the following embedding in the maximal non-chiral $\left( 2,2\right) $ supergravity:
\begin{eqnarray}
SO(5,5) &\supset &_{s}^{\max }SL(5,\mathbb{R})\times SO(1,1);  \label{emb-1}
\\
\mathbf{16} &=&\mathbf{10}_{-1}+\mathbf{5}_{3}+\mathbf{1}_{-5},
\label{emb-2}
\end{eqnarray}
and the Iwasawa degree $d$ of the $D=6$ scalar manifold $G_{6,\mathbb{O}_{s}}/mcs\left( G_{6,\mathbb{O}_{s}}\right) =SO(5,5)/(SO(5)\times SO(5))$ in
the $\mathbf{16}$ is $5$, namely the (real) dimension of the irrep. $\mathbf{10}$ of the $D=7$ $U$-duality group $SL(5,\mathbb{R})$, occurring in
(\ref{emb-1})-(\ref{emb-2}) (in this case, $d_{g}=12$).

In $D=7$, the relevant embedding in $\mathcal{N}=4$ supergravity reads:
\begin{eqnarray}
SL(5,\mathbb{R}) &\supset &_{s}^{\max }SL(3,\mathbb{R})\times SL(2,\mathbb{R})\times SO(1,1);  \label{emb-3} \\
\mathbf{10} &=&\left( \mathbf{3,2}\right) _{1}+\left( \mathbf{3}^{\prime }\mathbf{,1}\right) _{-4}+\left( \mathbf{1,1}\right) _{6},  \label{emb-4}
\end{eqnarray}
and the Iwasawa degree $d$ of the $D=7$ scalar manifold $G_{7,\mathbb{O}_{s}}/mcs\left( G_{7,\mathbb{O}_{s}}\right) =SL(5,\mathbb{R})/SO(5)$ in the
$\mathbf{10}$ is $6$, namely the (real) dimension of the irrep. $\left(\mathbf{3,2}\right) $ of the $D=8$ $U$-duality group $SL(3,\mathbb{R})\times
SL(2,\mathbb{R})$ occurring in (\ref{emb-3})-(\ref{emb-4}) (in this case, $d_{g}=6$).

In $D=8$, the relevant embedding in $\mathcal{N}=2$ supergravity is:
\begin{eqnarray}
SL(3,\mathbb{R})\times SL(2,\mathbb{R}) &\supset &_{s}^{\max }GL(2,\mathbb{R})\times SO(1,1);  \label{emb-5} \\
\left( \mathbf{3,2}\right) &=&2\cdot \left( \mathbf{2}_{1}+\mathbf{1}_{-2}\right) ,  \label{emb-6}
\end{eqnarray}
and the Iwasawa degree $d$ of the $D=8$ scalar manifold $G_{8,\mathbb{O}_{s}}/mcs\left( G_{8,\mathbb{O}_{s}}\right) =SL(3,\mathbb{R})/SO(3)\times
SL(2,\mathbb{R})/SO(2)$ in the $\left( \mathbf{3,2}\right) $ is $2$, namely the (real) dimension of the irrep. $\mathbf{2}$ of the $D=9$ $U$-duality
group $GL(2,\mathbb{R})$ occurring in (\ref{emb-5})-(\ref{emb-6}).

Finally, in $D=9$ the relevant embedding in $\mathcal{N}=2$ supergravity is:
\begin{eqnarray}
GL(2,\mathbb{R}) &\supset &_{s}^{\max }SO(1,1)_{IIA}\times SO(1,1); \label{emb-7} \\
\mathbf{2} &=&\mathbf{1}_{\alpha }+\mathbf{1}_{-\alpha },  \label{emb-8}
\end{eqnarray}
where the $SO(1,1)$-weight $\alpha $ in (\ref{emb-8}) depends on normalization, and the subscript \textquotedblleft $IIA$" in the first
$SO(1,1)$ in the r.h.s. of (\ref{emb-7}) characterizes it as the $U$-duality group of type IIA $D=10$ supergravity, and distinguishes it from the second
(Kaluza-Klein) $SO(1,1)$. The Iwasawa degree $d$ of the $D=9$ scalar manifold $G_{9,\mathbb{O}_{s}}/mcs\left( G_{9,\mathbb{O}_{s}}\right) =GL(2,\mathbb{R})/SO(2)$
in the $\mathbf{2}$ is $1$, namely the (real) dimension of the $SO(1,1)_{IIA}$-singlet $\mathbf{1}$ occurring in (\ref{emb-7})-(\ref{emb-8}).

As mentioned above, the representation determining the Iwasawa degree $d$ of the scalar manifold $G_{D,\mathbb{O}_{s}}/mcs(G_{D,\mathbb{O}_{s}})$ in $D$
dimensions (in the relevant $G_{D,\mathbb{O}_{s}}$-irrep.) is the $G_{D+1,\mathbb{O}_{s}}$-irrep. in which the $1$-form (Abelian) potentials of
maximal supergravity sit (\textit{cfr.} \textit{e.g.} \cite{Kleinschmidt,Brane Orbits}). In this sense, one can
understand why type IIB supergravity in $D=10$ (along with its $SL(2,\mathbb{R})$ $U$-duality) has not been considered : it has no $1$-form potentials.
As discussed in \cite{4-Kleinsch, 5-Kleinsch, Kleinschmidt}, the relevant $1$-form $U$-irrep. can be predicted by exploiting the infinite-dimensional
(Kac-Moody extended) algebra $E_{11}$. It would be interesting, \textit{at least} for maximal supergravity in $D=3,...,9$, to understand the matching
between the relevant Iwasawa degree $d$ of $\left( G_{D,\mathbb{O}_{s}}/mcs(G_{D,\mathbb{O}_{s}})\right) _{\mathbf{R}_{D}}$ and the (real)
dimension of the $G_{D+1,\mathbb{O}_{s}}$-irrep. $\mathbf{R}_{D+1}$ in terms of $E_{11}$; we leave this task for future further investigations.

\subsection{\label{sec:Universal}Universal Nilpotency Degree of Axionic Iwasawa Generators}

Considering symmetric cosets $\frac{G_{4}}{H_{4}}$ (\ref{M}) which are scalar manifolds of Maxwell-Einstein theories of (super)gravity in $D=4$
(Lorentzian) dimensions, it is here worth commenting on the relation between the (\textquotedblleft standard"; \textit{cfr.} Footnote 4)
total Iwasawa construction (\ref{SIC-M}):
\begin{equation}
\mathfrak{g}_{4}=\mathfrak{h}_{4}\oplus \mathfrak{a}_{4}\oplus \mathfrak{n}_{4},  \label{total-SIC}
\end{equation}%
exploited in the present paper, and the Iwasawa construction restricted to the \textit{axionic} generators of the $D=4$ $U$-duality group $G_{4}$,
recently studied in \cite{CFGM-d}. Note that in (\ref{total-SIC}) the sums are not direct, but they do respect the Lie algebra $\mathfrak{g}_{4}$;
$\mathfrak{h}_{4}$ denotes the Lie algebra of $H_{4}=mcs\left( G_{4}\right) $, while $\mathfrak{a}_{4}$ and $\mathfrak{n}_{4}$ respectively stand for the
(maximal) Abelian (Cartan) non compact sub-algebra (whose dimension equals the rank $\mathbf{r}$ of $\frac{G_{4}}{H_{4}}$; \textit{cfr.} Table~\ref{tab:fourierFg3}) and the maximal nilpotent set of nilpotent (namely, Iwasawa) generators, whose cardinality has been denoted by $\mathbf{I}$ in
Table~\ref{tab:fourierFg3}.

As illustrative examples, we will consider two theories : the magic exceptional $\mathcal{N}=2$ theory \cite{GST}, based on $J_{3}^{\mathbb{O}}$,
and the maximal $\mathcal{N}=8$ supergravity, based on $J_{3}^{\mathbb{O}_{s}}$.

\subsubsection{\label{Magic-Exceptional}$\mathcal{N}=2$, $D=4$ Magic Exceptional ($J_{3}^{\mathbb{O}}$)}

As yielded by (\ref{total-SIC}), the relevant (maximal, symmetric) embedding pertaining to the Iwasawa decomposition of the $D=4$ $U$-duality group
$G_{4}\equiv G_{4,\mathbb{O}}=E_{7(-25)}$ reads
\begin{eqnarray}
E_{7(-25)} &\supset &E_{6(-78)}\times U(1);  \label{decomp-1} \\
\underset{\mathfrak{g}_{4}}{\underbrace{\mathbf{133}}} &=&\underset{\mathfrak{h}_{4}}{\underbrace{\mathbf{78}_{0}+\mathbf{1}_{0}}}+
\underset{\mathfrak{a}_{4}\oplus \mathfrak{n}_{4}}{\underbrace{\mathbf{27}_{-2}+\overline{\mathbf{27}}_{2}}},  \label{decomp-2}
\end{eqnarray}
where $\dim _{\mathbb{R}}\left( \mathfrak{a}_{4}\right) =$rank$\left( \frac{E_{7(-25)}}{E_{6(-78)}\times U(1)}\right) =:\mathbf{r}=3$, and
$\dim _{\mathbb{R}}\left( \mathfrak{n}_{4}\right) =:\mathbf{I}=51$. The maximal degree of the polynomial in the Iwasawa decomposition of $\frac{E_{7(-25)}}{E_{6(-78)}\times U(1)}$ in the $\mathbf{56}$ of $E_{7(-25)}$ is $d=9$.

On the other hand, the Iwasawa parametrization of the axionic generators of $E_{7(-25)}$, performed in \cite{CFGM-d}, considers the maximal triangular
subgroup of $E_{7(-25)}$, containing the $D=5$ $U$-duality group $G_{5}\equiv G_{5,\mathbb{O}}=E_{6(-26)}$:
\begin{eqnarray}
E_{7(-25)} &\supset &E_{6(-26)}\times SO(1,1)_{KK}\times _{s}T_{27};
\label{decomp-3} \\
\underset{\mathfrak{g}_{4}}{\underbrace{\mathbf{133}}} &=&\underset{\mathfrak{g}_{5}}{\underbrace{\mathbf{78}_{0}}}
+\underset{\mathfrak{so}(1,1)_{KK}}{\underbrace{\mathbf{1}_{0}}}+\underset{T_{27}}{\underbrace{\mathbf{27}_{-2}}}+\mathbf{27}_{2}^{\prime },  \label{decomp-4}
\end{eqnarray}
where the \textquotedblleft $KK$" subscript denotes the Kaluza-Klein nature of the commuting $1$-dimensional factor, the subscript \textquotedblleft $s$"
indicates the semi-direct nature of the product, and $T_{27}\equiv \mathbb{R}^{27}$ corresponds to the $27$ axionic Iwasawa generators, in one-to-one
correspondence with the fifth component $A_{5}^{I}$ of the $27$ Abelian vector potentials of the corresponding theory in $D=5$; note that
(\ref{decomp-3}) is nothing but a different, non compact form of (\ref{decomp-1}). In this case, the maximal degree $d$ pertaining to the set of
axionic generators $T_{27}$ of the rank-$3$ special K\"{a}hler symmetric coset $\frac{E_{7(-25)}}{E_{6(-78)}\times U(1)}$ is $d_{axionic}=3$~\cite{dWVVP, CFGM-d, univ-nilp-Refs, ADFLL}. This is \textit{universal}, namely it concerns the axionic generators of the global electric-magnetic isometries
of \textit{all} $D=4$ theories which enjoy a $D=5$ uplift, even in the case of non-symmetric nor homogeneous (vector multiplets') scalar manifolds.

At the level of the coset manifolds, (\ref{decomp-3}) and (\ref{decomp-4}) can be reformulated as the isometry
\begin{eqnarray}
\frac{E_{7(-25)}}{E_{6(-78)}\times U(1)} &\sim &\frac{E_{6(-26)}}{F_{4(-52)}}\times SO(1,1)_{KK}\times _{s}T_{27};  \label{decomp-5} \\
54 &=&26+1+27,  \label{decomp-6}
\end{eqnarray}
where the numbers in the second line denote the real dimensions of the manifolds. For theories based on \textit{Freudenthal triple systems} $\mathbf{F}\left( J_{3}\right) $ on
rank-$3$ (simple or semi-simple) Jordan algebras $J_{3}$, (\ref{decomp-3}) and (\ref{decomp-4}) can be $q$-parametrized as follows:
\begin{eqnarray}
G_{\mathbf{F}\left( J_{3}\right) ,4} &\supset &G_{J_{3},5}\times
SO(1,1)_{KK}\times _{s}T_{3q+3};  \label{decomp-7} \\
\underset{\mathfrak{g}_{\mathbf{F}\left( J_{3}\right) ,4}}{\underbrace{\mathbf{Adj}\left( G_{F\left( J_{3}\right) ,4}\right) }}
&=&\underset{\mathfrak{g}_{J_{3},5}}{\underbrace{\mathbf{Adj}\left( G_{J_{3},5}\right) }}+\underset{\mathfrak{so}(1,1)_{KK}}{\underbrace{\mathbf{1}_{0}}}
+\underset{T_{3q+3}}{\underbrace{\left( \mathbf{3q+3}\right) _{-2}}}+\left( \mathbf{3q+3}\right) _{2}^{\prime },  \label{decomp-8}
\end{eqnarray}
where $G_{\mathbf{F}\left( J_{3}\right) ,4}=Aut\left( \mathbf{F}\left(J_{3}\right) \right) =Conf\left( J_{3}\right) $ and $G_{J_{3},5}=Str_{0}
\left( J_{3}\right) $ respectively denote the $D=4$ and $D=5$ $U$-duality groups; as mentioned above, for simple rank-$3$ Jordan algebras,
$q:=\dim _{\mathbb{R}}\mathbb{A}=\dim _{\mathbb{R}}\mathbb{A}_{s}$, whereas $q=\left(m+n-4\right) /3$ for the class of semi-simple Jordan algebras
$\mathbb{R\oplus }\mathbf{\Gamma }_{m-1,n-1}$, where $\mathbf{\Gamma }_{m-1,n-1}$ stands for the Clifford algebra of $O(m-1,n-1)$. It
should be remarked that, since this is a $\mathcal{N}=2$, $D=4$ theory, $G_{J_{3},5}$ is nothing but a non compact, real form of the $mcs\left(
G_{\mathbf{F}\left( J_{3}\right) ,4}\right) $.

Note that the \textquotedblleft standard" total Iwasawa construction (\ref{SIC-M}) (or, equivalently, (\ref{total-SIC})) takes into account of the
maximal number $6q+3$ of Iwasawa (nilpotent) generators, but with high maximal degree ($d=9$ for all $J_{3}^{\mathbb{A}}$-based magic
theories in $D=10$; \textit{cfr.} Sec. \ref{q-indep}), while the Iwasawa construction exploited in \cite{CFGM-d} is restricted to the axionic
generators of the $D=4$ $U$-duality group $G_{4}$, but with lower (and \textit{universal}) maximal degree ($d_{axionic}=3$). \textit{At least}
in the theories related to rank-$3$ Jordan algebras, the $3q+3$ axionic generators are related to the whole set of $6q+3$ Iwasawa generators through
some Wick's rotation and linear combination: in fact, in the magic exceptional supergravity $E_{6(-78)}\times U(1)$ gets converted into
$E_{6(-26)}\times SO(1,1)$.

\subsubsection{\label{Maximal}$\mathcal{N}=8$, $D=4$ Maximal ($J_{3}^{\mathbb{O}_{s}}$)}

The difference between the two aforementioned approaches to the Iwasawa construction is more striking in the maximal supergravity theory.

As yielded by (\ref{total-SIC}), the relevant (maximal, symmetric) embedding pertaining to the Iwasawa decomposition of the $D=4$ $U$-duality group
$G_{4}\equiv G_{4,\mathbb{O}_{s}}=E_{7(7)}$ reads
\begin{eqnarray}
E_{7(7)} &\supset &SU(8);  \label{decomp-1-max} \\
\underset{\mathfrak{g}_{4}}{\underbrace{\mathbf{133}}} &=&\underset{\mathfrak{h}_{4}}{\underbrace{\mathbf{63}_{0}}}+
\underset{\mathfrak{a}_{4}\oplus \mathfrak{n}_{4}}{\underbrace{\mathbf{70}}},  \label{decomp-2-max}
\end{eqnarray}
where $\dim _{\mathbb{R}}\left( \mathfrak{a}_{4}\right) =$rank$\left( \frac{E_{7(7)}}{SU(8)}\right) =:\mathbf{r}=7$, and $\dim _{\mathbb{R}}\left(
\mathfrak{n}_{4}\right) =:\mathbf{I}=63$. Note that in this case the coset rank $\mathbf{r}$ matches the group rank $l$ as well as the coset character $\chi $;
this is a common feature of all theories based on \textit{split} algebras $\mathbb{A}_{s}$ (resulting in \textit{split}, \textit{i.e.}
maximally non compact, forms of the corresponding $U$-duality groups, as it is the case for $E_{7(7)}$). The maximal degree of the polynomial in the Iwasawa decomposition of $\frac{E_{7(7)}}{SU(8)}$ in the $\mathbf{56}$ of $E_{7(7)}$ is $d=27$.

On the other hand, the Iwasawa parametrization of axionic generators of $E_{7(7)}$, performed in \cite{CFGM-d}, considers the maximal triangular
subgroup of $E_{7(7)}$, containing the $D=5$ $U$-duality group $G_{5}\equiv G_{5,\mathbb{O}_{s}}=E_{6(6)}$ :
\begin{eqnarray}
E_{7(7)} &\supset &E_{6(6)}\times SO(1,1)_{KK}\times _{s}T_{27};\label{decomp-3-max} \\
\underset{\mathfrak{g}_{4}}{\underbrace{\mathbf{133}}} &=&\underset{\mathfrak{g}_{5}}{\underbrace{\mathbf{78}_{0}}}+
\underset{\mathfrak{so}(1,1)_{KK}}{\underbrace{\mathbf{1}_{0}}}+\underset{T_{27}}{\underbrace{\mathbf{27}_{-2}}}+\mathbf{27}_{2}^{\prime }.  \label{decomp-4-max}
\end{eqnarray}
Once again, note that (\ref{decomp-3-max}) is nothing but a non compact form of the maximal and symmetric embedding (\ref{decomp-1}) (different from the
non compact form (\ref{decomp-3})). In this case, the maximal degree pertaining to the set of axionic generators $T_{27}$ of
$\frac{E_{7(7)}}{SU(8)}$ is $d_{axionic}=3$; it is the same as the one of $\frac{E_{7(-25)}}{E_{6(-78)}\times U(1)}$, because, as mentioned above, it is
\textit{universal}. In particular, at the level of cosets, (\ref{decomp-3-max}) and (\ref{decomp-4-max}) specify as
\begin{eqnarray}
\frac{E_{7(7)}}{SU(8)} &\sim &\frac{E_{6(6)}}{USp(8)}\times SO(1,1)_{KK}\times _{s}T_{27};  \label{decomp-5-max} \\
70 &=&42+1+27.  \label{decomp-6-max}
\end{eqnarray}
It should be remarked that, since this is \textit{not} an $\mathcal{N}=2$, $D=4$ theory, $G_{J_{3},5}$ is \textit{not} a non compact, real form of the
$mcs\left( G_{\mathbf{F}\left( J_{3}\right) ,4}\right) $.

In this case, the \textquotedblleft standard" total Iwasawa construction (\ref{SIC-M}) (or, equivalently, (\ref{total-SIC})) takes into account of the
whole set of $63$ Iwasawa (nilpotent) generators, but with high maximal degree $d=27$ (for the other $\mathbb{H}_{s}$- and
$\mathbb{C}_{s}$-based theories, see Sec. \ref{q-dep}), while the Iwasawa construction exploited in \cite{CFGM-d} is restricted to the $27$ axionic generators of
the $D=4$ $U$-duality group $E_{7(7)}$, but with lower (and \textit{universal}) maximal degree $d_{axionic}=3$. As observed above, the $27$ axionic
generators are related to the whole set of $63$ Iwasawa generators through some Wick's rotation and linear combination.

\subsubsection{\label{T^3-Model}The $\mathcal{N}=2$, $D=4$ $T^{3}$ Model}

As already mentioned, among the theories with symmetric (vector multiplets') scalar manifolds, there is a unique case in which the two treatments under
consideration coincide : it is the so-called $T^{3}$ model of $\mathcal{N}=2$, $D=4$ Maxwell-Einstein supergravity, in which the unique Abelian vector
multiplet is coupled \textit{non-minimally}, but rather through a cubic holomorphic prepotential $\mathcal{F}=T^{3}$, to the gravity multiplet.

In this case, (\ref{total-SIC}) trivially yields the (maximal, symmetric) embedding pertaining to the Iwasawa decomposition of the $D=4$ $U$-duality
group $G_{4}=SL(2,\mathbb{R})$:
\begin{eqnarray}
SL(2,\mathbb{R}) &\supset &U(1);  \label{decomp-1-T^3} \\
\underset{\mathfrak{g}_{4}}{\underbrace{\mathbf{3}}} &=&\underset{\mathfrak{h}_{4}}{\underbrace{\mathbf{1}_{0}}}+\underset{\mathfrak{a}_{4}\oplus
\mathfrak{n}_{4}}{\underbrace{\mathbf{1}_{2}+\mathbf{1}_{-2}}},
\label{decomp-2-T^3}
\end{eqnarray}
where $\dim _{\mathbb{R}}\left( \mathfrak{a}_{4}\right) =$rank$\left( \frac{SL(2,\mathbb{R})}{U(1)}\right) =:\mathbf{r}=1$, and
$\dim _{\mathbb{R}}\left( \mathfrak{n}_{4}\right) =:\mathbf{I}=1$. Note that also in this case the coset rank $\mathbf{r}$ matches the group rank $l$
as well as the coset character $\chi $, because $SL(2,\mathbb{R})$ trivially is the split form of $SU(2)$.

The maximal degree of the polynomial pertaining to the Iwasawa decomposition of $\frac{SL(2,\mathbb{R})}{U(1)}$ in the the relevant irrep. $\mathbf{4}$ (spin $s=3/2$) of $SL(2,\mathbb{R})$ is
$d=3 $. Thus, it matches the universal result $d_{axionic}=3$, which does pertain to the Iwasawa parametrization of the unique axionic generator of
$SL(2,\mathbb{R})$ \cite{CFGM-d}, considering the maximal triangular subgroup of $SL(2,\mathbb{R})$ itself:
\begin{eqnarray}
SL(2,\mathbb{R}) &\supset &SO(1,1)_{KK}\times _{s}T_{1};\label{decomp-3-T^3} \\
\underset{\mathfrak{g}_{4}}{\underbrace{\mathbf{3}}} &=&\underset{\mathfrak{so}(1,1)_{KK}}{\underbrace{\mathbf{1}_{0}}}+
\underset{T_{1}}{\underbrace{\mathbf{1}_{-2}}}+\mathbf{1}_{2}.  \label{decomp-4-T^3}
\end{eqnarray}
Once again, note that (\ref{decomp-3-T^3}) is nothing but a non compact form of the maximal and symmetric embedding (\ref{decomp-1}). In this case the
$D=5$ $U$-duality group is empty ($G_{5}=\varnothing $), because the $T^{3}$-model is the unique model which uplifts to \textit{\textquotedblleft pure"}
minimal ($\mathcal{N}=2$) $D=5$ theory, in which only the gravity multiplet is present.

In particular, at the level of cosets, (\ref{decomp-3-T^3}) and (\ref{decomp-4-T^3}) respectively specify as
\begin{eqnarray}
\frac{SL(2,\mathbb{R})}{U(1)} &\sim &SO(1,1)_{KK}\times _{s}T_{1};\label{decomp-5-T^3} \\
2 &=&1+1.  \label{decomp-6-T^3}
\end{eqnarray}

The matching of the two approaches can be explained simply by observing that the unique generator of $\mathfrak{n}_{4}$ in (\ref{decomp-2-T^3}) is $T_{1}$
occurring in (\ref{decomp-5-T^3}).

\enlargethispage{5\baselineskip}

\begin{table}[H]
\begin{center}
\begin{tabular}{|c|ccc|ccc|ccc|}
\hline
$%
\begin{array}{c}
\\
Special~Kaehler \\
~Symmetric~Space \\
~
\end{array}
$ &  & $\mathbf{R}$ &  &  & $d$ &  &  & $d_{g}$ &  \\ \hline\hline $
\begin{array}{c}
\\
\overline{\mathbb{CP}}^{n}:\frac{SU(1,n)}{SU(n)\times U\left( 1\right) },~~n\in \mathbb{N} \\
~
\end{array}
$ &  & $\mathbf{1+n}=(1,0,...,0)$ &  &  & $\begin{array}{ll} 2&  (n>1) \\ 1& (n=1) \end{array}$ &  &  & $\begin{array}{ll} 2& (n>1) \\ 0& (n=1) \end{array}$ &  \\ \hline $
\begin{array}{c}
\\
\frac{SU(1,1)}{U\left( 1\right) }\times \frac{SO(2,n)}{SO(n)\times U\left( 1\right) }, \\
~ \\
n\in \mathbb{N}~~\left( \mathbb{R}\oplus \mathbf{\Gamma }_{n-1,1}\right) \\
~
\end{array}
$ &  & $\begin{array}{ll} \mathbf{(2,2+n)}=\\(1)(1,0,...,0) &(n>1)\\
(1)(2) &(n=1)\end{array}$ &  &  & $\begin{array}{c} 5\  (n>2) \\ 3\ (n=1,2) \end{array}$ &  &  &
$\begin{array}{ll} 4& (n>2) \\ 0 & (n=1,2)  \end{array}$ &  \\
\hline
$
\begin{array}{c}
\\
\frac{SU(1,1)}{U\left( 1\right) }~~\left( \mathbb{R}\right) \\
~
\end{array}
$ &  & $\mathbf{4}=(3)$ &  &  & $3$ &  &  & $0$ &  \\ \hline
$
\begin{array}{c}
\\
\frac{Sp(6,\mathbb{R})}{SU\left( 3\right) \times U\left( 1\right) }~~\left( J_{3}^{\mathbb{R}}\right) \\
~
\end{array}
$ &  & $\mathbf{14^{\prime }}=(001)$ &  &  & $9$ &  &  & $8$ &  \\ \hline
$
\begin{array}{c}
\\
\frac{SU(3,3)}{SU\left( 3\right) \times SU\left( 3\right) \times U\left(1\right) }~~\left( J_{3}^{\mathbb{C}}\right) \\
~
\end{array}
$ &  & $\mathbf{20}=(00100)$ &  &  & $9$ &  &  & $8$ &  \\ \hline
$
\begin{array}{c}
\\
\frac{SO^{\ast }(12)}{SU\left( 6\right) \times U\left( 1\right) }~~\left(J_{3}^{\mathbb{H}},\mathcal{N}=2\Leftrightarrow \mathcal{N}=6\right) \\
~
\end{array}
$ &  & $
\begin{array}{c}
\\
\mathbf{32}=(000010) \\
~or~\mathbf{32^{\prime }}=(000001)
\end{array}
$ &  &  & $9$ &  &  & $8$ &  \\ \hline
$
\begin{array}{c}
\\
\frac{E_{7\left( -25\right) }}{E_{6}\times SO\left( 2\right) }~~\left(J_{3}^{\mathbb{O}}\right) \\
~
\end{array}
$ &  & $\mathbf{56}=(0000010)$ &  &  & $9$ &  &  & $8$ &  \\ \hline
\end{tabular}
\end{center}
\caption{$\mathcal{N}=2$, $D=4$ symmetric special K\"{a}hler vector multiplets' scalar manifolds. If any, the related Euclidean rank-$3$ Jordan
algebras are reported in brackets throughout (the notation of \protect\cite{LA08} is used). By defining $q:=dim_{\mathbb{R}}\mathbb{A}$ ($=1,2,4,8$ for
$\mathbb{A}=\mathbb{R},\mathbb{C},\mathbb{H},\mathbb{O}$ respectively), the \textit{complex} dimension of the $\mathcal{N}=2,d=4$ symmetric special
K\"{a}hler manifolds based on $J_{3}^{\mathbb{A}}$ is $3q+3$ \protect\cite{Ferrara-Marrani-2}. The irrep. $\mathbf{R}$ of $G_{nc}$ relevant for
supergravity is reported, along with the corresponding degrees $d$ and $d_{g} $. Note that $d=9$ and $d_{g}=8$ for every $q=1,2,4,8$. For the Dynkin notation of
irreps., we adopt the conventions of \protect\cite{Slansky} throughout.
Notice that in the sixth row for both the representations $\mathbf{32}$ and $\mathbf{32'}$ of $SO^*(12)$ the degree is 9. This is due to the fact that the highest weight corresponds to the non compact root (see Remark after Table~\ref{tab:degrees}).}\label{tab:SKG-D=4}
\end{table}

%\newpage

\begin{table}[H]
\begin{center}
\begin{tabular}{|c|ccc|ccc|ccc|}
\hline
$
\begin{array}{c}
\\
Quaternionic \\
Symmetric~Space \\
~
\end{array}
$ &  & $\mathbf{R}=\mathbf{Adj}(G)$ &  &  & $d$ &  &  & $d_{g}$ &  \\
\hline\hline
$
\begin{array}{c}
\\
\frac{SU(2,n+1)}{SU(n+1)\times SU\left( 2\right) \times U\left( 1\right) },~~n\in \mathbb{N}~\cup \left\{ 0\right\} \\
~
\end{array}
$ &  & $\begin{array}{c} \left( \mathbf{n+3}\right) ^{2}\mathbf{-1} \\=\left( 1,0,....0,1\right) \end{array}
$ &  &  & $\begin{array}{c} 8\ (n>1) \\ 6\ (n=1) \\ 4\ (n=0) \end{array}$ &  &  &
$\begin{array}{c} 6\ (n>1) \\ 4\ (n=1) \\ 2\ (n=0) \end{array}$ &  \\ \hline
$
\begin{array}{c}
\\
\frac{SO(4,n+2)}{SO(n+2)\times SO\left( 4\right) }, \\ ~ \\
~n\in \mathbb{N}\cup \left\{ 0,-1\right\} ~~\left( \mathbb{R}\oplus \mathbf{\Gamma }_{n-1,1}\right) \\
~
\end{array}
$ &  & $
\begin{array}{l}
\frac{\left( \mathbf{n+6}\right) \left( \mathbf{n+5}\right) }{\mathbf{2}} \\ =(0,1,0,...0)
\end{array}
$ &  &  & $\begin{array}{l} 14\ (n>2) \\ 10\ (n=2,1) \\ 6\ (n=0) \\ 2\ (n=-1) \end{array}$ &  &  &
$\begin{array}{l} 12\ (n>2) \\ 8\ (n=2,1) \\ 4\ (n=0) \\ 0\ (n=-1) \end{array}$ &  \\ \hline
$
\begin{array}{c}
\\
\frac{G_{2\left( 2\right) }}{SO\left( 4\right) }~~\left( \mathbb{R}\right)
\\
~
\end{array}
$ &  & $\mathbf{14}=(10)$ &  &  & $10$ &  &  & $8$ &  \\ \hline $
\begin{array}{c}
\\
\frac{F_{4\left( 4\right) }}{USp\left( 6\right) \times SU\left( 2\right) } ~~\left( J_{3}^{\mathbb{R}}\right) \\
~
\end{array}
$ &  & $\mathbf{52}=\left( 1000\right) $ &  &  & $22$ &  &  & $20$ &  \\
\hline
$
\begin{array}{c}
\\
\frac{E_{6\left( 2\right) }}{SU(6)\times SU\left( 2\right) }~~\left( J_{3}^{\mathbb{C}}\right) \\
~
\end{array}
$ &  & $\mathbf{78}=\left( 000001\right) $ &  &  & $22$ &  &  & $20$ &  \\
\hline
$
\begin{array}{c}
\\
\frac{E_{7\left( -5\right) }}{SO(12)\times SU\left( 2\right) }~~\left(J_{3}^{\mathbb{H}},\mathcal{N}=4\Leftrightarrow \mathcal{N}=12\right) \\
~
\end{array}
$ &  & $\mathbf{133}=\left( 1000000\right) $ &  &  & $22$ &  &  & $20$ &  \\
\hline
$
\begin{array}{c}
\\
\frac{E_{8\left( -24\right) }}{E_{7}\times SU\left( 2\right) }~~\left(J_{3}^{\mathbb{O}}\right) \\
~
\end{array}
$ &  & $\mathbf{248}=\left( 00000010\right) $ &  &  & $22$ &  &  & $20$ & \\ \hline
\end{tabular}
\end{center}
\caption{$\mathcal{N}=4$, $D=3$ symmetric quaternionic spaces, obtained through $c$-map \protect\cite{CFG} from the special K\"{a}hler manifolds of
Table~\protect\ref{tab:SKG-D=4}. Since the hypermultiplets are insensitive to dimensional reduction, these spaces can also be regarded as the
hypermultiplets' scalar manifolds in $D=3,4,5,6$ (for a discussion of anomaly-free conditions in minimal chiral $(1,0)$ theories in $D=6$, see
\textit{e.g.} \protect\cite{AFMT-1}). The relevant $G$ -irrep. is the adjoint. In general, starting from a special K\"{a}hler
geometry with $\dim _{\mathbb{C}}=n$, the $c$-map generates a quaternionic manifold with $\dim _{\mathbb{H}}$ $=n+1$ \protect\cite{CFG}. Thus, the
quaternionic dimensions of the spaces corresponding to $q=1,2,4,8$ is $3q+4$ \protect\cite{CFG,Ferrara-Bianchi}. The irrep.
$\mathbf{R}=\mathbf{Adj}(G_{nc})$ relevant for supergravity is reported, along with the corresponding degrees $d$ and $d_{g}$. Note that $d=22$ and $d_{g}=20$ for
every $q=1,2,4,8$.
The unique other class of symmetric quaternionic spaces is given by the non compact quaternionic projective spaces
$\overline{\mathbb{HP}}^{n}=Usp\left( 2,2n)/(Usp(2)\times Usp(2n)\right) $ ($n\in \mathbb{N}$),
which is not in the $c$-map image of any (symmetric) special K\"{a}hler manifold \protect\cite{CFG} (indeed, its $\Omega $-tensor - and thus, its
corresponding quaternionic potential - identically vanishes \protect\cite{Ogievetsky-et-al}) for this class, still
$\mathbf{R}(Usp(2,2n))$=$\mathbf{Adj}=(\mathbf{2n+3})(\mathbf{n+1})$=(2,0,...,0), and $d=4$, $d_{g}=2$ if $n>1$, $d=2$, $d_{g}=0$ if $n=1$.}
\label{tab:HKG-D=3}
\end{table}

%\newpage

\begin{table}[H]
\begin{center}
\begin{tabular}{|c|ccc|ccc|ccc|}
\hline
$
\begin{array}{c}
\\
Real~Special \\
Symmetric~Space \\
~
\end{array}
$ &  & $\mathbf{R}$ &  &  & $d$ &  &  & $d_{g}$ &  \\ \hline\hline
$
\begin{array}{c}
\\
\frac{SO(1,n)}{SO(n)},~~n\geq 2 \\
~
\end{array}
$ &  & $\begin{array}{rll} 
\mathbf{1+n}&=(1,0,...,0) & (n>2) \\ \mathbf{3}&=(2) & (n=2)\end{array}$ &  &  & $2$ &  &  & $0$ &  \\ \hline
$
\begin{array}{c}
\\
SO(1,1)\otimes \frac{SO(1,n-1)}{SO(n-1) }, \\
~ \\
n\ge 2~~\left( \mathbb{R}\oplus \mathbf{\Gamma }_{n-1,1}\right) \\
~
\end{array}
$ &  & $
\begin{array}{l}
\left( \mathbf{1},\mathbf{1}\right) +\left( \mathbf{1},\mathbf{n}\right) =
\\
\left( 0\right) \left( 0\right) +\left( 0\right) \left( 1,0,...,0\right) \, (n>3)\\
\left( 0\right) \left( 0\right) +\left( 0\right) \left(2\right) \, (n=3)\\
\left( 0\right) \left( 0\right) +\left( 0\right) \left(0\right) \, (n=2)
\end{array}
$ &  &  & $\begin{array}{c} 2\ (n>2) \\ 0\ (n=2) \end{array}$  &  &  & $0$ &  \\ \hline
$
\begin{array}{c}
\\
\frac{SL(3,\mathbb{R})}{SO(3)}~~\left( J_{3}^{\mathbb{R}}\right) \\
~
\end{array}
$ &  & $\mathbf{6^{\prime }}=(02)$ &  &  & $4$ &  &  & $2$ &  \\ \hline
$
\begin{array}{c}
\\
\frac{SL(3,\mathbb{C})}{SU(3)}~~\left( J_{3}^{\mathbb{C}}\right) \\
~
\end{array}
$ &  & $\mathbf{9}=\left( \mathbf{3},\overline{\mathbf{3}}\right) =(10)(01)$ &  &  & $4$ &  &  & $2$ &  \\ \hline
$
\begin{array}{c}
\\
\frac{SU^{\ast }(6)}{USp(6)}~~\left( J_{3}^{\mathbb{H}},\mathcal{N}=2\Leftrightarrow \mathcal{N}=6\right) \\
~
\end{array}
$ &  & $
\begin{array}{c}
\\
\mathbf{15}=(01000)
\end{array}
$ &  &  & $4$ &  &  & $2$ &  \\ \hline
$
\begin{array}{c}
\\
\frac{E_{6(-26)}}{F_{4}}~~\left( J_{3}^{\mathbb{O}}\right) \\
~
\end{array}
$ &  & $\mathbf{27}=(100000)$ &  &  & $4$ &  &  & $2$ &  \\ \hline
\end{tabular}
\end{center}
\caption{$\mathcal{N}=2$, $D=5$ symmetric real special vector multiplets' scalar manifolds; they are nothing but the \textit{moduli spaces }of non-BPS
$Z\neq 0$ extremal black hole attractors of the corresponding theory in $D=4$ \protect\cite{Ferrara-Marrani-2}. They are obtained from the corresponding symmetric special K\"{a}hler manifolds of Table~\protect\ref{tab:SKG-D=4} through the so-called $R$-map, with the notable exception of the case $\frac{SO(1,n)}{SO(n)}$. From \cite{dWvP} the latter does not correspond to the $R$-map image of any symmetric space, as the \textit{minimally coupled} $D=4$ models based on $\overline{\mathbb{CP}}^{n}$ cannot be uplifted to $D=5$. If any, the related Euclidean rank-$3$ Jordan
algebras are reported in brackets throughout. The \textit{real} dimension of the manifolds based on $J_{3}^{\mathbb{A}}$ is $3q+2$
\protect\cite{Ferrara-Marrani-2}. The irrep. $\mathbf{R}$ of $G_{nc}$ relevant for supergravity is reported, along with the corresponding degrees $d$ and
$d_{g} $. Note that $d=4$ and $d_{g}=2$ for every $q=1,2,4,8$. The coset $\frac{SO(1,n)}{SO(n)}$ is not related to a rank-$3$ Jordan algebra, and the corresponding $D=4$ theory is based on an homogeneous non-symmetric manifold; for further details, see \textit{e.g.} \protect\cite{VPdW-1} (see also \protect\cite{Small-Orbits}). $\frac{SL(3,\mathbb{C})}{SU(3)}$ is the unique coset (\textit{at least} among symmetric scalar manifolds in supergravity) belonging to the class of symmetric spaces $\frac{G_{\mathbb{C}}}{G_{\mathbb{R}}}$
\protect\cite{Helgason}, where\ $G_{\mathbb{C}}$ is a complex (non compact ) (semi-)simple Lie group regarded as a real group, and $G_{\mathbb{R}}$ is
its compact, real form ($mcs\left( G_{\mathbb{C}}\right) =G_{\mathbb{R}}$); in general, $\frac{G_{\mathbb{C}}}{G_{\mathbb{R}}}$ is a Riemann symmetric
space with dim$_{\mathbb{R}}\left(\frac{G_{\mathbb{C}}}{G_{\mathbb{R}}}\right)$ $=$ dim$_{\mathbb{R}}\left( G_{\mathbb{R}}\right) $, and rank$\left(\frac{G_{\mathbb{C}}}{G_{\mathbb{R}}}\right)=$rank$(G_{\mathbb{R}})$; this case deserves a separate analysis, which will be given elsewhere (we here anticipate the result relevant for the present treatment). Note that the $\mathbf{6}^{\prime }$
of $SL(3,\mathbb{R})$ is not fundamental; together with the spin $s=3/2$ irrep. $\mathbf{4}$ of $SL(2,\mathbb{R})$ in the so-called $D=4$
$T^{3}$ model, this is the only case in which the relevant $G_{nc}$-repr. is not fundamental.}
\label{tab:RSG-D=5}
\end{table}

\newpage
\enlargethispage{7.2\baselineskip}
\vspace*{-4.5em}

\begin{table}[H]
\begin{center}
\begin{tabular}{|c|ccc|ccc|ccc|}
\hline
$
\begin{array}{c}
\\
D=4:G_{\mathcal{N}}/H_{\mathcal{N}} \\
~
\end{array}
$ &  & $\mathbf{R}$ &  &  & $d$ &  &  & $d_{g}$ &  \\ \hline\hline
$
\begin{array}{c}
\\
\mathcal{N}=3:\frac{SU(3,n)}{SU(3)\times SU\left( n\right) \times U\left(1\right) },~~n\in \mathbb{N} \\
~
\end{array}
$ &  & $\mathbf{3+n}=(1,0,...,0)$ &  &  & $\begin{array}{c} 6\ (n>3) \\ 5\ (n=3) \\ 4\ (n=2)\\ 2\ (n=1)  \end{array}$ &  &  &
$\begin{array}{c} 10\ (n>3) \\ 8\ (n=3) \\ 4\ (n=2)\\ 2\ (n=1)  \end{array}$ &  \\ \hline
$
\begin{array}{c}
\\
\mathcal{N}=4:\frac{SU(1,1)}{U\left( 1\right) }\times \frac{SO(6,n)}{SO(6)\times SO\left( n\right) }, \\
~ \\
n\in \mathbb{N} ~~\left( \mathbb{R}\oplus \mathbf{\Gamma }_{n-1,5}\right) \\
~
\end{array}
$ &  & $\begin{array}{l} \mathbf{(2,6+n)}\\=(1)(1,0,...,0) \end{array}$ &  &  &
$\begin{array}{l} 13\ (n>6) \\ 11\ (n=6,5) \\ 9\ (n=4)\\ 7\ (n=3)\\ 5\ (n=2)\\ 3\ (n=1)\end{array}$ &  &  &
$\begin{array}{l} 20\ (n>6) \\ 16\ (n=6,5) \\ 12\ (n=4)\\ 8\ (n=3)\\ 4\ (n=2)\\ 0\ (n=1) \\\end{array}$ &  \\
\hline
$
\begin{array}{c}
\\
\mathcal{N}=5:\frac{SU\left( 1,5\right) }{SU\left( 5\right) \times U\left(1\right) }~\left( M_{1,2}\left( \mathbb{O}\right) \right) \\
~
\end{array}
$ &  & $\mathbf{20}=(00100)$ &  &  & $2$ &  &  & $2$ &  \\ \hline
$
\begin{array}{c}
\\
\mathcal{N}=2\Leftrightarrow \mathcal{N}=6:\frac{SO^{\ast }\left( 12\right)}{SU\left( 6\right) \times U\left( 1\right) }~\left( J_{3}^{\mathbb{H}}\right) \\
~
\end{array}
$ &  & $
\begin{array}{c}
\\
\mathbf{32}=(000010) \\
~or~\mathbf{32^{\prime }}=(000001)
\end{array}
$ &  &  & $9$ &  &  & $8$ &  \\ \hline
$
\begin{array}{c}
\\
\mathcal{N}=8:\frac{E_{7\left( 7\right) }}{SU\left( 8\right) }~~\left(J_{3}^{\mathbb{O}_{s}}\right) \\
~
\end{array}
$ &  & $
\begin{array}{c}
\\
\mathbf{56}=(00000010)
\end{array}
$ &  &  & $27$ &  &  & $32$ &  \\ \hline
\end{tabular}
\end{center}
\caption{Scalar manifolds of $\mathcal{N}\geq 3$, $D=4$ supergravity theories. Note that the scalar manifolds of $\mathcal{N}=6$ theory coincides
with the one of $\mathcal{N}=2$ magic supergravity based on $J_{3}^{\mathbb{H}}$; in fact, these theories share the very same bosonic sector
\protect\cite{twin-Refs}. $M_{1,2}\left( \mathbb{O}\right) $ is an exceptional Jordan triple system, generated by $2\times 1$ vectors over $\mathbb{O}$
\protect\cite{GST}. Note that, despite the different relevant irrep. $\mathbf{R}$ ($\mathbf{6}$ \textit{vs.} rank-$3$ self-dual real $\mathbf{20}$),
the degree of the Iwasawa polynomial pertaining to $\mathcal{N}=2$ supergravity minimally coupled to $5$ vector multiplets (based on
$\overline{\mathbb{CP}}^{5}$) is the same as the one of $\mathcal{N}=5$ theory, which is \textit{pure} (no matter coupling allowed); indeed, both $\mathbf{6}$
and $\mathbf{20}$ are fundamental irreps. of $SU(1,5)$, and the string $\left( d_{1},...d_{l}\right) $ for $SU(1,n)$ is constant:
$\left(d_{1},...d_{l}\right) =2\left( 1,...,1\right) $.
Notice that in the fourth row for both the representations $\mathbf{32}$ and $\mathbf{32'}$ of $SO^*(12)$ the degree is 9. This is due to the fact that the highest weight corresponds to the non compact root (see Remark after Table~\ref{tab:degrees}).
} \label{tab:D=4}
\end{table}

%\newpage

\vspace{-4em}
\begin{table}[H]
\begin{center}
\begin{tabular}{|c|ccc|ccc|ccc|}
\hline
$
\begin{array}{c}
\\
D=5:G_{\mathcal{N}}/H_{\mathcal{N}} \\
~
\end{array}
$ &  & $\mathbf{R}$ &  &  & $d$ &  &  & $d_{g}$ &  \\ \hline\hline
$
\begin{array}{c}
\\
\mathcal{N}=4:
\\
SO(1,1)\times \frac{SO(5,n-1)}{SO(n-1)\times SO(5) } , \\
~ \\
n\ge 2~~\left( \mathbb{R}\oplus \mathbf{\Gamma }_{n-1,5}\right) \\
~
\end{array}
$ &  & $
\begin{array}{l}
\left( \mathbf{1},\mathbf{1}\right) +\left( \mathbf{1},\mathbf{n+4}\right) =
\\
\left( 0\right) \left( 0\right) +\left( 0\right) \left( 1,0,...,0\right)
\end{array}
$ &  &  & $\begin{array}{l} 10\ (n>5) \\ 8\ (n=5) \\ 6\ (n=4)\\ 4\ (n=3)\\ 2\ (n=2)   \end{array}$ &  &  &
$\begin{array}{l} 16\ (n>5) \\ 12\ (n=5) \\ 8\ (n=4)\\ 4\ (n=3)\\ 0\ (n=2) 
\end{array}$ &  \\ \hline
$
\begin{array}{c}
\\
\mathcal{N}=2\Leftrightarrow \mathcal{N}=6:\frac{SU^{\ast }(6)}{USp(6)}~\left( J_{3}^{\mathbb{H}}\right) \\
~
\end{array}
$ &  & $\mathbf{15}=(01000)$ &  &  & $4$ &  &  & $2$ &  \\ \hline
$
\begin{array}{c}
\\
\mathcal{N}=8:\frac{E_{6\left( 6\right) }}{USp\left( 8\right) }~\left(J_{3}^{\mathbb{O}_{s}}\right) \\
~
\end{array}
$ &  & $\mathbf{27}=(100000)$ &  &  & $16$ &  &  & $20$ &  \\ \hline
\end{tabular}
\end{center}
\caption{Scalar manifolds of $\mathcal{N}\geq 4$, $D=5$ supergravity theories. Note that the scalar manifold of the $\mathcal{N}=6$ theory coincides
with the one of $\mathcal{N}=2$ magic supergravity based on $J_{3}^{\mathbb{H}}$; in fact, these theories share the very same bosonic sector
\protect\cite{twin-Refs}.} \label{tab:D=5}
\end{table}

%\newpage

\begin{table}[H]
\begin{center}
\begin{tabular}{|c|ccc|ccc|ccc|}
\hline
$
\begin{array}{c}
\\
D=3:G_{\mathcal{N}}/H_{\mathcal{N}} \\
~
\end{array}
$ &  & $\mathbf{R}=\mathbf{Adj}$ &  &  & $d$ &  &  & $d_{g}$ &  \\
\hline\hline
$
\begin{array}{c}
\\
\mathcal{N}=5:\frac{USp(4,2n)}{USp(4)\times USp\left( 2n\right) },~~n\in \mathbb{N} \\
~
\end{array}
$ &  & $
\begin{array}{l}
\frac{\left( \mathbf{2n+5}\right) \left( \mathbf{2n+4}\right) }{\mathbf{2}}
\\
=\left( 2,0,...,0\right)
\end{array}
$ &  &  & $\begin{array}{l} 8\ (n>2) \\ 6\ (n=2)\\ 4\ (n=1)  \end{array}$ &  &  &
$\begin{array}{l} 6\ (n>2) \\ 4\ (n=2)\\ 2\ (n=1)  \end{array}$ &  \\ \hline
$
\begin{array}{c}
\\
\mathcal{N}=6:\frac{SU(4,n)}{SU(4)\times SU\left( n\right) \times U\left(1\right) },~~n\in \mathbb{N} \\
~
\end{array}
$ &  & $\begin{array}{l} \left( \mathbf{n+4}\right) ^{2}\mathbf{-1}\\ =\left( 1,0,....0,1\right)\end{array}
$ &  &  & $\begin{array}{l} 16\ (n>4) \\ 14\ (n=4)\\ 12\ (n=3)\\ 8\ (n=2)\\ 4\ (n=1)  \end{array}$ &  &  &
$\begin{array}{l} 14\ (n>4) \\ 12\ (n=4)\\ 10\ (n=3)\\ 6\ (n=2)\\ 2\ (n=1)  \end{array}$ &  \\ \hline
$
\begin{array}{c}
\\
\mathcal{N}=8:\frac{SO(8,n+2)}{SO(8)\times SO\left( n+2\right) }, \\
~ \\
n\in \mathbb{N}\cup \left\{ 0,-1\right\} ~~\left( \mathbb{R}\oplus \mathbf{\Gamma }_{n-1,5}\right) \\
~
\end{array}
$ &  & $
\begin{array}{l}
\frac{\left( \mathbf{n+10}\right) \left( \mathbf{n+9}\right) }{\mathbf{2}}
\\
=\left( 0,1,...,0\right)
\end{array}
$ &  &  &
$\begin{array}{l} 30\ (n>6) \\ 26\ (n=6,5)\\ 22\ (n=4)\\ 18\ (n=3)\\ 14\ (n=2)\\ 10\ (n=1)\\ 6\ (n=0)\\ 2\ (n=-1)  \end{array}$ &  &  &
$\begin{array}{l} 28\ (n>6) \\ 24\ (n=6,5)\\ 20\ (n=4)\\ 16\ (n=3)\\ 12\ (n=2)\\ 8\ (n=1)\\ 4\ (n=0)\\ 0\ (n=-1)  \end{array}$ &  \\ \hline
$
\begin{array}{c}
\\
\mathcal{N}=9:\frac{F_{4\left( -20\right) }}{SO\left( 9\right) } \\
~
\end{array}
$ &  & $\mathbf{52}=\left( 1000\right) $ &  &  & $4$ &  &  & $2$ &  \\ \hline
$
\begin{array}{c}
\\
\mathcal{N}=10:\frac{E_{6\left( -14\right) }}{SO(10)\times SO\left( 2\right)}~~\left( M_{1,2}\left( \mathbb{O}\right) \right) \\
~
\end{array}
$ &  & $\mathbf{78}=\left( 000001\right) $ &  &  & $8$ &  &  & $6$ &  \\
\hline
$
\begin{array}{c}
\\
\mathcal{N}=4\Leftrightarrow \mathcal{N}=12:\frac{E_{7\left( -5\right) }}{SO(12)\times SU\left( 2\right) }~~\left( J_{3}^{\mathbb{H}}\right) \\
~
\end{array}
$ &  & $\mathbf{133}=\left( 1000000\right) $ &  &  & $22$ &  &  & $20$ &  \\
\hline
$
\begin{array}{c}
\\
\mathcal{N}=16:\frac{E_{8\left( 8\right) }}{SO\left( 16\right) }~~\left(J_{3}^{\mathbb{O}_{s}}\right) \\
~
\end{array}
$ &  & $\mathbf{248}=\left( 00000010\right) $ &  &  & $58$ &  &  & $56$ &
\\ \hline
\end{tabular}
\end{center}
\caption{Scalar manifolds of $\mathcal{N}\geq 5$, $D=3$ supergravity theories. Note that the scalar manifold of the $\mathcal{N}=12$ theory
coincides with the one of $\mathcal{N}=4$ magic supergravity based on $J_{3}^{\mathbb{H}}$; in fact, these theories share the very same bosonic
sector \protect\cite{twin-Refs}. Note that, as for the $\mathcal{N}=4$ models with symmetric scalar manifold reported in Table~\protect\ref{tab:HKG-D=3}, the relevant $G$-irrep. is the adjoint.}
\label{tab:D=3}
\end{table}

%\newpage
\enlargethispage{2.5\baselineskip}
\vspace*{-2.8em}

\begin{table}[H]
\begin{center}
\begin{tabular}{|c|c|c|c|c||c|c|c|}
\hline
&  &  &  &  &  &  &  \\[-1.1em]
& $\mathbb{R}$ & $\mathbb{C}$ & $\mathbb{H}$ & $\mathbb{O}$ & $D$ & $d$ & $ d_g$ \\ \hline
&  &  &  &  &  &  &  \\
Str$_0\left(J_2^\mathbb{A}\right)$ & $SO(1,2)$ & $SO(1,3) \times SO(2)$ & $SO(1,5) \times SO(3)$ & $SO(1,9)$ & 6 & 2 & 0 \\[1em] \hline
&  &  &  &  &  &  &  \\
Str$_0\left(J_3^\mathbb{A}\right)$ & $SL(3,\mathbb{R})$ & $SL(3,\mathbb{C})$ & $SU^*(6)$ & $E_{6(-26)}$ & 5 & 4 & 2 \\[1em] \hline
&  &  &  &  &  &  &  \\
Conf$\left(J_3^\mathbb{A}\right)$ & $Sp(6,\mathbb{R})$ & $SU(3,3)$ & $SO^*(12)$ & $E_{7(-25)}$ & 4 & 9 & 8 \\[1em] \hline
&  &  &  &  &  &  &  \\
QConf$\left(J_3^\mathbb{A}\right)$ & $F_{4(4)}$ & $E_{6(2)}$ & $E_{7(-5)}$ & $E_{8(-24)}$ & 3 & 22 & 20 \\[1em] \hline
\end{tabular}
\end{center}
\caption{Values of degrees $d$ and $d_{g}$ of Iwasawa polynomials for magic Maxwell-Einstein supergravities in $D=3,4,5,6$ Lorentzian space-time
dimensions. They only depend on $D$, and not on $q:=\dim _{\mathbb{R}} \mathbb{A}$; this seems consistent with the fact that these homogeneous (symmetric) manifolds are in the same Tits-Satake universality class (see \textit{e.g.} \protect\cite{TS}).}
\label{tab:summary-1}
\end{table}

\begin{table}[H]
\begin{center}
\begin{tabular}{|c|ccc|ccc|ccc|}
\hline
$
\begin{array}{c}
\\
D=3:G_{3,\mathbb{A}_{s}}/mcs\left( G_{3,\mathbb{A}_{s}}\right) \\
~
\end{array}
$ &  & $
\begin{array}{c}
\\
\begin{array}{l}
G_{3,\mathbb{A}_{s}}\supset _{s}^{\max }G_{4,\mathbb{A}_{s}}\times SL(2,\mathbb{R}) \\
~ \\
\mathbf{R}_{3}\equiv \mathbf{Adj}\left( G_{3,\mathbb{A}_{s}}\right) \\
=\left( \mathbf{Adj}\left( G_{4,\mathbb{A}_{s}}\right) ,\mathbf{1}\right) \\
+\left( \mathbf{1},\mathbf{3}\right) +\left( \mathbf{R}_{4},\mathbf{2}\right)
\\
\end{array}
\\
~
\end{array}
$ &  &  & $d=d_{g}+2$ &  &  & $d_{g}=\dim _{\mathbb{R}}\mathbf{R}_{4}$ &  \\
\hline\hline
$
\begin{array}{c}
~ \\
\mathcal{N}=16:\frac{E_{8(8)}}{SO(16)} \\
~ \\
(J_{3}^{\mathbb{O}_{s}},~q=8)
\end{array}
$ &  & $
\begin{array}{l}
E_{8(8)}\supset _{s}^{\max }E_{7(7)}\times SL(2,\mathbb{R}) \\
~ \\
\mathbf{248}=\left( \mathbf{133},\mathbf{1}\right) +\left( \mathbf{1},\mathbf{3}\right) +\left( \mathbf{56},\mathbf{2}\right)
\end{array}
$ &  &  & $58$ &  &  & $56$ &  \\ \hline
$
\begin{array}{c}
~ \\
\mathcal{N}=0:\frac{E_{7(7)}}{SU(8)} \\
~ \\
(J_{3}^{\mathbb{H}_{s}},~q=4)
\end{array}
$ &  & $
\begin{array}{l}
E_{7(7)}\supset SO(6,6)\times SL(2,\mathbb{R}) \\
~ \\
\mathbf{133}=\left( \mathbf{66},\mathbf{1}\right) +\left( \mathbf{1},\mathbf{3}\right) +\left( \mathbf{32},\mathbf{2}\right)
\end{array}
$ &  &  & $34$ &  &  & $32$ &  \\ \hline
$%
\begin{array}{c}
~ \\
\mathcal{N}=0:\frac{E_{6(6)}}{USp(8)} \\
~ \\
(J_{3}^{\mathbb{C}_{s}},~q=2)
\end{array}
$ &  & $
\begin{array}{l}
E_{6(6)}\supset _{s}^{\max }SL(6,\mathbb{R})\times SL(2,\mathbb{R}) \\
~ \\
\mathbf{78}=\left( \mathbf{35},\mathbf{1}\right) +\left( \mathbf{1},\mathbf{3}\right) +\left( \mathbf{20},\mathbf{2}\right)
\end{array}
$ &  &  & $22$ &  &  & $20$ &  \\ \hline
\end{tabular}
\end{center}
\caption{Scalar manifolds of $D=3$ Maxwell-Einstein gravity theories based on rank-$3$ simple Jordan algebras $J_{3}^{\mathbb{A}_{s}}$on \textit{split}
algebras $\mathbb{A}_{s}=\mathbb{O}_{s}$, $\mathbb{H}_{s}$ and $\mathbb{C}_{s}$. The $D=3$ and $D=4$ $U$-duality groups are the quasiconformal resp.
conformal groups of $J_{3}^{\mathbb{A}_{s}}$ : $G_{3,\mathbb{A}_{s}}=QConf\left( J_{3}^{\mathbb{A}_{s}}\right) $,
$G_{4,\mathbb{A}_{s}}=Conf\left( J_{3}^{\mathbb{A}_{s}}\right) $. They occur in the fourth resp. third row of the symmetric Magic Square
$\mathcal{L}_{3}\left( \mathbb{A}_{s},\mathbb{B}_{s}\right) $ \protect\cite{BS,MS-1}. In $D=3$, only the
theory based on split octonions $\mathbb{O}_{s}$ can be regarded as the scalar sector of a locally supersymmetric theory (namely with maximal
supersymmetry in $D=3$ : $\mathcal{N}=16$), while the other theories are non-supersymmetric \protect\cite{BGM}. Interestingly, the degree $d$ of the
corresponding Iwasawa polynomial and the metric degree $d_{g}$ of $G_{3,\mathbb{A}_{s}}/mcs\left( G_{3,\mathbb{A}_{s}}\right) $ are related to the
dimension of the real symplectic $G_{4,\mathbb{A}_{s}}$-irrep. $\mathbf{R}_{4}$, in which the $2$-form Abelian field strengths of the corresponding
$D=4$ theory sit.}
\label{tab:DD=3}
\end{table}

\begin{table}[H]
\begin{center}
\begin{tabular}{|c|ccc|ccc|ccc|}
\hline
$
\begin{array}{c}
\\
D=4:G_{4,\mathbb{A}_{s}}/mcs\left( G_{4,\mathbb{A}_{s}}\right) \\
~
\end{array}
$ &  & $
\begin{array}{c}
\\
\begin{array}{l}
G_{4,\mathbb{A}_{s}}\supset _{s}^{\max }G_{5,\mathbb{A}_{s}}\times SO(1,1)
\\
~ \\
\mathbf{R}_{4}=\left( \mathbf{R}_{5}\right) _{1}+\left( \mathbf{R}_{5}^{\prime }\right) _{-1} \\
+\mathbf{1}_{3}+\mathbf{1}_{-3}
\end{array}
\\
~
\end{array}
$ &  &  & $d=\dim _{\mathbb{R}}\mathbf{R}_{5}$ &  &  & $d_{g}=4q$ &  \\
\hline\hline
$
\begin{array}{c}
~ \\
\mathcal{N}=8:\frac{E_{7(7)}}{SU(8)} \\
~ \\
(J_{3}^{\mathbb{O}_{s}},~q=8)
\end{array}
$ &  & $
\begin{array}{l}
E_{7(7)}\supset _{s}^{\max }E_{6(6)}\times SO(1,1) \\
~ \\
\mathbf{56}=\mathbf{27}_{1}+\mathbf{27}_{-1}^{\prime }+\mathbf{1}_{3}+\mathbf{1}_{-3}
\end{array}
$ &  &  & $27$ &  &  & $32$ &  \\ \hline
$
\begin{array}{c}
~ \\
\mathcal{N}=0:\frac{SO(6,6)}{SO(6)\times SO(6)} \\
~ \\
(J_{3}^{\mathbb{H}_{s}},~q=4)
\end{array}
$ &  & $
\begin{array}{l}
SO(6,6)\supset SL(6,\mathbb{R})\times SO(1,1) \\
~ \\
\mathbf{32}=\mathbf{15}_{1}+\mathbf{15}_{-1}^{\prime }+\mathbf{1}_{3}+\mathbf{1}_{-3}
\end{array}
$ &  &  & $15$ &  &  & $16$ &  \\ \hline
$
\begin{array}{c}
~ \\
\mathcal{N}=0:\frac{SL(6,\mathbb{R})}{SO(6)} \\
~ \\
(J_{3}^{\mathbb{C}_{s}},~q=2)
\end{array}
$ &  & $
\begin{array}{l}
SL(6,\mathbb{R})\supset _{s}^{\max } \\
SL(3,\mathbb{R})\times SL(3,\mathbb{R})\times SO(1,1) \\
~ \\
\mathbf{20}=\left( \mathbf{3}^{\prime },\mathbf{3}\right) _{1}+\left(\mathbf{3},\mathbf{3}^{\prime }\right) _{-1} \\
+\left( \mathbf{1,1}\right) _{3}+\left( \mathbf{1,1}\right) _{-3}
\end{array}
$ &  &  & $9$ &  &  & $8$ &  \\ \hline
\end{tabular}
\end{center}
\caption{Scalar manifolds of $D=4$ Maxwell-Einstein gravity theories based on rank-$3$ simple Jordan algebras $J_{3}^{\mathbb{A}_{s}}$on \textit{split}
algebras $\mathbb{A}_{s}=\mathbb{O}_{s}$ ($q=8$), $\mathbb{H}_{s}$ ($q=4$) and $\mathbb{C}_{s}$ ($q=2$). The $D=4$ and $D=5$ $U$-duality groups are the
conformal resp. reduced structure groups of $J_{3}^{\mathbb{A}_{s}}$ : $G_{4,\mathbb{A}_{s}}=Conf\left( J_{3}^{\mathbb{A}_{s}}\right) $, $G_{5,\mathbb{A}
_{s}}=Str_{0}\left( J_{3}^{\mathbb{A}_{s}}\right) $. They occur in the third resp.second row of the symmetric Magic Square
$\mathcal{L}_{3}\left( \mathbb{A}_{s},\mathbb{B}_{s}\right) $ \protect\cite{BS,MS-1}. In $D=4$, only the
theory based on split octonions $\mathbb{O}_{s}$ can be regarded as the scalar sector of a locally supersymmetric theory (namely with maximal
supersymmetry in $D=4$ : $\mathcal{N}=8$), while the other theories are non-supersymmetric \protect\cite{BGM}. Interestingly, the degree $d$ of the
corresponding Iwasawa polynomial is related to the dimension of the real symplectic $G_{5,\mathbb{A}_{s}}$-irrep. $\mathbf{R}_{5}$, in which the $2$-form
Abelian field strengths of the corresponding $D=5$ theory sit. On the other hand, the metric degree $d_{g}$ of $G_{4,\mathbb{A}_{s}}/mcs\left(
G_{4,\mathbb{A}_{s}}\right) $ is nothing but $4q$.}
\label{tab:DD=4}
\end{table}

\begin{table}[H]
\begin{center}
\begin{tabular}{|c|ccc|ccc|ccc|}
\hline
$
\begin{array}{c}
\\
D=5:G_{5,\mathbb{A}_{s}}/mcs\left( G_{5,\mathbb{A}_{s}}\right) \\
~
\end{array}
$ &  & $
\begin{array}{c}
\\
\begin{array}{l}
G_{5,\mathbb{A}_{s}}\supset _{s}^{\max }G_{6,\mathbb{A}_{s}}\times SO(1,1)
\\
~ \\
\mathbf{R}_{5}=\left( \mathbf{R}_{6}\right) _{1}+\left( \widehat{\mathbf{R}}_{6}\right) _{-2}+\mathbf{1}_{4}
\end{array}
\\
~
\end{array}
$ &  &  & $d=\dim _{\mathbb{R}}\mathbf{R}_{6}$ &  &  & $d_{g}$ &  \\
\hline\hline
$
\begin{array}{c}
~ \\
\mathcal{N}=8:\frac{E_{6(6)}}{USp(8)} \\
~ \\
(J_{3}^{\mathbb{O}_{s}},~q=8)
\end{array}
$ &  & $
\begin{array}{l}
E_{6(6)}\supset _{s}^{\max }SO(5,5)\times SO(1,1) \\
~ \\
\mathbf{27}=\mathbf{16}_{1}+\mathbf{10}_{-2}+\mathbf{1}_{4}
\end{array}
$ &  &  & $16$ &  &  & $20$ &  \\ \hline
$
\begin{array}{c}
~ \\
\mathcal{N}=0:\frac{SL(6,\mathbb{R})}{SO(6)} \\
~ \\
(J_{3}^{\mathbb{H}_{s}},~q=4)
\end{array}
$ &  & $
\begin{array}{l}
SL(6,\mathbb{R})\supset SL(4,\mathbb{R})\times SL(2,\mathbb{R})\times SO(1,1)
\\
~ \\
\mathbf{15}=\left( \mathbf{4,2}\right) _{1}+\left( \mathbf{6,1}\right)_{-2}+\left( \mathbf{1,1}\right) _{4}
\end{array}
$ &  &  & $8$ &  &  & $8$ &  \\ \hline
$
\begin{array}{c}
~ \\
\mathcal{N}=0:\frac{SL(3,\mathbb{R})}{SO(3)}\times \frac{SL(3,\mathbb{R})}{SO(3)} \\
~ \\
(J_{3}^{\mathbb{C}_{s}},~q=2)
\end{array}
$ &  & $
\begin{array}{l}
SL(3,\mathbb{R})\times SL(3,\mathbb{R}) \\
~\supset _{s}^{\max }SL(2,\mathbb{R})\times SL(2,\mathbb{R}) \\
\times SO(1,1)\times SO(1,1) \\
\\
\left( \mathbf{3}^{\prime },\mathbf{3}\right) =\left( \mathbf{2},\mathbf{2}\right) _{-1,1}+\left( \mathbf{2},\mathbf{1}\right) _{-1,-2} \\
+\left( \mathbf{1},\mathbf{2}\right) _{2,1}+\left( \mathbf{1},\mathbf{1}
\right) _{2,-2}
\end{array}
$ &  &  & $4$ &  &  & $2$ &  \\ \hline
\end{tabular}
\end{center}
\caption{Scalar manifolds of $D=5$ Maxwell-Einstein gravity theories based on rank-$3$ simple Jordan algebras $J_{3}^{\mathbb{A}_{s}}$on \textit{split}
algebras $\mathbb{A}_{s}=\mathbb{O}_{s}$, $\mathbb{H}_{s}$ and $\mathbb{C}_{s}$. The $D=5$ and $D=6$ $U$-duality groups are the structure groups of
$J_{3}^{\mathbb{A}_{s}}$ resp. $J_{2}^{\mathbb{A}_{s}}\sim \mathbf{\Gamma }_{q/2+1,q/2+1}$: $G_{5,\mathbb{A}_{s}}=Str_{0}
\left( J_{3}^{\mathbb{A}_{s}}\right) $, and $G_{6,\mathbb{A}_{s}}=Str_{0}\left( J_{2}^{\mathbb{A}_{s}}\right) =SO\left( q/2+1,q/2+1\right) \times
\frac{Tri(\mathbb{A}_{s)}}{SO(\mathbb{A}_{s})}$. $G_{5,\mathbb{A}_{s}}$ occur in the second row of the symmetric Magic Square $\mathcal{L}_{3}\left(
\mathbb{A}_{s},\mathbb{B}_{s}\right) $ \protect\cite{BS,MS-1}, whereas $G_{6,\mathbb{A}_{s}}$ occur in the first row of Table~\protect\ref{tab:summary-1}.
$Tri(\mathbb{A}_{s})$ and $SO(\mathbb{A}_{s})$ respectively denote the triality and norm-preserving groups of $\mathbb{A}_{s}$, and the factor
$\frac{Tri(\mathbb{A}_{s})}{SO(\mathbb{A}_{s})}$ is non-trivial only for $q=4$ ($SL(2,\mathbb{R})$) and for $q=2$ ($SO(1,1)$). In $D=5$, only the theory based on
split octonions $\mathbb{O}_{s}$ can be regarded as the scalar sector of a locally supersymmetric theory (namely with maximal supersymmetry in $D=5$ :
$\mathcal{N}=8$), while the other theories are non-supersymmetric. Interestingly, the degree $d$ of the corresponding Iwasawa polynomial is
related to the dimension of the real symplectic $G_{6,\mathbb{A}_{s}}$-irrep. $\mathbf{R}_{6}$, in which the vector multiplets of the
corresponding (anomaly-free) $D=6$ theory sit.}
\label{tab:DD=5}
\end{table}

\section*{Acknowledgments}

AM and SF would like to thank Renata Kallosh for enlightening discussions. BLC and SF would like to thank Mario Trigiante for useful observations.

The work of BLC is supported in part by the European Commission under the FP7- PEOPLE-IRG-2008 Grant No. PIRG04-GA-2008-239412 ``\textit{String Theory and Noncommutative Ge\-ometry”}" (STRING). BLC would like to thank the CERN Theoretical Physics Division, the ERC Advanced Grant no. 226455 \textit{SUPERFIELDS} and the University of California Berkeley Center for Theoretical Physics, where part of this work was done, for kind hospitality and stimulating environment.

The work of SF is supported by the ERC Advanced Grant no. 226455 \textit{SUPERFIELDS}.

The work of AM is supported in part by the FWO - Vlaanderen, Project No.
G.0651.11, and in part by the Interuniversity Attraction Poles Programme initiated by the Belgian Science Policy (P7/37).

\begin{appendix}

\section{\label{App-Racah}On Racah-Casimir Polynomials in $\mathfrak{g}$}

A systematic study of primitive invariant tensors of a \textit{compact}, \textit{\ simple} Lie algebra $\mathfrak{g}$ has been the subject of a
number of monographies and papers along the years; here, we will give a concise \textit{r\'{e}sum\'{e}}, mainly based on
\cite{Azcarraga-et-al,Mountain,MacFarlane-Pfeiffer,Halperin}, to which we address the reader for further elucidation and a list of references.

The symmetric invariant tensors give rise to the so-called Racah-Casimir polynomials of $\mathfrak{g}$; on the other hand, the skewsymmetric
invariant tensors determine the non-trivial cocycles for the Lie algebra cohomology (see \textit{e.g.} \cite{CE}). By denoting the rank of
$\mathfrak{g}$ with $l$, it is well known \cite{1} that there are $l$ such invariant symmetric \textit{primitive} polynomials of order $\delta _{A}$ ($A=1,\dots
,l$), which determine $l$ independent \textit{primitive} Racah-Casimir polynomials $\left\{ \mathcal{C}_{\delta _{A}}\right\} _{A=1,...,l}$ of the
same order, as well as $l$ skewsymmetric invariant primitive tensors $\Omega^{(2\delta _{A}-1)}$ of order $2\delta _{A}-1$. The latter determine
the non-trivial cocycles for the Lie algebra cohomology, their order being related to the topological properties of the associated compact group
manifold $G$ which, from the point of view of the real homology, behaves as products of $l$ spheres $S^{(2\delta _{A}-1)}$ \cite{2}. Indeed, the
Poincar\'{e} polynomial of $G$ is of the form (see \textit{e.g.} \cite{Halperin})
\begin{eqnarray}
f_{G}\left( t\right) &=&\prod\limits_{A=1}^{l}\left(1+t^{2\delta_{A}-1}\right) ; \\
\sum_{A=1}^{l}\left( 2\delta _{A}-1\right) &=&\text{dim}\left( G\right) .
\end{eqnarray}

Remarkably, the so-called \textit{principal} $SU(2)_{P}$ \cite{Kostant-1} is (generally non-symmetrically) embedded in $G$ such that (\ref{embb-1})
holds, with $\delta _{A}=j_{A}+1$. All simple Lie groups admits an embedded \textit{principal} $SU(2)_{P}$, whose embedding is always maximal and
non-symmetric. Exceptions are provided by $SU(3)$, which embeds $SO(3)\simeq SU(2)_{P}$ symmetrically, and by $E_{6}$, which embeds $SU(2)_{P}$ only
\textit{next-to-maximally}, \textit{i.e.} through the $2$-step chain of maximal embeddings:
\begin{equation}
E_{6}\supset _{s}F_{4}\supset SU(2)_{P},
\end{equation}
where the subscript \textquotedblleft $s$" in the first embedding denotes that it is symmetric.

The $\mathcal{C}_{\delta _{A}}$'s are particular homogeneous polynomials in $U\left( \mathfrak{g}\right) $, the universal enveloping algebra of
$\mathfrak{g}$ : indeed, they generate the \textit{center} $U\left( \mathfrak{g}\right) ^{\mathfrak{g}}$ of $U\left( \mathfrak{g}\right) $ itself.
Remarkably, $\left\{ \mathcal{C}_{\delta _{A}}\right\} _{A=1,...,l}$ constitutes a complete, \textquotedblleft minimal-degree", \textit{finitely
generating} basis of the ring of invariant polynomials in $\mathfrak{g}$ (see \textit{e.g.} \cite{Kac-80}). Their degrees, \textit{i.e.} the numbers
$\delta _{A}$, are known for all simple $\mathfrak{g}$, and they are reported in Table~\ref{tab:fourierFg3}; they can be computed by diagonalizing the Coxeter element, the
product of simple Weyl reflections (see \textit{e.g.} \cite{Humphreys-Dixmier}). The relation between $\delta _{A}$'s and the Cartan
matrix of $\mathfrak{g}$ has been determined in \cite{mi-Cartan}, and recently reviewed in many cases in App. A of \cite{SW-revisited}. It should
also be recalled that a neat derivation of the Betti numbers $2\delta _{A}-1=2j_{A}+1$ of semisimple Lie groups is presented in
\cite{Freudenthal-deVries}.
\begin{table}[tbph]
\begin{equation*}
\begin{array}{l|l|l||l|l|l|}
\mathfrak{g} & \delta _{A} & C_G & \mathfrak{g} & \delta _{A} & C_G \\ \hline
A_{n-1} & 2,3,\ldots ,n & n & B_{n} & 2,4,\ldots ,2n & 2n \\
D_{n} & 2,4,\ldots ,2n-2;n & 2n-2 & C_{n} & 2,4,\ldots ,2n & 2n \\
\mathfrak{e}_{6} & 2,5,6,8,9,12 & 12 & \mathfrak{f}_{4} & 2,6,8,12 & 12 \\
\mathfrak{e}_{7} & 2,6,8,10,12,14,18 & 18 & \mathfrak{g}_{2} & 2,6 & 6 \\
\mathfrak{e}_{8} & 2,8,12,14,18,20,24,30 & 30 &  &  &
\end{array}
\end{equation*}
\caption{The orders $\protect\delta _{A}$ of the Casimir invariant polynomials $\left\{ \mathcal{C}_{\protect\delta _{A}}\right\} _{A=1,...,l}$,
and the Coxeter number $C_{G}=\max \left\{ \protect\delta _{A}\right\} $ for each simple Lie algebra $\mathfrak{g}$. Recall $A_{n-1}=\mathfrak{su}(n)$,
$B_{n}=\mathfrak{so}(2n+1)$, $C_{n}=\mathfrak{sp}(2n)$, $D_{n}=\mathfrak{so}(2n)$. Simply-laced and non-simply-laced $\mathfrak{g}$'s are respectively
listed on the left and right hand side.}
\end{table}

%%%%%%%%%%%%%%%%%%%%%%%%%%%%%%%%%%%%%%%%%%%%%%%%%%%
\begin{table}[tbph]
\begin{equation*}
\begin{array}{l|l||l|l}
\mathfrak{g} & \mathbf{Adj}\left( G\right) =\sum_{A=1}^{l}\left( \mathbf{2j}_{A}+\mathbf{1}\right) & \mathfrak{g} & \mathbf{Adj}\left( G\right)
=\sum_{A=1}^{l}\left( \mathbf{2j}_{A}+\mathbf{1}\right)  \\ \hline A_{n-1} & \mathbf{n}^{2}\mathbf{-1}=\mathbf{3}+\mathbf{5}+...(\mathbf{2n-1})
& B_{n} & \mathbf{n}(\mathbf{2n+1})=\mathbf{3}+\mathbf{7}+...(\mathbf{4n-1})
\\
D_{n} & \mathbf{n}(\mathbf{2n-1})=\mathbf{3}+\mathbf{7}+...+\left( \mathbf{4n-5}\right) \mathbf{+}\mathbf{\left( 2n-1\right) } & C_{n} & \mathbf{n}
(\mathbf{2n+1})=\mathbf{3}+\mathbf{7}+...(\mathbf{4n-1}) \\
\mathfrak{e}_{6} & \mathbf{78}=\mathbf{3}+\mathbf{9}+\mathbf{11}+\mathbf{15}+\mathbf{17}+\mathbf{23} & \mathfrak{f}_{4} & \mathbf{52}=\mathbf{3}+
\mathbf{11}+\mathbf{15}+\mathbf{23} \\
\mathfrak{e}_{7} & \mathbf{133}=\mathbf{3}+\mathbf{11}+\mathbf{15}+\mathbf{19}+\mathbf{23}+\mathbf{27}+\mathbf{35} & \mathfrak{g}_{2} & \mathbf{14}=
\mathbf{3}+\mathbf{11} \\
\mathfrak{e}_{8} & \mathbf{248}=\mathbf{3}+\mathbf{15}+\mathbf{23}+\mathbf{27}+\mathbf{35}+\mathbf{39}+\mathbf{47}+\mathbf{59} &  &
\end{array}
\end{equation*}
\caption{Decomposition of the adjoint irrep. (\protect\ref{embb-1}) under the maximal embedding (\protect\ref{pre-emb-1}) of $\mathfrak{su}(2)_{P}$
into $\mathfrak{g}$ \protect\cite{Kostant-1}.}
\end{table}

\section{\label{app:semispin}Semispin Groups}

Among the groups of type $D_{2m}$ there are four interesting compact forms which we will shortly describe here \cite{McInnes}.\newline
The first one is the spin group $Spin(4m)$ that is a double covering of $SO(4m)$. It is realized as usual as a multiplicative subgroup of the even
Clifford algebra associated to $\mathbb{R}^{4m}$ with the standard Euclidean product. If $\{ e_1, \ldots, e_{4m}\}$ is the canonical basis of
$\mathbb{R}^{4m}$ (naturally embedded into the Clifford algebra) then, setting $e:=e_1\cdot\ldots\cdot e_{2m}$, one gets that $\pm e$ generate the center of
$Spin(4m)$. Since $e^2=1$ the center is thus a group $\mathbb{Z}_2\times \mathbb{Z}_2$
\begin{eqnarray}
Z=\{1, e \} \times \{1, -e\}\equiv Z_1\times Z_2.
\end{eqnarray}
$Z$ contains a third $\mathbb{Z}_2$ subgroup, the diagonal subgroup $Z_3={1,-1}$. Thus, one can construct three quotient groups $S(4m)_i:=Spin(4m)/Z_i$.
For $m=2$ the triality provides an isomorphism among the three groups so that $S(8)_i\simeq SO(8)$. But for $m\geq3$ there is no triality and only
$S(4m)_1 \simeq S(4m)_2$ are isomorphic. Thus we have two distinct quotients
\begin{eqnarray}
SO(4m)\simeq S(4m)_3, \qquad\qquad\ S_s(4m):=S(4m)_1.
\end{eqnarray}
The last one is called the semispin group.\newline
Finally, one can consider the group $PSO(4m)=SO(4m)/{\pm I_{4m}}$, where $\pm I_{4m}$ is the image of the center of $Spin(4m)$ in the projection
$Spin(4m)\mapsto SO(4m)$.

%%%%%%%%%%%%%%%%%%%%%%%%%%%%%%%%%%%%%%%%%%%%%%%%%%%%%%%%%%%%%%%%%%%%%%%%%%%%%%%%%%%%%%%%%%%%%%%%%%%%%%%%%%%%%%%%%%%

\section{\label{app:Inverse-Cartan}Inverse Cartan Matrices}

\label{app:cartan} For sake of completeness, here we list the inverse Cartan
matrices for all simple Lie groups.
\begin{eqnarray}
\pmb C^{-1}_{A_n}=\frac 1{n+1}
\begin{pmatrix}
n & n-1 & n-2 & \ldots & 3 & 2 & 1 \\
n-1 & 2(n-1) & 2(n-2) & \ldots & 3\cdot 2 & 2\cdot 2 & 2 \\
n-2 & 2(n-2) & 3(n-2) & \ldots & 3\cdot 3 & 3\cdot 2 & 3 \\
\ldots & \ldots & \ldots & \ldots & \ldots & \ldots & \ldots \\
3 & 2\cdot 3 & 3\cdot 3 & \ldots & (n-2)3 & (n-2)2 & n-2 \\
2 & 2\cdot 2 & 3\cdot 2 & \ldots & (n-2)2 & (n-1)2 & n-1 \\
1 & 2 & 3 & \ldots & n-2 & n-1 & n
\end{pmatrix}
,
\end{eqnarray}
\begin{eqnarray}
\pmb C^{-1}_{B_n}=
\begin{pmatrix}
1 & 1 & 1 & \ldots & 1 & 1 & 1 \\
1 & 2 & 2 & \ldots & 2 & 2 & 2 \\
1 & 2 & 3 & \ldots & 3 & 3 & 3 \\
\ldots & \ldots & \ldots & \ldots & \ldots & \ldots & \ldots \\
1 & 2 & 3 & \ldots & n-2 & n-2 & n-2 \\
1 & 2 & 3 & \ldots & n-2 & n-1 & n-1 \\
\frac 12 & \frac 22 & \frac 32 & \ldots & \frac {n-2}2 & \frac {n-1}2 &
\frac n2
\end{pmatrix}
,
\end{eqnarray}
\begin{eqnarray}
\pmb C^{-1}_{C_n}=
\begin{pmatrix}
1 & 1 & 1 & \ldots & 1 & 1 & 1/2 \\
1 & 2 & 2 & \ldots & 2 & 2 & 2/2 \\
1 & 2 & 3 & \ldots & 3 & 3 & 3/2 \\
\ldots & \ldots & \ldots & \ldots & \ldots & \ldots & \ldots \\
1 & 2 & 3 & \ldots & n-2 & n-2 & (n-2)/2 \\
1 & 2 & 3 & \ldots & n-2 & n-1 & (n-1)/2 \\
1 & 2 & 3 & \ldots & n-2 & n-1 & n/2
\end{pmatrix}
,
\end{eqnarray}
\begin{eqnarray}
\pmb C^{-1}_{D_n}=
\begin{pmatrix}
1 & 1 & 1 & \ldots & 1 & 1/2 & 1/2 \\
1 & 2 & 2 & \ldots & 2 & 2/2 & 2/2 \\
1 & 2 & 3 & \ldots & 3 & 3/2 & 3/2 \\
\ldots & \ldots & \ldots & \ldots & \ldots & \ldots & \ldots \\
1 & 2 & 3 & \ldots & n-2 & (n-2)/2 & (n-2)/2 \\
1/2 & 2/2 & 3/2 & \ldots & (n-2)/2 & n/4 & (n-2)/4 \\
1/2 & 2/2 & 3/2 & \ldots & (n-2)/2 & (n-2)/4 & n/4
\end{pmatrix}
,
\end{eqnarray}
\begin{eqnarray}
\pmb C^{-1}_{G_2}=
\begin{pmatrix}
2 & 3 \\
1 & 2
\end{pmatrix}
,
\end{eqnarray}
\begin{eqnarray}
\pmb C^{-1}_{F_4}=
\begin{pmatrix}
2 & 3 & 4 & 2 \\
3 & 6 & 8 & 4 \\
2 & 4 & 6 & 3 \\
1 & 2 & 3 & 2
\end{pmatrix}
,
\end{eqnarray}
\begin{eqnarray}
\pmb C^{-1}_{E_6}=
\begin{pmatrix}
4/3 & 5/3 & 2 & 4/3 & 2/3 & 1 \\
5/3 & 10/3 & 4 & 8/3 & 4/3 & 2 \\
2 & 4 & 6 & 4 & 2 & 3 \\
4/3 & 8/3 & 4 & 10/3 & 5/3 & 2 \\
2/3 & 4/3 & 2 & 5/3 & 4/3 & 1 \\
1 & 2 & 3 & 2 & 1 & 2
\end{pmatrix}
,
\end{eqnarray}
\begin{eqnarray}
\pmb C^{-1}_{E_7}=
\begin{pmatrix}
2 & 3 & 4 & 3 & 2 & 1 & 2 \\
3 & 6 & 8 & 6 & 4 & 2 & 4 \\
4 & 8 & 12 & 9 & 6 & 3 & 6 \\
3 & 6 & 9 & 15/2 & 5 & 5/2 & 9/2 \\
2 & 4 & 6 & 5 & 4 & 2 & 3 \\
1 & 2 & 3 & 5/2 & 2 & 3/2 & 3/2 \\
2 & 4 & 6 & 9/2 & 3 & 3/2 & 7/2
\end{pmatrix}
,
\end{eqnarray}
\begin{eqnarray}
\pmb C^{-1}_{E_8}=
\begin{pmatrix}
4 & 7 & 10 & 8 & 6 & 4 & 2 & 5 \\
7 & 14 & 20 & 16 & 12 & 8 & 4 & 10 \\
10 & 20 & 30 & 24 & 18 & 12 & 6 & 15 \\
8 & 16 & 24 & 20 & 15 & 10 & 5 & 12 \\
6 & 12 & 18 & 15 & 12 & 8 & 4 & 9 \\
4 & 8 & 12 & 10 & 8 & 6 & 3 & 6 \\
2 & 4 & 6 & 5 & 4 & 3 & 2 & 3 \\
5 & 10 & 15 & 12 & 9 & 6 & 3 & 8
\end{pmatrix}
.
\end{eqnarray}
%%%%%%%%%%%%%%%%%%%%%%%%%%%%%%%%%%%%%%%%%%%%%%%%%%%%%%%%%%%%%%%%%%%%%%%%%%%%%%%%%%%%%%%%%%%%%%%%%%%%%%%%%%%%%%%%%%%
%%%%%%%%%%%%%%%%%%%%%%%%%%%%%%%%%%%%%%%%%%%%%%%%%%%%%%%%%%%%%%%%%%%%%%%%%%%%%%%%%%%%%%%%%%%%%%%%%%%%%%%%%%%%%%%%%%%
\section{\label{app:Dynkin-diagrams}Dynkin diagrams}
Here we show the Dynkin diagrams for all the simple groups, with the fundamental representations corresponding
to the simple weights associated to the simple roots.

\begin{figure}[h]
\label{figAn}\centering
\begin{picture}(200, 60)(-50,-20)
\put(-190,0){\makebox(0,0)[l]{$\pmb{A_n:}$}}
\put(-180,20){\makebox(0,0)[l]{{$\alpha_1$({\boldmath{$n+1$}})}}}
\put(-120,20){\makebox(0,0)[l]{{$\alpha_2$({\boldmath{$\begin{pmatrix} n+1 \\ 2 \end{pmatrix}$}})}}}
\put(-40,20){\makebox(0,0)[l]{{$\alpha_i$({\boldmath{$\begin{pmatrix} n+1 \\ i \end{pmatrix}$}})}}}
\put(30,20){\makebox(0,0)[l]{{$\alpha_{n-1}$({\boldmath{$\begin{pmatrix} n+1 \\ 2 \end{pmatrix}^*$}})}}}
\put(120,20){\makebox(0,0)[l]{{$\alpha_{n}$(\boldmath{$(n+1)^*$})}}}
\put(-150,0){\circle*{6}}
\put(-80,0){\circle*{6}}
\put(-10,0){\circle*{6}}
\put(60,0){\circle*{6}}
\put(130,0){\circle*{6}}
\thicklines
\dottedline{5}(-80,0)(60,0)
\drawline(-150,0)(-80,0)
\drawline(60,0)(130,0)
%%%%%%%%%%%%%%%%%%%%%%%%%%%%%%%%%%%%%%%%%%%
\end{picture}
\end{figure}
%%%%%%%%%%%%%%%%%%%%%%%%%%%%%%%%%%%%%%%%%%%%%%%%%%%%%%%%%%%%%%%%%%%%%%%%%%%%%%%%%%%%%%%%
\begin{figure}[h]
\label{figBn}\centering
\begin{picture}(200, 90)(-50,-60)
\put(-190,0){\makebox(0,0)[l]{$\pmb{B_n:}$}}
\put(-180,20){\makebox(0,0)[l]{{$\alpha_1$({\boldmath{$2n+1$}})}}}
\put(-110,20){\makebox(0,0)[l]{{$\alpha_2$({\boldmath{$\begin{pmatrix} 2n+1 \\ 2 \end{pmatrix}$}})}}}
\put(-40,-20){\makebox(0,0)[l]{{$\alpha_i$({\boldmath{$\begin{pmatrix} 2n+1 \\ i \end{pmatrix}$}})}}}
\put(10,20){\makebox(0,0)[l]{{$\alpha_{n-2}$({\boldmath{$\begin{pmatrix} 2n+1 \\ n-2 \end{pmatrix}$}})}}}
\put(90,-20){\makebox(0,0)[l]{{$\alpha_{n-1}$(\boldmath{$\begin{pmatrix} 2n+1 \\ n-1 \end{pmatrix}$})}}}
\put(190,20){\makebox(0,0)[l]{{$\alpha_{n}$(\boldmath{$2^n$})}}}
\put(-150,0){\circle*{6}}
\put(-80,0){\circle*{6}}
\put(-10,0){\circle*{6}}
\put(60,0){\circle*{6}}
\put(130,0){\circle*{6}}
\put(200,0){\circle*{6}}
\thicklines
\dottedline{5}(-80,0)(60,0)
\drawline(-150,0)(-80,0)
\drawline(60,0)(130,0)
\drawline(130,-2)(200,-2)
\drawline(130,2)(200,2)
\drawline(170,0)(160,5)
\drawline(170,0)(160,-5)
%%%%%%%%%%%%%%%%%%%%%%%%%%%%%%%%%%%%%%%%%%%
\end{picture}
\end{figure}
%%%%%%%%%%%%%%%%%%%%%%%%%%%%%%%%%%%%%%%%%%%%%%%%%%%%%%%%%%%%%%%%%%%%%%%%%%%%%%%%%%%%%%%%
\begin{figure}[h]
\label{figCn}\centering
\begin{picture}(200, 90)(-50,-80)
\put(-190,0){\makebox(0,0)[l]{$\pmb{C_n:}$}}
\put(-160,20){\makebox(0,0)[l]{{$\alpha_1$({\boldmath{$2n$}})}}}
\put(-110,20){\makebox(0,0)[l]{{$\alpha_2$({\boldmath{$\begin{pmatrix} 2n \\ 2 \end{pmatrix}-1$}})}}}
\put(-60,-20){\makebox(0,0)[l]{{$\alpha_i$({\boldmath{$\begin{pmatrix} 2n \\ i \end{pmatrix}-\begin{pmatrix} 2n \\ i-2 \end{pmatrix}$}})}}}
\put(-10,20){\makebox(0,0)[l]{{$\alpha_{n-2}$({\boldmath{$\begin{pmatrix} 2n \\ n-2 \end{pmatrix}-\begin{pmatrix} 2n \\ n-4 \end{pmatrix}$}})}}}
\put(70,-20){\makebox(0,0)[l]{{$\alpha_{n-1}$(\boldmath{$\begin{pmatrix} 2n \\ n-1 \end{pmatrix}-\begin{pmatrix} 2n \\ n-3 \end{pmatrix}$})}}}
\put(150,20){\makebox(0,0)[l]{{$\alpha_{n}$(\boldmath{$\begin{pmatrix} 2n \\ n \end{pmatrix}-\begin{pmatrix} 2n \\ n-2 \end{pmatrix}$})}}}
\put(-150,0){\circle*{6}}
\put(-80,0){\circle*{6}}
\put(-10,0){\circle*{6}}
\put(60,0){\circle*{6}}
\put(130,0){\circle*{6}}
\put(200,0){\circle*{6}}
\thicklines
\dottedline{5}(-80,0)(60,0)
\drawline(-150,0)(-80,0)
\drawline(60,0)(130,0)
\drawline(130,-2)(200,-2)
\drawline(130,2)(200,2)
\drawline(160,0)(170,5)
\drawline(160,0)(170,-5)
%%%%%%%%%%%%%%%%%%%%%%%%%%%%%%%%%%%%%%%%%%%
\end{picture}
\end{figure}
%%%%%%%%%%%%%%%%%%%%%%%%%%%%%%%%%%%%%%%%%%%%%%%%%%%%%%%%%%%%%%%%%%%%%%%%%%%%%%%%%%%%%%%%
\vspace{-4.5em}
\begin{figure}[h]
\label{figDn}\centering
\begin{picture}(200, 80)(-50,-40)
\put(-190,0){\makebox(0,0)[l]{$\pmb{D_n:}$}}
\put(-160,20){\makebox(0,0)[l]{{$\alpha_1$({\boldmath{$2n$}})}}}
\put(-110,20){\makebox(0,0)[l]{{$\alpha_2$({\boldmath{$\begin{pmatrix} 2n \\ 2 \end{pmatrix}$}})}}}
\put(-40,20){\makebox(0,0)[l]{{$\alpha_i$({\boldmath{$\begin{pmatrix} 2n \\ i \end{pmatrix}$}})}}}
\put(20,20){\makebox(0,0)[l]{{$\alpha_{n-3}$({\boldmath{$\begin{pmatrix} 2n \\ n-3 \end{pmatrix}$}})}}}
\put(100,20){\makebox(0,0)[l]{{$\alpha_{n-2}$(\boldmath{$\begin{pmatrix} 2n \\ n-2 \end{pmatrix}$})}}}
\put(190,20){\makebox(0,0)[l]{{$\alpha_{n-1}$(\boldmath{$2^{n-1}$})}}}
\put(120,-60){\makebox(0,0)[l]{{$\alpha_n$(\boldmath{$2^{n-1\prime}$})}}}
\put(-150,0){\circle*{6}}
\put(-80,0){\circle*{6}}
\put(-10,0){\circle*{6}}
\put(60,0){\circle*{6}}
\put(130,0){\circle*{6}}
\put(200,0){\circle*{6}}
\put(130,-50){\circle*{6}}
\thicklines
\dottedline{5}(-80,0)(60,0)
\drawline(-150,0)(-80,0)
\drawline(60,0)(200,0)
\drawline(130,0)(130,-50)
%%%%%%%%%%%%%%%%%%%%%%%%%%%%%%%%%%%%%%%%%%%
\end{picture}
\end{figure}
%%%%%%%%%%%%%%%%%%%%%%%%%%%%%%%%%%%%%%%%%%%%%%%%%%%%%%%%%%%%%%%%%%%%%%%%%%%%%%%%%%%%%%%%
\begin{figure}[h]
\label{figE6}\centering
\begin{picture}(200, 90)(-50,-70)
\put(-190,0){\makebox(0,0)[l]{$\pmb{E_6:}$}}
\put(-160,10){\makebox(0,0)[l]{{$\alpha_1$({\bf 27})}}}
\put(-90,10){\makebox(0,0)[l]{{$\alpha_2$({\bf 351})}}}
\put(-20,10){\makebox(0,0)[l]{{$\alpha_3$({\bf 2925})}}}
\put(50,10){\makebox(0,0)[l]{{$\alpha_4$({\boldmath{$351'$}})}}}
\put(120,10){\makebox(0,0)[l]{{$\alpha_5$({\boldmath{$27'$}})}}}
\put(-20,-60){\makebox(0,0)[l]{{$\alpha_6$({\bf 78})}}}
\put(-150,0){\circle*{6}} \put(-80,0){\circle*{6}}
\put(-10,0){\circle*{6}} \put(60,0){\circle*{6}}
\put(130,0){\circle*{6}} \put(-10,-50){\circle*{6}}
\thicklines \drawline(-150,0)(130,0) \drawline(-10,0)(-10,-50)
%%%%%%%%%%%%%%%%%%%%%%%%%%%%%%%%%%%%%%%%%%%
\end{picture}
\end{figure}
%%%%%%%%%%%%%%%%%%%%%%%%%%%%%%%%%%%%%%%%%%%%%%%%%%%%%%%%%%%%%%%%%%%%%%%%%%%%%%%
\begin{figure}[h]
\label{figE7}\centering
\begin{picture}(200, 90)(-50,-70)
\put(-190,0){\makebox(0,0)[l]{$\pmb{E_7:}$}}
\put(-160,10){\makebox(0,0)[l]{{$\alpha_1$({\bf 133})}}}
\put(-90,10){\makebox(0,0)[l]{{$\alpha_2$({\bf 8645})}}}
\put(-20,10){\makebox(0,0)[l]{{$\alpha_3$({\bf 365750})}}}
\put(50,10){\makebox(0,0)[l]{{$\alpha_4$({\bf 27664})}}}
\put(120,10){\makebox(0,0)[l]{{$\alpha_5$({\bf 1539})}}}
\put(190,10){\makebox(0,0)[l]{{$\alpha_6$({\bf 56})}}}
\put(-20,-60){\makebox(0,0)[l]{{$\alpha_7$({\bf 912})}}}
\put(-150,0){\circle*{6}}
\put(-80,0){\circle*{6}}
\put(-10,0){\circle*{6}}
\put(60,0){\circle*{6}}
\put(130,0){\circle*{6}}
\put(200,0){\circle*{6}}
\put(-10,-50){\circle*{6}}
\thicklines
\drawline(-150,0)(200,0)
\drawline(-10,0)(-10,-50)
%%%%%%%%%%%%%%%%%%%%%%%%%%%%%%%%%%%%%%%%%%%
\end{picture}
\end{figure}
%%%%%%%%%%%%%%%%%%%%%%%%%%%%%%%%%%%%%%%%%%%%%%%%%%%%%%%%%%%%%%%%%%%%%%%%%%%%%%%%%%%%%%%%
\begin{figure}[h!]
\label{figE8}\centering
\begin{picture}(200, 90)(-50,-70)
\put(-190,0){\makebox(0,0)[l]{$\pmb{E_8:}$}}
\put(-170,10){\makebox(0,0)[l]{{$\alpha_1$({\bf 3875})}}}
\put(-120,10){\makebox(0,0)[l]{{$\alpha_2$({\bf 6696000})}}}
\put(-50,10){\makebox(0,0)[l]{{$\alpha_3$({\bf 6899079264})}}}
\put(35,10){\makebox(0,0)[l]{{$\alpha_4$({\bf 146325270})}}}
\put(115,10){\makebox(0,0)[l]{{$\alpha_5$({\bf 2450240})}}}
\put(180,10){\makebox(0,0)[l]{{$\alpha_6$({\bf 30380})}}}
\put(250,10){\makebox(0,0)[l]{{$\alpha_7$({\bf 248})}}}
\put(-20,-60){\makebox(0,0)[l]{{$\alpha_8$({\bf 147250})}}}
\put(-150,0){\circle*{6}} \put(-80,0){\circle*{6}}
\put(-10,0){\circle*{6}} \put(60,0){\circle*{6}}
\put(130,0){\circle*{6}} \put(200,0){\circle*{6}}
\put(270,0){\circle*{6}} \put(-10,-50){\circle*{6}}
\thicklines
\drawline(-150,0)(270,0)
\drawline(-10,0)(-10,-50)
%%%%%%%%%%%%%%%%%%%%%%%%%%%%%%%%%%%%%%%%%%%
\end{picture}
\end{figure}

%%%%%%%%%%%%%%%%%%%%%%%%%%%%%%%%%%%%%%%%%%%%%%%%%%%%%%%%%%%%%%%%%%%%%%%%%%%%%%%%%%%%%%%%%%%%%%%%%%%%%%%%%%%%%%%%%%%
\begin{figure}[h!]
\label{figF4}\centering
\begin{picture}(200, 90)(-50,-70)
\put(-190,0){\makebox(0,0)[l]{$\pmb{F_4:}$}}
\put(-110,15){\makebox(0,0)[l]{{$\alpha_1$({\boldmath{$52$}})}}}
\put(-40,15){\makebox(0,0)[l]{{$\alpha_2$({\boldmath{$1274$}})}}}
\put(40,15){\makebox(0,0)[l]{{$\alpha_{3}$({\boldmath{$273$}})}}}
\put(120,15){\makebox(0,0)[l]{{$\alpha_{4}$(\boldmath{$26$})}}}
\put(-80,0){\circle*{6}}
\put(-10,0){\circle*{6}}
\put(60,0){\circle*{6}}
\put(130,0){\circle*{6}}
\thicklines
\drawline(-80,0)(-10,0)
\drawline(60,0)(130,0)
\drawline(-10,-2)(60,-2)
\drawline(-10,2)(60,2)
\drawline(30,0)(20,5)
\drawline(30,0)(20,-5)
%%%%%%%%%%%%%%%%%%%%%%%%%%%%%%%%%%%%%%%%%%%
\end{picture}
\end{figure}
%%%%%%%%%%%%%%%%%%%%%%%%%%%%%%%%%%%%%%%%%%%%%%%%%%%%%%%%%%%%%%%%%%%%%%%%%%%%%%%%%%%%%%%%
\begin{figure}[h!]
\label{figG2}\centering
\begin{picture}(200, 10)(-50,-30)
\put(-190,0){\makebox(0,0)[l]{$\pmb{G_2:}$}}
\put(-40,15){\makebox(0,0)[l]{{$\alpha_1$({\boldmath{$14$}})}}}
\put(40,15){\makebox(0,0)[l]{{$\alpha_{2}$({\boldmath{$7$}})}}}
\put(-10,0){\circle*{6}}
\put(60,0){\circle*{6}}
\thicklines
\drawline(-10,-2.5)(60,-2.5)
\drawline(-10,2.5)(60,2.5)
\drawline(-10,0)(60,0)
\drawline(30,0)(20,5)
\drawline(30,0)(20,-5)
%%%%%%%%%%%%%%%%%%%%%%%%%%%%%%%%%%%%%%%%%%%
\end{picture}
\end{figure}

%%%%%%%%%%%%%%%%%%%%%%%%%%%%%%%%%%%%%%%%%%%%%%%%%%%%%%%%%%%%%%%%%%%%%%%%%%%%%%%%%%%%%%%%%%%%%%%%%%%%%

\newpage

\section{\label{app:Satake-Type}The Satake Type Vectors}

\label{app:satake} Here we give a complete list of the vectors with entry $1$ if corresponding to a white dot of the Satake diagram and zero otherwise. In
the formulas $\bar e_i$, $i=1,\ldots, n$ indicates the canonical basis of $ \mathbb{R}^n$, where $n$ is the rank of the group $G_{nc}$ from which the
NISS is realized, indicated in parenthesis. For the meaning of the indices $ n $, $p$, $k$, refer to Table~\ref{tab:degrees}.
\begin{eqnarray}
&& \bar \varepsilon_{AI(n)}=\sum_{i=1}^n \bar e_i, \\
&& \bar \varepsilon_{AII(2k-1)}=\sum_{i=1}^{k-1} \bar e_{2i},\\
&& \bar \varepsilon_{AIII_a(2n-1)}=\sum_{i=1}^{2n-1} \bar e_{i},
\end{eqnarray}
\begin{eqnarray}
&& \bar \varepsilon_{AIII_b(2p-1)}=\sum_{i=1}^{p} (\bar e_{i} +\bar e_{n-i}),\\
&& \bar \varepsilon_{AIV(n)}=\bar e_{1}+\bar e_{n}, \\
&& \bar \varepsilon_{BI_a(n)}=\sum_{i=1}^n \bar e_i, \\
&& \bar \varepsilon_{BI_b(n)}=\sum_{i=1}^p \bar e_i, \\
&& \bar \varepsilon_{BII(n)}=\bar e_1, \\
&& \bar \varepsilon_{CI(n)}=\sum_{i=1}^n \bar e_i, \\
&& \bar \varepsilon_{CII_a(2k)}=\sum_{i=1}^k \bar e_{2i}, \\
&& \bar \varepsilon_{CII_b(2k)}=\sum_{i=1}^p \bar e_{2i}, \\
&& \bar \varepsilon_{DI_a(n)}=\bar \varepsilon_{DI_b(n)}=\sum_{i=1}^n \bar
e_i, \\
&& \bar \varepsilon_{DI_c(n)}=\sum_{i=1}^p \bar e_i, \\
&& \bar \varepsilon_{DII(n)}=\bar e_1, \\
&& \bar \varepsilon_{DIII_a(2k+1)}=\bar e_{2k+1}+\sum_{i=1}^k \bar e_{2i},\\
&& \bar \varepsilon_{DIII_b(2k)}=\sum_{i=1}^k \bar e_{2i}, \\
&& \bar \varepsilon_{G(2)}=\bar e_1+\bar e_2,\\
&& \bar \varepsilon_{FI(4)}=\bar e_1+\bar e_2+\bar e_3+\bar e_4, \\
&& \bar \varepsilon_{FII(4)}=\bar e_4,\\
&& \bar \varepsilon_{EI(6)}=\bar \varepsilon_{EII(6)}=\bar e_1+\bar e_2+\bar
e_3+\bar e_4+\bar e_5+\bar e_6, \\
&& \bar \varepsilon_{EIII(6)}=\bar e_1+\bar e_5+\bar e_6, \\
&& \bar \varepsilon_{EIV(6)}=\bar e_1+\bar e_5,\\
&& \bar \varepsilon_{EV(7)}=\bar e_1+\bar e_2+\bar e_3+\bar e_4+\bar
e_5+\bar e_6+\bar e_7, \\
&& \bar \varepsilon_{EVI(7)}=\bar e_1+\bar e_2+\bar e_3+\bar e_5, \\
&& \bar \varepsilon_{EVII(7)}=\bar e_1+\bar e_5+\bar e_6, \\
&& \bar \varepsilon_{EVIII(8)}=\bar e_1+\bar e_2+\bar e_3+\bar e_4+\bar
e_5+\bar e_6+\bar e_7+\bar e_8, \\
&& \bar \varepsilon_{EIX(8)}=\bar e_1+\bar e_5+\bar e_6+\bar e_7.
\end{eqnarray}

%%%%%%%%%%%%%%%%%%%%%%%%%%%%%%%%%%%%%%%%%%%%%%%%%%%%%%%%%%%%%%%%%%%%%%%%%%%%%%%%%%%%%%%%

\end{appendix}

\newpage

\end{document}